\documentclass[10pt,a4paper]{article}

\usepackage[T1]{fontenc}
\usepackage[utf8]{inputenc}
\usepackage{libertinus}
\usepackage{libertinust1math}
\usepackage[english]{babel}
\usepackage[a4paper,margin=2.15cm,headheight=14pt]{geometry}
\usepackage{microtype}
\usepackage{amsmath,amssymb,mathtools}
\usepackage{amsthm}
\usepackage{aliascnt}
\usepackage{booktabs,tabularx,array,longtable,multirow}
\usepackage{enumitem}
\usepackage{graphicx}
\usepackage{xcolor}
\usepackage[normalem]{ulem}
\usepackage{fancyhdr}
\usepackage{hyperref}
\usepackage[nameinlink,noabbrev,capitalise]{cleveref}
\usepackage{caption}
\usepackage{float}
\usepackage{url}

\hypersetup{
  colorlinks=true,
  linkcolor=black,
  citecolor=black,
  urlcolor=blue,
  pdftitle={Bellman Search in Arbitrary Finite Dimension: A Self-Similar Cell Theorem and Effective Computability of Planar Shoreline Search},
  pdfauthor={Florentin Koch},
  pdfsubject={Online geometric search, self-similar cells, and computability},
  pdfkeywords={online search, shoreline search, hyperplane search, self-similarity, support function, computability, semialgebraic geometry, Bellman dynamics}
}

\newcommand{\figureonefile}{Figure1_English_Clean.pdf}
\newtheorem{theorem}{Theorem}[section]
\newaliascnt{lemma}{theorem}
\newtheorem{lemma}[lemma]{Lemma}
\aliascntresetthe{lemma}
\newaliascnt{proposition}{theorem}
\newtheorem{proposition}[proposition]{Proposition}
\aliascntresetthe{proposition}
\newaliascnt{corollary}{theorem}
\newtheorem{corollary}[corollary]{Corollary}
\aliascntresetthe{corollary}
\theoremstyle{definition}
\newaliascnt{definition}{theorem}

\aliascntresetthe{definition}
\theoremstyle{remark}
\newaliascnt{remark}{theorem}
\newtheorem{remark}[remark]{Remark}
\aliascntresetthe{remark}

\crefname{theorem}{theorem}{theorems}
\Crefname{theorem}{Theorem}{Theorems}
\crefname{lemma}{lemma}{lemmas}
\Crefname{lemma}{Lemma}{Lemmas}
\crefname{proposition}{proposition}{propositions}
\Crefname{proposition}{Proposition}{Propositions}
\crefname{corollary}{corollary}{corollaries}
\Crefname{corollary}{Corollary}{Corollaries}
\crefname{remark}{remark}{remarks}
\Crefname{remark}{Remark}{Remarks}
\crefname{section}{Section}{Sections}

\DeclareMathOperator{\conv}{conv}
\DeclareMathOperator{\inrad}{inrad}
\DeclareMathOperator{\CR}{CR}
\DeclareMathOperator{\Per}{Per}
\DeclareMathOperator{\Ext}{Ext}

\newcommand{\R}{\mathbb{R}}
\newcommand{\Sph}{\mathbb{S}}

\newcommand{\norm}[1]{\lVert #1\rVert}
\newcommand{\abs}[1]{\lvert #1\rvert}
\newcommand{\ip}[2]{\langle #1,#2\rangle}
\newcommand{\cellratio}{\mathcal{R}}
\newcommand{\vmax}{\mathbin{\vee}}
\newcommand{\eps}{\varepsilon}
\newcommand{\spc}{C_{\mathrm{sp}}}

\setlist[itemize]{leftmargin=1.5em,itemsep=0.2em,topsep=0.25em}
\setlist[enumerate]{leftmargin=1.75em,itemsep=0.25em,topsep=0.25em}
\title{\textbf{Bellman Search in Arbitrary Finite Dimension:}\\
A Self-Similar Cell Theorem and Effective Computability of Planar Shoreline Search}
\author{Florentin Koch\\[0.3em]
\normalsize \'Ecole Polytechnique\\
\normalsize \texttt{florentin.koch@polytechnique.edu}}
\date{29 August 2026 -- English revision 8.3}

\begin{document}
\maketitle

\begin{abstract}
A shoreline-search path starts at the origin and must meet an unknown affine line, without knowing either its normal or its distance. We first establish a self-similar reduction theorem for homogeneous search problems whose historical information is a record profile updated by pointwise maximum. Two quasi-returns of the normalized state delimit a block that renews the required profile by itself; a short connector closes this block into a cell. Every finite-ratio path can therefore be approximated, with arbitrarily small loss, by repetitions of a single cell at all scales.

The main chain is then made effective. A finite coding of the state space computably bounds the scale factor and normalized length of a nearly optimal cell. For planar Shoreline search, the support function of the convex hull gives an exact cell functional. A one-sided polygonalization then reduces the problem to a computable number of vertices, after which quantifier elimination decides whether a polygonal cell exists below a rational threshold.

It follows that the optimal deterministic planar Shoreline value $C_2^*$ is a computable real: for every rational $\eps>0$, an algorithm terminates with a rational interval of width at most $\eps$ containing $C_2^*$. Additional results---sliding memory, Bellman transitions, deadlines, geometric filters, and relative equilibria---are presented separately as a toolbox for certified computation and for the study of spiral rigidity; they are not used in the computability proof.
\end{abstract}

\noindent\textbf{Keywords.} online search, Shoreline search, hyperplane search, self-similarity, support function, computability, semialgebraic geometry, Bellman dynamics.

\section{Introduction}
\subsection{The problem and the main result}
A Shoreline strategy is a locally rectifiable curve
\[
  \gamma:[0,\infty)\longrightarrow \R^2,\qquad \gamma(0)=0,
\]
parametrized by arc length. The adversary chooses a unit normal $u\in\Sph^1$ and a distance $D>0$, hence the line
\[
  H(u,D):=\{x\in\R^2:\ip{u}{x}=D\}.
\]
If $\tau_\gamma(u,D)$ is the first time at which $\gamma$ meets this line, the purely multiplicative competitive ratio is
\[
  \CR(\gamma):=\sup_{u\in\Sph^1,\,D>0}\frac{\tau_\gamma(u,D)}{D},
  \qquad
  C_2^*:=\inf_\gamma \CR(\gamma).
\]
The best explicit path currently known is a logarithmic spiral with ratio
\[
  \spc\simeq 13.81113517946,
\]
whereas Temerev's unconditional deterministic lower bound gives
\[
  12.5937096701246675\ldots\le C_2^*\le \spc
\]
\cite{FinchZhu,FinchSpiral,Temerev}.

The present paper proves that $C_2^*$ is effectively computable. This gives a certifiable alternative: if $C_2^*<\spc$, finite precision will eventually produce a certified interval lying strictly below $\spc$; if $C_2^*=\spc$, the certified intervals will converge to the spiral value. Determining which scenario occurs remains a separate structural problem.

A companion paper extends the effective-computability theorem to every fixed finite dimension. It also develops a conjectural unifying mechanism in which the one-dimensional zigzag, the planar logarithmic spiral, and higher-dimensional precessing exponential oscillators arise from the same scale-normalized Bellman dynamics with max-memory. Those later structural conjectures are independent of, and are not used in, the present proof.

\subsection{Lineage and previous work}
These problems belong to the tradition of geometric minimax search initiated by Bellman: Gross attributed a search problem to him in 1955, and Bellman subsequently related it to dynamic programming and geometric escape \cite{Gross,BellmanMin,BellmanDP}. The directly relevant milestones are summarized in \cref{tab:history}; each treats a particular information model or class of paths, whereas the present theorem extracts a repeated cell from an otherwise unrestricted path.

\begin{table}[ht]
\centering
\small
\caption{Directly relevant milestones and the limits of their scope.}
\label{tab:history}
\begin{tabularx}{\textwidth}{@{}p{1.5cm}p{2.6cm}X X@{}}
\toprule
Year & Reference & Contribution & Not covered there \\
\midrule
1955--57 & Gross; Bellman \cite{Gross,BellmanMin,BellmanDP} & Origins of geometric minimax search and the forest-escape problem. & Arbitrary unknown lines and a value-preserving self-similar reduction. \\
1957 & Isbell \cite{Isbell} & Search for a line when its scale is known; optimal constant for the more informed problem. & Unknown distance and scale invariance. \\
1964--70 & Beck; Beck--Newman \cite{Beck,BeckNewman} & Linear search and the optimal zigzag ratio $9$. & Continuous coupling of all planar directions. \\
1993 & Baeza--Yates et al. \cite{BaezaYates} & Systematic formulation of planar search with partial information. & Global optimality of the logarithmic spiral. \\
2005 & Finch--Zhu; Finch \cite{FinchZhu,FinchSpiral} & Shoreline formulation and the logarithmic-spiral conjecture. & Reduction of a free path to one repeated cell. \\
2010--12 & Langetepe \cite{LangetepeSODA,LangetepeAxis} & Spirality or cyclicity in structured variants. & Unrestricted Shoreline search with a continuum of normals. \\
2020 & Dobrev et al.; Acharjee et al. \cite{Dobrev,Acharjee} & Improved bounds and multi-robot variants. & Exact value of the single-agent problem. \\
2022--23 & Antoniadis et al.; Bansal et al. \cite{Antoniadis,Bansal} & Dimensional bounds for hyperplane search. & Optimal constants and classification of states. \\
2024 & Ghomi--Wenk \cite{GhomiWenk} & Shortest closed curves whose convex hull contains a Euclidean sphere. & The online ordering, first-passage constraint, and max-memory dynamics of Shoreline search. \\
2026 & Temerev \cite{Temerev} & Unconditional planar deterministic lower bound $12.5937096701\ldots$. & The gap to the reference spiral value $13.811135\ldots$. \\
\bottomrule
\end{tabularx}
\end{table}

The closest conceptual neighbor of our reduction is the multiscale construction of Antoniadis et al. \cite{Antoniadis}, in which a preselected inspection curve is repeated at several scales. Here, by contrast, the cell is extracted from an arbitrary nearly optimal path and preserves, up to $\eps$, the genuine online objective on every prefix. The convex-hull viewpoint of Ghomi--Wenk is also geometrically close to the terminal inradius bounds used below, but it does not encode online order or the synchronized prefix objective.

\subsection{Contributions and logical status of the results}
The paper has three main contributions.
\begin{description}[leftmargin=3em,labelwidth=2.4em,style=nextline]
\item[C1.] A value-preserving self-similar reduction: every admissible finite-ratio history can be replaced, with arbitrarily small loss, by one cell repeated at all scales.
\item[C2.] An effectivization of this reduction: at fixed precision, the scale factor, normalized length, and ultimately the number of vertices of a nearly optimal cell are bounded by computable functions.
\item[C3.] For Shoreline search, an exact translation through support functions and centered inradii, followed by a semialgebraic reduction that makes $C_2^*$ computable.
\end{description}
\Cref{tab:logic} separates the main proof chain from auxiliary results.

\begin{table}[ht]
\centering
\small
\caption{Logical role of the results in the paper.}
\label{tab:logic}
\begin{tabularx}{\textwidth}{@{}p{2.7cm}X X@{}}
\toprule
Category & Content & Role \\
\midrule
Main chain & \Cref{thm:cell}, \cref{thm:effective-return}, \cref{prop:support-certificate,prop:cell-functional}, \cref{thm:polygonalization}, \cref{prop:qe}, and \cref{thm:computability}. & Each link is used to establish the next. \\
Structural scope & \Cref{cor:equivariant}, \cref{thm:metric-cone}, and \cref{cor:hyperplane}. & Shows that the cell reduction is not specific to planar Shoreline search. \\
Toolbox & \Cref{thm:recut}, \cref{thm:sliding}, \cref{thm:bellman-step}, \cref{prop:deadline}, \cref{thm:relative-equilibrium}, and the filters of \cref{sec:filters}. & Supports certified computation and rigidity analysis; not used in \cref{thm:computability}. \\
Numerical, not certified & Numerical values that are not accompanied by a stand-alone certificate are explicitly labelled as such and are not used in the main computability theorem. \\
\bottomrule
\end{tabularx}
\end{table}

\subsection{Overview of the proof}
Before introducing the formalism, here is the complete argument in five links.

\paragraph{Link 1: from an infinite history to one cell.}
An admissible path is an infinite-dimensional object. At two widely separated scales, a quasi-return of the normalized state, together with max-memory, forces the intermediate block to renew the dilated old profile by itself. A short connector closes this block into a single cell whose repetition at all scales loses at most $\eps$; see \cref{thm:cell}.

\paragraph{Link 2: locating and bounding the cell.}
The first link does not say where to find the quasi-return or how long the cell may be. We replace abstract precompactness by an explicit finite coding of normalized states and apply a counted pigeonhole principle. This gives computable bounds on the return index, scale factor, and normalized length; see \cref{thm:effective-return}.

\paragraph{Link 3: from a continuous curve to a finite polygon.}
Even after it is bounded, a cell remains an infinite-dimensional curve. We first express the ratio through support functions and a centered inradius, then replace each short arc by its chord while comparing it with the time at which the whole original arc is already available. The result is a polygonal cell with a computable number of vertices and controlled loss; see \cref{prop:support-certificate,prop:cell-functional} and \cref{thm:polygonalization}.

\paragraph{Link 4: deciding a rational threshold.}
The vertex coordinates are still continuous real variables. The assertion ``there exists a polygonal cell of ratio at most $c$'' is written as a first-order formula over the reals, and quantifier elimination gives a terminating yes/no test; see \cref{prop:qe}.

\paragraph{Link 5: computing the value.}
A threshold oracle is queried by rational bisection. This returns a certified interval of prescribed width containing $C_2^*$; see \cref{thm:computability}.

In schematic form,
\[
\begin{aligned}
\text{infinite history}
&\xrightarrow{\text{quasi-return + max-memory}}
\text{cell}
\xrightarrow{\text{finite coding}}
\text{bounded cell},\\
\text{bounded cell}
&\xrightarrow{\text{support/inradius + chords}}
\text{polygon with }N(\eps)\text{ vertices}
\xrightarrow{\text{quantifier elimination}}
\text{threshold test}
\xrightarrow{\text{bisection}}
\eps\text{-interval}.
\end{aligned}
\]
The results in Part~II---sliding memory, Bellman transitions, deadlines, filters, and relative equilibria---do not enter any of these five links. They are presented separately as tools for future implementations and for structural analysis of the spiral.

\part{The main proof chain}

\section{Homogeneous framework and max-memory}
\subsection{Scale, constraints, and service}
The next two sections introduce only the data needed for the cell theorem: a scaling action, a record profile, and a normalized state. This is not a theory parallel to Shoreline search; it isolates the common mechanism that will later be specialized to hyperplane search.

Let $(X,d,o)$ be a pointed metric space. For each $q>0$, let $S_q$ denote scaling by the factor $q$. We assume
\begin{equation}
S_1=\mathrm{Id},\qquad S_qS_r=S_{qr},\qquad S_q(o)=o,\qquad d(S_qx,S_qy)=q\,d(x,y).
\label{eq:scaling}
\end{equation}
In a normed space, $o=0$ and $S_qx=qx$.

A compact metric space $U$ parametrizes the adversarial constraints. A continuous function
\[
\phi:U\times X\longrightarrow\R
\]
measures instantaneous service: $\phi(u,x)$ is the level supplied to constraint $u$ at position $x$. It is homogeneous of degree one when
\begin{equation}
\phi(u,S_qx)=q\phi(u,x).
\label{eq:service-homogeneity}
\end{equation}
A finite history is a rectifiable curve; an infinite history is locally rectifiable. The stability properties required of the admissible class will be stated together with the cell theorem, where they are actually used.

\subsection{Records, completion, and ratio}
For an arc-length parametrized history $\gamma$, define the record of constraint $u$ at time $t$ by
\begin{equation}
R_\gamma(t,u):=\max_{0\le s\le t}\phi(u,\gamma(s)),
\qquad
m_\gamma(t):=\min_{u\in U}R_\gamma(t,u).
\label{eq:records}
\end{equation}
The level $m_\gamma(t)$ is the amount guaranteed simultaneously to all constraints. For $d>0$, define the first completion time
\begin{equation}
T_\gamma(d):=\inf\{t\ge0:m_\gamma(t)\ge d\},\qquad \inf\varnothing:=+\infty,
\label{eq:completion}
\end{equation}
and the competitive ratio
\begin{equation}
\CR(\gamma):=\sup_{d>0}\frac{T_\gamma(d)}{d}.
\label{eq:cr-general}
\end{equation}
This purely multiplicative convention applies at every scale $d>0$. It differs from models that impose a minimum target distance or allow an additive constant; neither variant is used in the proofs.

\begin{lemma}[Phase form of the ratio]
\label{lem:phase-ratio}
If $m_\gamma$ is continuous, nondecreasing, and unbounded, then
\begin{equation}
\CR(\gamma)=\sup_{t:m_\gamma(t)>0}\frac{t}{m_\gamma(t)}.
\label{eq:phase-ratio}
\end{equation}
In particular, $\CR(\gamma)\le C<\infty$ implies $m_\gamma(t)\ge t/C$ and $m_\gamma(t)\to\infty$.
\end{lemma}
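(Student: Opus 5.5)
We prove the two inequalities between $\CR(\gamma)=\sup_{d>0}T_\gamma(d)/d$ and $\Phi:=\sup_{t:\,m_\gamma(t)>0} t/m_\gamma(t)$ separately. Two consequences of the hypotheses do the work. Since $m_\gamma$ is continuous, nondecreasing and unbounded, the set $\{t: m_\gamma(t)\ge d\}$ is a nonempty closed half-line $[T_\gamma(d),\infty)$ for every $d>0$, so $T_\gamma(d)<\infty$. Continuity also gives $m_\gamma(T_\gamma(d))\ge d$. When $T_\gamma(d)>0$, continuity together with the definition of the infimum gives $m_\gamma(t)<d$ for $t<T_\gamma(d)$, hence by the intermediate value theorem $m_\gamma(T_\gamma(d))=d$ exactly. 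The remaining case $T_\gamma(d)=0$ contributes ratio $0$ and can be ignored.

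\emph{Step 1 ($\CR\le\Phi$).} Fix $d>0$ with $T:=T_\gamma(d)>0$. Then $m_\gamma(T)=d>0$, so $T$ is admissible in the supremum defining $\Phi$. It follows that $T_\gamma(d)/d=T/m_\gamma(T)\le\Phi$. Taking the supremum over $d$ gives $\CR(\gamma)\le\Phi$.

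\emph{Step 2 ($\Phi\le\CR$).} Fix $t$ with $d:=m_\gamma(t)>0$. By the definition of the infimum, $T_\gamma(d)\le t$, which goes the wrong way. So we instead use $d'$ slightly larger than $d$. If $m_\gamma(s)=d$ for all $s\in[t,t']$ on some interval, the relevant time is the last time at level $d$. Set $t^+:=\sup\{s: m_\gamma(s)\le d\}$, which is finite by unboundedness and at least $t$. For every $d'>d$, monotonicity gives $T_\gamma(d')\ge t^+\ge t$, so
\[
\frac{t}{d'}\le \frac{T_\gamma(d')}{d'}\le\CR(\gamma).
\]
Letting $d'\downarrow d$ yields $t/m_\gamma(t)\le\CR(\gamma)$. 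This handling of flat pieces of $m_\gamma$, using levels $d'\downarrow d$ rather than $d$ itself, is the only delicate point; everything else is bookkeeping with the infimum.

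\emph{Consequences.} If $\CR(\gamma)\le C<\infty$, then \eqref{eq:phase-ratio} gives $t\le C\,m_\gamma(t)$ whenever $m_\gamma(t)>0$. When $m_\gamma(t)\le0$, we need $t\le0$. To see this, suppose $m_\gamma(t_0)\le 0$ for some $t_0>0$. Then for every small $d>0$, monotonicity gives $T_\gamma(d)\ge t_0$, so $T_\gamma(d)/d\ge t_0/d\to\infty$, contradicting $\CR(\gamma)<\infty$. Hence $m_\gamma(t)>0$ for all $t>0$, and $m_\gamma(t)\ge t/C$ for all $t\ge0$. In particular $m_\gamma(t)\to\infty$, a conclusion already guaranteed by the unboundedness hypothesis but now quantified.
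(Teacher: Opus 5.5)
Your proof is correct and follows the paper's argument. Continuity gives $m_\gamma(T_\gamma(d))=d$, which yields $\CR(\gamma)\le\sup_t t/m_\gamma(t)$; levels $d'\downarrow m_\gamma(t)$ then give the reverse inequality. Beyond that, you fill in details the paper leaves implicit: the case $T_\gamma(d)=0$, and ruling out $m_\gamma(t_0)\le 0$ for $t_0>0$ in the consequence. The auxiliary time $t^+$ is harmless but unnecessary, since $T_\gamma(d')\ge t$ already follows from $m_\gamma(s)\le m_\gamma(t)<d'$ for all $s\le t$.
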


\begin{proof}
Continuity gives $m_\gamma(T_\gamma(d))=d$ for every attained level, so the right-hand side of \eqref{eq:phase-ratio} is at least \eqref{eq:cr-general}. Conversely, if $m_\gamma(t)>0$, choose $d_n\downarrow m_\gamma(t)$ with $d_n>m_\gamma(t)$. Then $T_\gamma(d_n)\ge t$, hence
\[
\sup_{d>0}\frac{T_\gamma(d)}{d}
\ge \lim_{n\to\infty}\frac{t}{d_n}
=\frac{t}{m_\gamma(t)}.
\]
The final assertion follows immediately from \eqref{eq:phase-ratio}.
\end{proof}

\subsection{Max-memory}
If two compatible pieces are concatenated, their terminal record profile is the pointwise maximum of the two profiles:
\begin{equation}
R_{\gamma_1\star\gamma_2}(u)
=R_{\gamma_1}(u)\vmax R_{\gamma_2}(u),
\qquad
(F\vmax G)(u):=\max\{F(u),G(u)\}.
\label{eq:max-memory}
\end{equation}
This law is idempotent: reproducing an already acquired record does not change the state. It also supplies the renewal mechanism used below: if the final profile is strictly above the old profile in every direction, the increase must have been produced by the new block.

\section{Normalization and quasi-returns}
This section prepares a single event: two very distant scales at which cost, position, and historical profile are nearly identical after normalization.

\subsection{States at geometric levels}
Fix $\lambda>1$ and levels $r_n=\lambda^n$. By \cref{lem:phase-ratio}, the times
\[
T_n:=T_\gamma(r_n)
\]
are finite whenever $\CR(\gamma)<\infty$. In units of the scale $r_n$, set
\begin{equation}
\ell_n:=\frac{T_n}{r_n},
\qquad
x_n:=S_{1/r_n}\gamma(T_n),
\qquad
H_n(u):=\frac{R_\gamma(T_n,u)}{r_n}.
\label{eq:normalized-state-components}
\end{equation}
The normalized state is
\begin{equation}
\Sigma_n:=(\ell_n,x_n,H_n),
\qquad
\min_{u\in U}H_n(u)=1.
\label{eq:normalized-state}
\end{equation}
It retains cost, position, and all records after quotienting out the absolute scale.

\subsection{Precompactness and quasi-returns}
Equip the state space with the metric
\begin{equation}
D_\Sigma\bigl((\ell,x,H),(\ell',x',H')\bigr)
:=\abs{\ell-\ell'}+d(x,x')+\norm{H-H'}_\infty.
\label{eq:state-metric}
\end{equation}
The sequence $(\Sigma_n)$ is called precompact when its closure is compact. Equivalently, the normalized states cannot contain infinitely many configurations separated by a fixed positive distance; arbitrarily accurate quasi-returns must occur.

\begin{lemma}[Recurrence with arbitrarily large gap]
\label{lem:large-gap-return}
If $(\Sigma_n)$ is precompact, then for every $\delta>0$ and $N_0\in\mathbb{N}$ there exist $i<j$ such that
\[
j-i>N_0,
\qquad
D_\Sigma(\Sigma_i,\Sigma_j)<\delta.
\]
\end{lemma}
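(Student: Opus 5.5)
The plan is a pigeonhole argument on a finite cover of the closure of the sequence, followed by thinning the indices so that consecutive ones are far apart. Let $K$ be the closure of $\{\Sigma_n:n\in\mathbb{N}\}$ in the state space equipped with $D_\Sigma$. By precompactness $K$ is compact, so it is totally bounded. Hence there are finitely many balls $B_1,\dots,B_M$ of radius $\delta/2$ for $D_\Sigma$ whose union covers $K$, and in particular covers every $\Sigma_n$.

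Next I would restrict attention to the subsequence of indices
\[
n_k:=k(N_0+1),\qquad k=0,1,\dots,M,
\]
which are $M+1$ distinct indices, and any two of them differ by at least $N_0+1>N_0$. Each $\Sigma_{n_k}$ lies in at least one of the $M$ balls. By the pigeonhole principle, two distinct indices $n_a<n_b$ have their states in the same ball $B_m$. The triangle inequality then gives
\[
D_\Sigma(\Sigma_{n_a},\Sigma_{n_b})<\tfrac{\delta}{2}+\tfrac{\delta}{2}=\delta,
\]
and by construction $n_b-n_a\ge N_0+1>N_0$. Taking $i=n_a$ and $j=n_b$ proves the statement.

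An alternative route is sequential compactness. One extracts a convergent subsequence $\Sigma_{n_k}\to\Sigma_\infty$, then picks $k$ large enough that $D_\Sigma(\Sigma_{n_k},\Sigma_\infty)<\delta/2$ for all later terms, and chooses two such terms with index gap exceeding $N_0$. I prefer the counting version because it also gives an explicit bound $j\le M(N_0+1)$ in terms of the covering number $M$, which fits the later effective version in \cref{thm:effective-return}.

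There is no real obstacle here. The only point needing care is that precompactness means the closure in the ambient state space is compact, which requires completeness, or at least that $K$ is taken as a compact set containing the states. Only total boundedness is actually used, and that holds in either reading.
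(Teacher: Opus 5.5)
Your argument is correct, but it is not the route the paper takes for this lemma. The paper argues qualitatively by sequential compactness: a precompact sequence has a convergent subsequence, and two sufficiently late terms of that subsequence are within $\delta$ and can be chosen with index gap exceeding $N_0$. That is essentially the alternative you mention and set aside.

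Your main argument uses only total boundedness of the closure: a finite cover by $M$ balls of radius $\delta/2$, the $M+1$ thinned indices $k(N_0+1)$, and the pigeonhole principle. This buys two things. First, nothing beyond total boundedness is needed, which follows at once from the paper's definition of precompactness as compact closure, so your completeness caveat is moot. Second, you get the explicit bound $j\le M(N_0+1)$, hence $j-i\le M(N_0+1)$.

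That counting mechanism is exactly what the paper introduces later, in the proof of \cref{thm:effective-return}. There the abstract cover becomes the explicit code count $M(D,U,\delta)$ of Appendix~\ref{app:covering}, and the spacing $N_0+1$ becomes $L_0$. Your bound then becomes $N_{\mathrm{ret}}=M(D,U,\delta)L_0$ in \eqref{eq:Nret}. The paper's version is shorter and suffices for the purely qualitative \cref{thm:cell}; yours already anticipates the effective step.
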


\begin{proof}
A precompact sequence has a convergent subsequence. Two sufficiently late terms of that subsequence are within $\delta$ and can be chosen with index gap greater than $N_0$.
\end{proof}

The associated scale ratio is
\begin{equation}
q:=\frac{r_j}{r_i}=\lambda^{j-i};
\label{eq:q-ratio}
\end{equation}
it can therefore be made arbitrarily large while preserving an accurate normalized quasi-return.

\section{Self-similar cell theorem}
\subsection{Structural hypotheses}
We consider a class of admissible histories satisfying the following properties.

\begin{description}[leftmargin=3.6em,labelwidth=3em]
\item[(H0)] The class is stable under restriction, finite concatenation of compatible pieces, scaling, and monotone arc-length reparametrization. It is also stable under the all-scale concatenation $\star_{k\in\mathbb Z}S_{Q^k}C$ whenever the negative-index copies have finite total length and their endpoints converge to $o$.
\item[(H1)] Records obey the max-law \eqref{eq:max-memory} and the scale covariance induced by \eqref{eq:service-homogeneity}.
\item[(H2)] For every finite-ratio history, the normalized states \eqref{eq:normalized-state} are precompact.
\item[(H3)] On every sublevel $\CR\le C$, there is $K_C<\infty$ such that a quasi-return of error $\delta$ between levels $r_i$ and $r_j=qr_i$ can be closed by an admissible connector of cost at most $K_Cqr_i\delta$.
\end{description}
Hypothesis (H0) is only closure under the operations actually used by the construction. For Shoreline search, where strategies are locally rectifiable curves starting at the origin, it is automatic. The remaining hypotheses are geometric and verifiable; \cref{sec:abstract-scope} proves them in finite-dimensional normed spaces.

\subsection{Statement and proof}
For a rectifiable cell $C$ joining $x$ to $S_Qx$, with $Q>1$, define its repetition at all scales by the ordered concatenation
\[
\Gamma_C:=\mathop{\star}_{k\in\mathbb Z}S_{Q^k}C.
\]
The negative-index copies contract toward $o$, while the positive-index copies explore arbitrarily large scales. Whenever this repetition is admissible, write
\[
\cellratio_Q(C):=\CR(\Gamma_C).
\]

\begin{theorem}[Homothetic self-similar reduction]
\label{thm:cell}
Assume \emph{(H0)--(H3)}. Let $\gamma$ be an admissible history with $\CR(\gamma)\le C<\infty$. For every $\eps>0$, there exist $Q>1$, a point $x\in X$, a length $L<\infty$, and a rectifiable curve $C_0:[0,L]\to X$ satisfying
\[
C_0(0)=x,
\qquad
C_0(L)=S_Qx,
\]
such that $\Gamma_{C_0}$ is a locally rectifiable admissible history starting at $o$ and
\[
\CR(\Gamma_{C_0})\le C+\eps.
\]
Consequently, the infimum of the ratio over all admissible histories equals the infimum over repetitions of homothetic cells.
\end{theorem}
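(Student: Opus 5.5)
We would build the cell from a quasi-return with a very large scale gap. Fix $\eps>0$ and $\lambda>1$, and form the normalized states $\Sigma_n=(\ell_n,x_n,H_n)$ of \eqref{eq:normalized-state}; these are well defined because \cref{lem:phase-ratio} makes every $T_n$ finite. By (H2) and \cref{lem:large-gap-return}, there are indices $i<j$ with $j-i>N_0$ and $D_\Sigma(\Sigma_i,\Sigma_j)<\delta$, where $\delta$ and $N_0$ will be fixed at the end in terms of $C$, $\eps$ and $K_C$. Put $q=\lambda^{j-i}$.

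The candidate cell has two pieces. The block is $B=\gamma|_{[T_i,T_j]}$, running from $\gamma(T_i)$ to $\gamma(T_j)$. The connector is supplied by (H3): it joins $\gamma(T_j)$ to $S_q\gamma(T_i)$ and costs at most $K_Cqr_i\delta$. We set $x=\gamma(T_i)$, $Q=q$ and $C_0=B\star(\text{connector})$. This cell joins $x$ to $S_Qx$, and its length satisfies
\[
L\le T_j+K_Cqr_i\delta\le qr_i(\ell_j+K_C\delta)\le qr_i(C+K_C\delta).
\]
Admissibility of $\Gamma_{C_0}$ follows from (H0). The negative-index copies have total length $\sum_{k<0}Q^kL<\infty$, and their endpoints $S_{Q^k}x$ tend to $o$ by \eqref{eq:scaling}. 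Reparametrizing so that $\Gamma_{C_0}$ starts at $o$ shows that the total length before the $k$-th copy begins is $Q^kL/(Q-1)$.

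The core of the argument is to bound the records of $\Gamma_{C_0}$ from below. By (H1) it suffices to study one period. The copies with index below $k$ contain $S_{Q^{k-1}}B$, and that block already ends with record at least $Q^{k-1}R_\gamma(T_j,\cdot)\ge Q^{k-1}qr_iH_j$. Renewal is where quasi-return and max-memory combine. Since $\|H_j-H_i\|_\infty<\delta$ and $\min H_i=1$, the profile produced inside $B$ alone, after dilation, dominates the old profile up to $\delta$. Precisely, wherever $R_\gamma(T_j,u)$ exceeds $q\,R_\gamma(T_i,u)\ge R_\gamma(T_i,u)$, the excess was produced by $B$ itself. Hence a record comparison along $B$ shows that, for each $t\in[T_i,T_j]$, the profile accumulated inside the $k$-th copy together with the dilated previous copy dominates $Q^kR_\gamma(t,\cdot)$ up to an error of order $Q^kr_i\delta$. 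We then compare times. The elapsed time in $\Gamma_{C_0}$ up to the image of $\gamma(t)$ is at most
\[
Q^k\Bigl(t-T_i+\frac{L}{Q-1}\Bigr),
\]
which exceeds the original time $Q^kt$ by at most $Q^k\bigl(L/(Q-1)-T_i\bigr)\le Q^kr_i\bigl(\ell_i/(q-1)\cdot q+O(\delta)\bigr)$-type terms. Dividing by a level of at least $Q^kr_i$ (up to factors of $\lambda$), we obtain
\[
\frac{t'}{m_\Gamma(t')}\le C+O\!\left(\delta K_C+\frac{C}{q}\right)+O(\lambda-1\text{-independent slack}).
\]
Points inside the connector cost only $O(K_C\delta)$ more. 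Choosing $\delta$ small and $N_0$ large therefore gives $\CR(\Gamma_{C_0})\le C+\eps$, once \cref{lem:phase-ratio} is applied to $\Gamma_{C_0}$. The continuity and unboundedness of $m$ needed there follow from the continuity of $\phi$ and from the growth along copies.

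The main obstacle is this renewal comparison. The time bound is easy, but the records of the new path are not literally those of $\gamma$. Early in the block, the truncated history sees only the dilated previous copy $S_{Q^{k-1}}C_0$, not all of $\gamma|_{[0,T_i]}$. The first thing I would try is the bound $R_\Gamma\ge Q^{k-1}R_\gamma(T_j,\cdot)\ge Q^k(R_\gamma(T_i,\cdot)-r_i\delta)$. This uses $H_j\ge H_i-\delta$, which gives exactly the record $\gamma$ had at time $T_i$, up to $\delta$, and it handles every $t\in[T_i,T_j]$ simultaneously via $R_\gamma(t)=R_\gamma(T_i)\vee R_B(t)$. The phase-form ratio then transfers with only additive $O(\delta)$ loss, relative to the level, which is at least $Q^kr_i$. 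The final equality of infima is immediate: each cell repetition is admissible, and every admissible history with finite ratio is approximated to within any $\eps$ by one.
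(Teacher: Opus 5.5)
Your architecture matches the paper's. You use a large-gap quasi-return, close the block $\gamma|_{[T_i,T_j]}$ with an (H3) connector, charge the past $L/(Q-1)$, invoke max-memory renewal, and control every phase, including the connector phase. The one real difference is bookkeeping. You take $Q=q$ and close to $S_q\gamma(T_i)$, so the quasi-return error shows up as a loss in the guaranteed level. The paper instead takes $Q=q(1-\delta)$, so that $R_\gamma(T_j,\cdot)\ge QR_\gamma(T_i,\cdot)$ holds exactly. The error is then pushed into the cost terms: $T_j-QT_i$ picks up a term of order $q\delta T_i$, and in a metric cone the connector to $S_Q\gamma(T_i)$ needs an extra $(q-Q)r_i\,d(o,x_i)$ in its length bound. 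Your variant works.

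However, the step you yourself call the crux is not established as written. At the start of copy $k$, the bound $R_\Gamma\ge Q^{k-1}R_\gamma(T_j,\cdot)$ requires the block alone to reproduce the whole terminal profile: $B_{i,T_j}(u)=R_\gamma(T_j,u)$ for every $u$ (your $R_B(T_j)$). This is needed because $\Gamma_{C_0}$ contains only dilated copies of the block and connector, never $\gamma|_{[0,T_i]}$. Your justification covers only the directions with $R_\gamma(T_j,u)>qR_\gamma(T_i,u)$. In the other directions, nothing you wrote excludes $R_\gamma(T_j,u)=R_\gamma(T_i,u)>B_{i,T_j}(u)$.

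The missing line is where $\min H_i=1$ actually enters. For every $u$,
\[
R_\gamma(T_j,u)=qr_iH_j(u)\ge qr_i\bigl(H_i(u)-\delta\bigr)\ge q(1-\delta)R_\gamma(T_i,u)>R_\gamma(T_i,u),
\]
because $R_\gamma(T_i,u)\ge r_i>0$ and $q(1-\delta)>1$; you should impose the latter condition explicitly. The max-law then forces $B_{i,T_j}=R_\gamma(T_j,\cdot)$ in every direction. Hence at phase $t$ of copy $k$ the level is at least $q^k(1-\delta)m_\gamma(t)$, which is precisely the paper's steps 1--2.

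Separately, your time estimate contains a slip: $q\ell_i/(q-1)$ is of order $\ell_i$, not small. Use $L=T_j-T_i+e$ exactly rather than $L\le T_j+e$. This gives
\[
\frac{L}{q-1}-T_i=\frac{qr_i(\ell_j-\ell_i)+e}{q-1}\le 2r_i\delta(1+K_C)\qquad(q\ge 2).
\]
No $C/q$ term and no unspecified slack term is then needed. Since $m_\gamma(t)\ge r_i$ for $t\ge T_i$, the ratio at every block phase is at most $\bigl(C+2\delta(1+K_C)\bigr)/(1-\delta)$.
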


\begin{quote}\small
\textit{Interpretation.} A long-gap quasi-return makes the normalized states almost agree. Max-memory then forces the intervening block to recreate the required dilated profile by itself. A short connector makes the quasi-return exact, and the geometric past has finite cost $L/(Q-1)$. The ratio is controlled at every internal phase, not only at cell endpoints.
\end{quote}

\begin{proof}
Fix $\lambda>1$. By \cref{lem:large-gap-return}, choose $i<j$ such that
\begin{equation}
\abs{\ell_j-\ell_i}<\delta,
\qquad
d(x_j,x_i)<\delta,
\qquad
\norm{H_j-H_i}_\infty<\delta,
\label{eq:quasi-return-data}
\end{equation}
for a parameter $\delta\in(0,1/2)$ to be chosen sufficiently small, and with $j-i$ sufficiently large. Put
\begin{equation}
q:=\frac{r_j}{r_i},
\qquad
Q:=q(1-\delta)>1.
\label{eq:q-Q}
\end{equation}

\emph{1. Domination at scale $Q$.}
Since $H_i\ge1$ pointwise,
\[
H_j\ge H_i-\delta\ge(1-\delta)H_i,
\qquad
qH_j\ge QH_i.
\]
In physical units,
\begin{equation}
R_\gamma(T_j,\cdot)\ge Q R_\gamma(T_i,\cdot).
\label{eq:profile-domination}
\end{equation}

\emph{2. Renewal by the new block.}
Let $B_{i,t}$ be the profile produced by the block $\gamma([T_i,t])$ alone. The max-law gives
\[
R_\gamma(t,\cdot)=R_\gamma(T_i,\cdot)\vmax B_{i,t}.
\]
At $t=T_j$, the target profile in \eqref{eq:profile-domination} is strictly above $R_\gamma(T_i,\cdot)$ because $Q>1$; therefore the new block must satisfy
\begin{equation}
B_{i,T_j}\ge Q R_\gamma(T_i,\cdot).
\label{eq:block-renewal}
\end{equation}
In fact, for every $u\in U$,
\[
R_\gamma(T_j,u)\ge Q R_\gamma(T_i,u)>R_\gamma(T_i,u),
\]
and because
\[
R_\gamma(T_j,u)=\max\{R_\gamma(T_i,u),B_{i,T_j}(u)\},
\]
the maximum cannot be supplied by the old profile. Hence
\begin{equation}
B_{i,T_j}=R_\gamma(T_j,\cdot),
\qquad
\min_{u\in U}B_{i,T_j}(u)=r_j=qr_i.
\label{eq:exact-block-profile}
\end{equation}
Thus the intermediate block recreates the entire required terminal profile by itself; in particular, level $r_j$ is already acquired before the connector is traversed.

\emph{3. Closing the quasi-return.}
By (H3), connect $\gamma(T_j)$ to $S_Q\gamma(T_i)$ by an admissible connector of length
\begin{equation}
e\le K_Cqr_i\delta.
\label{eq:connector-cost}
\end{equation}
Let $C_0$ be the block $\gamma([T_i,T_j])$ followed by this connector. Its length is
\[
L=T_j-T_i+e.
\]

\emph{4. Cost of the geometric past.}
Before a given copy, all earlier copies have total length $L/(Q-1)$. Since $T_i=r_i\ell_i$, $T_j=qr_i\ell_j$, and $Q=q(1-\delta)$,
\[
T_j-QT_i
=qr_i(\ell_j-\ell_i)+\delta T_i
\le qr_i\delta(1+C).
\]
Consequently,
\begin{equation}
\frac{L}{Q-1}-T_i
\le \eta_C(\delta,q)r_i,
\qquad
\eta_C(\delta,q):=
\frac{q\delta(1+C+K_C)}{q(1-\delta)-1}.
\label{eq:eta}
\end{equation}
As $\delta\to0$ and $q\to\infty$, $\eta_C(\delta,q)\to0$.

\emph{5. Control of every phase.}
Consider, in the copy $S_{Q^k}C_0$, the phase corresponding to $t\in[T_i,T_j]$. The immediately preceding copy supplies at least $Q^kR_\gamma(T_i,\cdot)$, while the current block supplies $Q^kB_{i,t}$. The complete historical profile therefore dominates
\[
Q^k\bigl(R_\gamma(T_i,\cdot)\vmax B_{i,t}\bigr)
=Q^kR_\gamma(t,\cdot).
\]
Older copies can only improve this lower bound. The guaranteed level is therefore at least $Q^km_\gamma(t)$, while historical cost is at most
\[
Q^k\left(\frac{L}{Q-1}+t-T_i\right)
\le Q^k\bigl(t+\eta_C(\delta,q)r_i\bigr).
\]
Since $m_\gamma(t)\ge r_i$ and $t/m_\gamma(t)\le C$,
\[
\frac{\text{cost}}{\text{guaranteed level}}
\le C+\eta_C(\delta,q).
\]
During the connector, the already-created level is at least $r_j=qr_i$, while the extra relative cost is bounded by $\eta_C(\delta,q)/q+K_C\delta$. Choose first $\delta$ small and then $j-i$ large so that both errors are below $\eps$.

Finally,
\[
\sum_{k<0}Q^kL=\frac{L}{Q-1}<\infty,
\]
and the endpoints of the negative-index copies converge to $o$. By (H0), their arc-length concatenation is an admissible history starting at $o$. The reverse inequality between the two infima is immediate because repeated cells form a subclass of all admissible histories.
\end{proof}

\begin{remark}
\label{rem:cell-shape}
\Cref{thm:cell} is a reduction of value. The extracted cell may still contain returns, corners, and changes of active provider; no spiral or polygonal form is imposed.
\end{remark}

\section{Effective return and a uniformly bounded cell}
\Cref{thm:cell} produces a finite-length cell, but it does not bound the quasi-return index. Computability requires a uniform version: at prescribed precision, a sufficiently separated return must occur before a computable rank.

\begin{theorem}[Effectively bounded self-similar return in $\R^D$]
\label{thm:effective-return}
Fix an integer $D\ge1$ and rational numbers $U<\infty$, $\zeta>0$, and $\lambda>1$. There exists a computable integer
\[
N_{\mathrm{ret}}=N_{\mathrm{ret}}(D,U,\zeta,\lambda)
\]
such that every hyperplane-search path $\gamma$ in $\R^D$ with $\CR(\gamma)\le U$ admits a cell $C_0$, extracted between two levels $r_i=\lambda^i$ and $r_j=\lambda^j$, for which
\[
\cellratio_Q(C_0)\le \CR(\gamma)+\zeta,
\qquad
1\le j-i\le N_{\mathrm{ret}}.
\]
The scale factor and the length of $C_0$ in units of $r_i$ are therefore bounded by computable functions of $(D,U,\zeta,\lambda)$.
\end{theorem}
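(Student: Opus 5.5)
The plan is to replace the qualitative \cref{lem:large-gap-return} by a counted pigeonhole argument on an explicit finite coding of the normalized states. Then I rerun the quantitative estimates in the proof of \cref{thm:cell} with the error parameters fixed in advance. First, from $\zeta$, $U$ and $\lambda$ I choose a rational $\delta\in(0,1/2)$ and a minimal gap $g\ge1$ such that, with $q=\lambda^{j-i}\ge\lambda^g$, both error terms in step~5 of \cref{thm:cell} are below $\zeta$. These terms are $\eta_U(\delta,q)$ and $\eta_U(\delta,q)/q+K_U\delta$. Here the constant $K_U$ of (H3) must be made explicit for hyperplane search in $\R^D$. I expect $K_U$ to be a small explicit multiple such as $1+U$: the connector is a straight segment of length $d(\gamma(T_j),S_Q\gamma(T_i))\le qr_i\bigl(d(x_j,x_i)+\delta\norm{x_i}\bigr)$, and $\norm{x_i}\le\ell_i\le U$. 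Since $\eta_U$ is an explicit rational expression in $\delta$ and $q$, both choices are computable.

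Next I code the normalized states. For a path with $\CR\le U$, \cref{lem:phase-ratio} gives $1\le\ell_n\le U$ and $\norm{x_n}\le\ell_n\le U$. The service for hyperplane search is $\phi(u,x)=\ip{u}{x}$ on $U=\Sph^{D-1}$, so each $H_n$ is the support function of the convex hull of the normalized trace. Hence $H_n$ is $U$-Lipschitz on the sphere, with values in $[1,U]$. I fix an explicit rational $\rho$-net $\{u_1,\dots,u_M\}$ of $\Sph^{D-1}$, with $M$ computable from $D$ and $\rho$. I also fix a grid of mesh $\rho$ on $[1,U]$ and on the ball of radius $U$ in $\R^D$. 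The code of $\Sigma_n$ is the vector of rounded values of $\ell_n$, of the coordinates of $x_n$, and of $H_n(u_1),\dots,H_n(u_M)$. Two states with equal codes satisfy $D_\Sigma<c(D,U)\rho$, using the Lipschitz bound to pass from the net to the sup norm. So choosing $\rho$ with $c\rho<\delta$ makes equal codes a $\delta$-quasi-return. The number $P$ of codes is bounded by an explicit computable function of $D$, $U$ and $\rho$.

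Pigeonhole with a gap comes next. Among the indices $0,g,2g,\dots,Pg$ two share a code. This gives $i<j$ with $g\le j-i\le Pg$, so I set $N_{\mathrm{ret}}:=Pg$. Running steps~1--5 of \cref{thm:cell} on this pair yields the cell $C_0$. Its ratio is at most $\CR(\gamma)+\zeta$, with $C$ in step~5 taken to be $\CR(\gamma)\le U$ and constants computed with $U$. The scale factor is $Q\le\lambda^{N_{\mathrm{ret}}}$. The normalized length satisfies $L/r_i\le q\ell_j+K_Uq\delta\le\lambda^{N_{\mathrm{ret}}}(U+K_U)$, which is computable.

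The main obstacle is the uniformity of the coding. Every step must use only $U$, never the particular path $\gamma$. In particular, I must check the equicontinuity of $H_n$ and the bounds on $\ell_n$ and $x_n$, and make $K_U$ explicit. A secondary subtlety is the start index: levels below $1$ need not occur, and we need $r_0=1$ with $T_0$ finite. By scale invariance I may rescale $\gamma$ so that the indices start at $0$, or simply count from any starting level, since the bound concerns only $j-i$.
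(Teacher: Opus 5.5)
Your proposal is correct and follows essentially the same route as the paper. The paper likewise uses an explicit finite code for normalized states (Appendix~\ref{app:covering}), built on the same bounds $1\le\ell_n\le U$, $\norm{x_n}\le U$, $1\le H_n\le U$ and the $U$-Lipschitz profile. It then applies pigeonhole over the indices $0,g,\dots,Pg$ with $N_{\mathrm{ret}}=Pg$, uses the straight connector to get $K_U=1+U$, and reruns the error estimates of \cref{thm:cell}.

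The only difference is cosmetic. The paper fixes $\delta=\min\{1/8,\zeta/(128(1+U))\}$ and a gap $L_0$ with $\lambda^{L_0}\ge4$, whereas you leave these as computable choices. That suffices because $\eta_U(\delta,q)$ is decreasing in $q$.
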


\begin{quote}\small
\textit{Interpretation.} Abstract precompactness is replaced by an explicit finite code for normalized states. A counted pigeonhole argument then yields computable bounds on the return rank, the scale factor, and the normalized cell length.
\end{quote}

\begin{proof}
Set
\begin{equation}
\delta:=\min\left\{\frac18,\frac{\zeta}{128(1+U)}\right\}.
\label{eq:effective-delta}
\end{equation}
For $\CR(\gamma)\le U$, normalized states satisfy
\begin{equation}
1\le \ell_n\le U,
\qquad
\norm{x_n}\le U,
\qquad
1\le H_n(u)\le U,
\qquad
\abs{H_n(u)-H_n(v)}\le U\norm{u-v}.
\label{eq:state-bounds}
\end{equation}
Appendix~\ref{app:covering} constructs an explicit computable number $M(D,U,\delta)$ of codes such that two states with the same code are at distance less than $\delta$.

Choose
\[
L_0:=\left\lceil\frac{\log4}{\log\lambda}\right\rceil,
\qquad
\lambda^{L_0}\ge4,
\]
and consider the $M+1$ states
\[
\Sigma_0,\Sigma_{L_0},\ldots,\Sigma_{ML_0}.
\]
Two share a code. For the corresponding indices $i<j$,
\[
L_0\le j-i\le ML_0,
\qquad
4\le q=\lambda^{j-i}\le\lambda^{ML_0}.
\]
In $\R^D$, the straight connector satisfies
\[
e\le qr_i\delta(1+U).
\]
The proof of \cref{thm:cell} then gives
\[
\eta\le
\frac{2q\delta(1+U)}{q(1-\delta)-1}
\le8\delta(1+U)
\le\frac{\zeta}{16},
\]
and the connector adds at most $\eta/q+\delta(1+U)<\zeta/16$. The resulting cell therefore has ratio less than $\CR(\gamma)+\zeta$.

One may take
\begin{equation}
N_{\mathrm{ret}}:=M(D,U,\delta)L_0.
\label{eq:Nret}
\end{equation}
The principal block has length at most $Uqr_i$ and the connector at most $qr_i\delta(1+U)$, so
\begin{equation}
\frac{L(C_0)}{r_i}
\le \lambda^{N_{\mathrm{ret}}}\bigl(U+\delta(1+U)\bigr)
=:L_{\max}(D,U,\zeta,\lambda).
\label{eq:Lmax}
\end{equation}
Every displayed quantity is computable.
\end{proof}

\begin{remark}
\Cref{thm:effective-return} bounds the scale span and length of a nearly optimal cell. It does not yet bound the number of turns; polygonalization in \cref{sec:polygonalization} removes this final infinitude.
\end{remark}

\section{Abstract scope of the reduction}
\label{sec:abstract-scope}
This section adds no link to the computability proof. It shows that the mechanism of \cref{thm:cell} is stable under compact symmetries and applies to proper metric cones, hence in particular to finite-dimensional normed spaces.

\subsection{Compact symmetries}
Let $G$ be a compact group of isometries of $X$. Assume that $G$ fixes $o$, preserves admissibility and cost, commutes with scaling,
\[
gS_q=S_qg,
\]
and acts on $U$ so that
\begin{equation}
\phi(u,S_qgx)=q\phi(g^{-1}u,x).
\label{eq:equivariant-service}
\end{equation}
The action on profiles is $(g\cdot H)(u):=H(g^{-1}u)$.

\begin{corollary}[Equivariant reduction]
\label{cor:equivariant}
Suppose that, on every sublevel $\CR\le C$, normalized profiles satisfy $\norm{H}_\infty\le B_C$ and normalized states are precompact modulo $G$. Then every history of ratio at most $C$ and every $\eps>0$ admit $Q>1$, $g\in G$, and a cell $C_0:x\to S_Qgx$ whose repetition under $(S_Qg)^k$ has ratio at most $C+\eps$.
\end{corollary}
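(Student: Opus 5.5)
The plan is to run the proof of \cref{thm:cell} verbatim, with the quasi-return taken modulo $G$ and the twist $g$ absorbed into the scaling. Fix $\lambda>1$ and form the normalized states $\Sigma_n=(\ell_n,x_n,H_n)$ at levels $r_n=\lambda^n$. Let $G$ act on states by $g\cdot(\ell,x,H)=(\ell,gx,g\cdot H)$. Since $G$ acts isometrically on $X$ and by composition on profiles, this action preserves $D_\Sigma$; I will check continuity on $U$ so that $g\cdot H$ stays in the same function space. Precompactness modulo $G$ means the images of $(\Sigma_n)$ in the quotient metric $\bar D(\Sigma,\Sigma')=\min_{g\in G}D_\Sigma(\Sigma,g\cdot\Sigma')$ are precompact; compactness of $G$ makes the minimum attained and $\bar D$ a pseudometric. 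The argument of \cref{lem:large-gap-return} then yields $i<j$ with $j-i>N_0$ and some $g\in G$ such that $\abs{\ell_j-\ell_i}<\delta$, $d(x_j,gx_i)<\delta$ and $\norm{H_j-g\cdot H_i}_\infty<\delta$.

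With $q=\lambda^{j-i}$ and $Q=q(1-\delta)$, steps 1--2 go through unchanged after replacing $H_i$ by $g\cdot H_i$. First, $(g\cdot H_i)\ge1$, so $qH_j\ge Q\,g\cdot H_i$. By \eqref{eq:equivariant-service}, $Q\,g\cdot R_\gamma(T_i,\cdot)$ is exactly the record profile of the transported history $S_Qg\gamma|_{[0,T_i]}$. The max-law then forces the block $B_{i,T_j}$ to dominate this transported old profile, and hence to equal $R_\gamma(T_j,\cdot)$.

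For step 3 we need a connector from $\gamma(T_j)$ to $S_Qg\gamma(T_i)$. The bounds $d(x_j,gx_i)<\delta$ and $q/Q$ close to $1$ give
\[
d\bigl(\gamma(T_j),S_Qg\gamma(T_i)\bigr)\le qr_i\delta+qr_i\delta\,d(o,gx_i).
\]
Because $g$ fixes $o$ and is an isometry, $d(o,gx_i)=d(o,x_i)$, which is bounded on the sublevel. So the connector cost $K_Cqr_i\delta$ from (H3) still applies. The cell is $C_0\colon x\to S_Qgx$ with $x=\gamma(T_i)$, and its repetition is the concatenation of $(S_Qg)^kC_0=S_{Q^k}g^kC_0$. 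Here the commutation $gS_q=S_qg$ is what makes $(S_Qg)^k=S_{Q^k}g^k$. Each copy is an admissible curve of length $Q^kL$, because $G$ preserves admissibility and cost. Invariance also gives that $g^k$ fixes $o$, so the negative-index endpoints still converge to $o$ and the past has total length $L/(Q-1)$.

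Steps 4--5 of \cref{thm:cell} transfer once we have the identity $R_{S_Q g\beta}(u)=Q\,R_\beta(g^{-1}u)$ for any piece $\beta$. Then in copy $k$ the preceding copy supplies $Q^k g^k\cdot R_\gamma(T_i,\cdot)$, and the current block supplies $Q^kg^k\cdot B_{i,t}$. Their maximum is $Q^kg^k\cdot R_\gamma(t,\cdot)$, whose minimum over $u$ equals $Q^km_\gamma(t)$, since $g^k$ merely permutes $U$. The cost estimate, including $\eta_C(\delta,q)$, is identical.

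I expect the main obstacle to be the hypotheses rather than the estimates. The corollary replaces (H2) by precompactness modulo $G$ plus the bound $\norm{H}_\infty\le B_C$, and (H3) is implicitly kept. The profile bound is presumably what makes $g\cdot H$ an admissible profile and what keeps the error $\norm{H_j-g\cdot H_i}<\delta$ multiplicatively small relative to levels up to $B_C$; I would use it there and in the connector estimate. I would also verify that the $G$-action on $U$ is by homeomorphisms, so that the $\min$ over $U$ is preserved.
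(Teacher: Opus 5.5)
There is a genuine gap in your step 2, and it sits exactly where the hypothesis $\norm{H}_\infty\le B_C$ is needed. The max-law for $\gamma$ reads $R_\gamma(T_j,u)=\max\{R_\gamma(T_i,u),B_{i,T_j}(u)\}$. It involves the \emph{untransformed} old record $R_\gamma(T_i,u)$ in the same direction $u$. The transported history $S_Qg\gamma|_{[0,T_i]}$ is not part of $\gamma$ and does not enter this identity. From the quasi-return you only get $R_\gamma(T_j,u)\ge Q\,R_\gamma(T_i,g^{-1}u)$.

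In \cref{thm:cell}, where $g=\mathrm{Id}$, $Q>1$ immediately gives $R_\gamma(T_j,u)>R_\gamma(T_i,u)$, and this is what forces the block to carry the maximum. With $g\ne\mathrm{Id}$, $Q>1$ is not enough. The record $R_\gamma(T_i,u)$ may exceed $R_\gamma(T_i,g^{-1}u)$ by a factor up to $B_C$. In such directions the old record may still attain the maximum, and nothing prevents $B_{i,T_j}(u)<Q\,R_\gamma(T_i,g^{-1}u)$. Your step 5 then fails: the claim that copy $k-1$ supplies $Q^kg^k\cdot R_\gamma(T_i,\cdot)$ requires precisely $B_{i,T_j}\ge Q\,g\cdot R_\gamma(T_i,\cdot)$. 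Your choice of gap does not rescue this either. The errors $\eta_C(\delta,q)$ and $\eta_C(\delta,q)/q+K_C\delta$ are already small for moderate $q$ (for instance $q=2$) once $\delta$ is small.

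The missing idea is to use the freedom in the gap to force $Q=q(1-\delta)>B_C$; the paper takes $Q>B_C+1$. Since $\min_u(g\cdot H_i)(u)=1$, for every $u$ you then have
\[
R_\gamma(T_j,u)=qr_iH_j(u)\ge Qr_i\,(g\cdot H_i)(u)\ge Qr_i>B_Cr_i\ge r_iH_i(u)=R_\gamma(T_i,u).
\]
The max-law now gives $B_{i,T_j}=R_\gamma(T_j,\cdot)\ge Q\,g\cdot R_\gamma(T_i,\cdot)$. With this in place, the rest of your argument goes through: the connector to $S_Qg\gamma(T_i)$, the identity $(S_Qg)^k=S_{Q^k}g^k$, and the unchanged cost estimates.

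Your guess about the role of $B_C$ is therefore off. It is not needed for admissibility of $g\cdot H$ or for the connector estimate. It is needed only for this strict pointwise domination of the untransformed old profile by the rotated, dilated target.
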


\begin{proof}
Choose a quasi-return for which $\Sigma_j$ is $\delta$-close to $g\Sigma_i$ and whose gap is large enough that
\[
Q:=q(1-\delta)>B_C+1.
\]
Then
\[
qH_j\ge Q(g\cdot H_i),
\qquad
H_i<Q(g\cdot H_i)
\]
pointwise, because $\min(g\cdot H_i)=1$ and $H_i\le B_C$. The max-law forces the new block to recreate the transformed profile $Q(g\cdot H_i)$ by itself. The connector closes the block by joining $\gamma(T_j)$ to $S_Qg\gamma(T_i)$. The cost estimates of \cref{thm:cell} are unchanged, and commutation of $g$ with $S_q$ ensures compatibility of successive copies.
\end{proof}

In the plane, $G=\mathrm{SO}(2)$ gives direct similarities $QR_\Delta$ and the relative equilibria of \cref{sec:relative-equilibria}.

\subsection{Proper metric cones}
A proper metric cone is a proper geodesic space $(X,d,o)$ equipped with the scalings \eqref{eq:scaling} and satisfying
\begin{equation}
d(S_qx,S_rx)=\abs{q-r}\,d(o,x).
\label{eq:metric-cone}
\end{equation}

\begin{theorem}[Geometric verification of the hypotheses]
\label{thm:metric-cone}
Let $(X,d,o)$ be a proper metric cone, let $U$ be compact, and let $\phi:U\times X\to\R$ be continuous, homogeneous of degree one, with $\phi(u,o)=0$ and
\[
\abs{\phi(u,x)-\phi(u,y)}\le L_\phi d(x,y).
\]
Assume that the class of histories satisfies \emph{(H0)} and that geodesic segments are admissible pieces. Then every finite-ratio history satisfies \emph{(H1)--(H3)}. Normalized profiles are also uniformly bounded on ratio sublevels.
\end{theorem}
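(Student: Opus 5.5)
The plan is to verify (H1), (H2) and (H3) one at a time for a fixed history $\gamma$ with $\CR(\gamma)\le C$, and then read off the uniform bound on normalized profiles from the same estimates.

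\textbf{(H1).} This is structural. The max-law \eqref{eq:max-memory} holds because a maximum over a concatenated interval is the maximum of the maxima over the two pieces. Scale covariance of records follows from \eqref{eq:service-homogeneity}: for an arc-length reparametrized dilate $S_q\gamma$, the records satisfy $R_{S_q\gamma}(qt,u)=qR_\gamma(t,u)$. We will also need that $m_\gamma$ is continuous and nondecreasing, so that \cref{lem:phase-ratio} applies. Monotonicity is clear. For continuity, $R_\gamma(\cdot,u)$ is $L_\phi$-Lipschitz in $t$ uniformly in $u$, since $\gamma$ is $1$-Lipschitz and $\phi$ is $L_\phi$-Lipschitz. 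A minimum over $u$ of uniformly Lipschitz functions is again Lipschitz.

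\textbf{(H2).} Fix $\lambda>1$ and work with the normalized states \eqref{eq:normalized-state-components}. The bounds below mirror \eqref{eq:state-bounds}.
\begin{itemize}
\item By \cref{lem:phase-ratio}, $\ell_n=T_n/r_n\le C$. Also $\ell_n\ge 1/L_\phi$, because $m_\gamma(t)\le L_\phi t$ when $\phi(u,o)=0$.
\item Since $d(o,\gamma(T_n))\le T_n$, we get $d(o,x_n)\le C$. Properness then puts $x_n$ in the compact ball $\bar B(o,C)$.
\item For $H_n$, note that $0\le$ is not guaranteed, but $\abs{H_n(u)}\le L_\phi\,d(o,\cdot)/r_n\le L_\phi C$, since every point visited before $T_n$ lies within distance $T_n$ of $o$. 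This is also the uniform bound on profiles asked for in the last sentence of the theorem.
\end{itemize}
The remaining issue is equicontinuity of $H_n$ in $u$. Here $\phi$ is continuous on the compact set $U\times \bar B(o,C)$, hence uniformly continuous there. By homogeneity, $H_n(u)=\max_{s\le T_n}\phi(u,S_{1/r_n}\gamma(s))$, and every point $S_{1/r_n}\gamma(s)$ lies in $\bar B(o,C)$. A maximum over a family of uniformly equicontinuous functions is equicontinuous with the same modulus. Arzelà--Ascoli then gives precompactness of $(H_n)$ in $\norm{\cdot}_\infty$. A product of precompact sequences is precompact for $D_\Sigma$, which proves (H2).

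\textbf{(H3).} Suppose $D_\Sigma(\Sigma_i,\Sigma_j)<\delta$ with $q=r_j/r_i$. In the proof of \cref{thm:cell} the connector joins $\gamma(T_j)=S_{r_j}x_j$ to $S_Q\gamma(T_i)=S_{Qr_i}x_i$, where $Q=q(1-\delta)$. We take a geodesic segment, which is admissible by hypothesis, and bound its length by the triangle inequality:
\[
d(S_{qr_i}x_j,S_{qr_i}x_i)+d(S_{qr_i}x_i,S_{q(1-\delta)r_i}x_i)\le qr_i\delta+q\delta r_i\,d(o,x_i)\le qr_i\delta(1+C).
\]
The first term uses \eqref{eq:scaling}, and the second uses the cone identity \eqref{eq:metric-cone} together with $d(o,x_i)\le C$. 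So $K_C=1+C$ works.

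The main obstacle is the equicontinuity step in (H2). It requires the uniform continuity of $\phi$ on a compact normalized ball, and that is exactly where properness of $X$ and the homogeneity of $\phi$ are both essential. Everything else is bookkeeping with the triangle inequality.
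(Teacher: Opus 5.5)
Your proposal is correct and follows essentially the same route as the paper. It uses the same bounds $\ell_n\le C$, $d(o,x_n)\le C$, $\abs{H_n}\le L_\phi C$, obtains equicontinuity of profiles from uniform continuity of $\phi$ on $U\times\bar B(o,C)$ followed by Arzel\`a--Ascoli, and closes the return with the same two-term triangle inequality and cone identity giving $K_C=1+C$. You additionally check (H1) and the Lipschitz continuity of $m_\gamma$, which the paper leaves implicit.

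One harmless slip: nonnegativity of $H_n$ is in fact guaranteed, since $\min_u H_n=1$ and also $R_\gamma\ge\phi(u,o)=0$. Your bound on $\abs{H_n}$ covers either sign anyway.
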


\begin{proof}
If a finite-ratio history exists, then necessarily $L_\phi>0$. If $\CR(\gamma)\le C$, then at the times $T_n$,
\[
\frac{1}{L_\phi}\le\ell_n\le C,
\qquad
d(o,x_n)\le C,
\qquad
H_n(u)\le L_\phi C.
\]
Indeed, for a constraint at which level $r_n$ is reached at time $T_n$, the inequality $\abs{\phi(u,\gamma(s))}\le L_\phi s$ gives $r_n\le L_\phi T_n$. Positions stay in the compact ball $B(o,C)$. Uniform continuity of $\phi$ on $U\times B(o,C)$ gives equicontinuity of the profiles, and Arzel\`a--Ascoli yields their precompactness.

If $x_i$ and $x_j$ are $\delta$-close after normalization, with $q=r_j/r_i$ and $Q=q(1-\delta)$, then
\begin{align*}
d\bigl(\gamma(T_j),S_Q\gamma(T_i)\bigr)
&\le d(S_{qr_i}x_j,S_{qr_i}x_i)
   +d(S_{qr_i}x_i,S_{Qr_i}x_i)\\
&\le qr_i\delta+(q-Q)r_i d(o,x_i)\\
&\le qr_i\delta(1+C).
\end{align*}
The geodesic segment supplies the required admissible connector.
\end{proof}

\subsection{Normed spaces and the dual sphere}
Let $X$ be a finite-dimensional normed space and $X^*$ its dual. Its dual unit sphere is
\[
\Sph_{X^*}:=\{f\in X^*:\norm{f}_*=1\}.
\]
Every $f\in\Sph_{X^*}$ and $D>0$ define the affine hyperplane $\{x:f(x)=D\}$. With
\[
\phi(f,x):=f(x),
\]
one has $\abs{f(x)-f(y)}\le\norm{x-y}$ and $\phi(f,qx)=q\phi(f,x)$.

\begin{corollary}[Hyperplane search]
\label{cor:hyperplane}
In every finite-dimensional normed space, the value of online search for hyperplanes whose normals belong to a compact set $U\subset\Sph_{X^*}$ is, up to arbitrarily small error, attained by repetition of a homothetic cell.
\end{corollary}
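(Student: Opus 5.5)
The corollary follows by specializing \cref{thm:metric-cone} and then applying \cref{thm:cell}. The space is $X$ with $o=0$, $S_qx=qx$ and $d(x,y)=\norm{x-y}$. The constraint set is the compact $U\subset\Sph_{X^*}$, and the service function is $\phi(f,x)=f(x)$.

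\emph{Step 1: $X$ is a proper metric cone.} Finite dimension gives properness, because closed balls are compact (Heine--Borel). Straight segments are geodesics, so $X$ is geodesic. The scaling axioms \eqref{eq:scaling} are immediate, and
\[
d(S_qx,S_rx)=\norm{(q-r)x}=\abs{q-r}\,\norm{x},
\]
which is \eqref{eq:metric-cone}.

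\emph{Step 2: $\phi$ meets the hypotheses of \cref{thm:metric-cone}.} Continuity of $\phi$ on $U\times X$ follows from
\[
\abs{f(x)-g(y)}\le\norm{f-g}_*\norm{x}+\norm{y-x}.
\]
Homogeneity of degree one and $\phi(f,0)=0$ hold by linearity. The Lipschitz bound holds with $L_\phi=1$, since $\norm{f}_*=1$.

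\emph{Step 3: admissibility (H0).} The histories are locally rectifiable curves from $0$ with arc-length parametrization. This class is stable under restriction, finite concatenation, scaling and monotone reparametrization. It is also stable under the all-scale concatenation: when the negative-index copies have finite total length and endpoints tending to $0$, the concatenation is a locally rectifiable curve starting at $0$. Straight segments are admissible pieces. Hence \cref{thm:metric-cone} gives (H1)--(H3) for every finite-ratio history. I expect the only delicate point to be here, namely checking that the first hitting time of $\{f=D\}$ coincides with the completion time $T_\gamma(d)$, so that the paper's $\CR$ is the search ratio. I would settle it by the intermediate value theorem: the curve starts at $0$, where $f=0<D$, so it meets $\{f=D\}$ exactly when $\max_{s\le t}f(\gamma(s))\ge D$. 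Taking the infimum over $f\in U$ then gives $m_\gamma(t)\ge D$ precisely when every hyperplane of level $D$ has been met by time $t$.

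\emph{Step 4: the hypothesis of \cref{lem:phase-ratio}.} That lemma is used inside \cref{thm:cell}, so I would confirm its assumptions. $m_\gamma$ is nondecreasing because records are running maxima. It is continuous because each record is $1$-Lipschitz in $t$, and a minimum of such functions is again $1$-Lipschitz. It is unbounded when the ratio is finite, because $T_\gamma(d)<\infty$ for every $d$.

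\emph{Step 5: conclusion.} If no finite-ratio history exists, the value is $+\infty$ and the statement is vacuous. Otherwise \cref{thm:cell} applies to every history of ratio at most $C$. Taking the infimum over histories and letting $\eps\to0$ shows that the value equals the infimum over repetitions of homothetic cells.
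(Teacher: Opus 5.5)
Your proposal is correct and follows the paper's route: specialize \cref{thm:metric-cone} to $X$ with $\phi(f,x)=f(x)$, which is $1$-Lipschitz and homogeneous; note that (H0) and admissibility of segments are automatic for locally rectifiable curves; and conclude with \cref{thm:cell}. The extra checks you add are details the paper leaves implicit, and they are sound: the intermediate-value identification of first hitting times with the completion time $T_\gamma(d)$, and the continuity and unboundedness of $m_\gamma$ required by \cref{lem:phase-ratio}.
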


For the class of all locally rectifiable curves, segments are admissible and (H0) is automatic; \cref{thm:metric-cone} therefore applies without any additional assumption.

\section{Geometric interface for planar Shoreline search}
\subsection{Support function and guaranteed level}
For an arc-length parametrized planar path $\gamma$, set
\begin{equation}
K_t:=\conv\gamma([0,t]),
\qquad
h_t(u):=\max_{0\le s\le t}\ip{u}{\gamma(s)},
\qquad
m(t):=\min_{u\in\Sph^1}h_t(u).
\label{eq:support-level}
\end{equation}
Since $0\in K_t$,
\[
m(t)=\inrad_0(K_t),
\]
the radius of the largest disk centered at the origin and contained in $K_t$ \cite{Schneider}.

\begin{proposition}[Support--inradius certificate]
\label{prop:support-certificate}
The Shoreline ratio is
\begin{equation}
\CR(\gamma)=\sup_{t:m(t)>0}\frac{t}{m(t)}.
\label{eq:shoreline-phase}
\end{equation}
\end{proposition}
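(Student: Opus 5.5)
We identify the Shoreline first-hitting time with the completion time of the general framework and then invoke \cref{lem:phase-ratio}. Concretely, take $U=\Sph^1$, $X=\R^2$, and $\phi(u,x)=\ip{u}{x}$. This service is continuous, homogeneous of degree one, and $1$-Lipschitz in $x$. With these choices $R_\gamma(t,u)=h_t(u)$ and $m_\gamma(t)=m(t)$.

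\emph{Step 1: hitting times are completion times.} Since $\gamma(0)=0$ lies strictly on the side $\ip{u}{x}<D$ of $H(u,D)$, the intermediate value theorem gives
\[
\tau_\gamma(u,D)=\inf\{t:\ip{u}{\gamma(t)}\ge D\}=\inf\{t:h_t(u)\ge D\}.
\]
Taking the supremum over $u$ yields $\sup_u\tau_\gamma(u,D)=T_\gamma(D)$. The inequality $\le$ is the only delicate direction. For it, fix $t<\sup_u\tau_\gamma(u,D)$. Then some $u$ has $h_t(u)<D$, so $m(t)<D$ and $t\le T_\gamma(D)$. Conversely, if $t<T_\gamma(D)$, then $m(t)<D$. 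Because $U$ is compact and $h_t$ is continuous, the minimum is attained at some $u$, and $\tau_\gamma(u,D)\ge t$. Swapping the two suprema then gives
\[
\CR(\gamma)=\sup_D T_\gamma(D)/D,
\]
which is \eqref{eq:cr-general}.

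\emph{Step 2: regularity of $m$.} We need the hypotheses of \cref{lem:phase-ratio}. Monotonicity of $m$ is clear. For continuity, note that $\abs{h_t(u)-h_s(u)}\le\abs{t-s}$ by arc length and the Lipschitz bound on $\phi$, so $m$ is $1$-Lipschitz.

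\emph{Step 3: handling unbounded $m$.} Unboundedness of $m$ is the main obstacle, since the lemma assumes it. The fix is to split into cases. If $m$ is bounded by some $M$, then for every $D>M$ we have $T_\gamma(D)=\infty$, so $\CR(\gamma)=\infty$. In that case we must show the right-hand side of \eqref{eq:shoreline-phase} is also infinite.
\begin{itemize}
\item If $m(t)>0$ for some $t$, then $t/m(t)\ge t/M\to\infty$ as $t\to\infty$, since $m(t)>0$ persists for larger $t$ by monotonicity.
\item If $m\equiv0$, the supremum is over the empty set. We adopt the convention $\sup\varnothing=+\infty$ here (or, equivalently, restrict the statement to paths with $m\not\equiv 0$) and state this explicitly.
\end{itemize}
When $m$ is unbounded, \cref{lem:phase-ratio} applies verbatim and gives \eqref{eq:shoreline-phase}. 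The identity $m(t)=\inrad_0(K_t)$ is not needed for the proof; it is only the geometric interpretation.
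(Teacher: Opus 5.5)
Your proof is correct and follows the paper's route: you identify $\sup_u\tau_\gamma(u,D)$ with the completion time $T_\gamma(D)$ of the common level $m(t)$, and then apply \cref{lem:phase-ratio}. Your extra checks fill in hypotheses of the lemma that the paper's one-line proof leaves implicit: that $m$ is $1$-Lipschitz and nondecreasing, and the separate treatment of bounded $m$, where both sides are $+\infty$ provided one adopts the convention $\sup\varnothing=+\infty$ when $m\equiv0$.
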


\begin{quote}\small
\textit{Interpretation.} The convex hull compiles every directional record, and its centered inradius is exactly the distance already guaranteed in all directions. The adversarial optimization over direction, distance, and phase therefore reduces to the synchronized geometric quotient in \eqref{eq:shoreline-phase}.
\end{quote}

\begin{proof}
At time $t$, for each direction $u$, the path has met every line $H(u,D)$ with $0<D\le h_t(u)$. The common completed level is therefore exactly $m(t)$. The conclusion follows from \cref{lem:phase-ratio}.
\end{proof}

\subsection{Exact functional of a cell}
Let $C_0:x\to Qx$, $Q>1$, be a cell of length $L$. Write $K_C$ for its terminal convex hull and $K_C(t)$ for the hull of its prefix of internal length $t\in[0,L]$.

\begin{lemma}[Nonempty interior for renewing cells]
\label{lem:renewing-interior}
Every Shoreline cell constructed by \cref{thm:cell} satisfies $0\in\operatorname{int}K_C$.
\end{lemma}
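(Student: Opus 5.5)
The plan is to use the renewal identity \eqref{eq:exact-block-profile} from step~2 of the proof of \cref{thm:cell}. For Shoreline search, $U=\Sph^1$ and $\phi(u,x)=\ip{u}{x}$, so the record profile of a block is the support function of its convex hull. The cell $C_0$ consists of the block $\gamma([T_i,T_j])$ followed by the connector, so its terminal hull $K_C$ contains the hull of the block. By \eqref{eq:exact-block-profile}, the support function of the block alone is $B_{i,T_j}=R_\gamma(T_j,\cdot)=h_{T_j}$, and it satisfies
\[
\min_{u\in\Sph^1}B_{i,T_j}(u)=r_j>0 .
\]
Since support functions are monotone under inclusion, this gives
\[
h_{K_C}(u)\ge r_j\qquad\text{for every } u\in\Sph^1 .
\]

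Next I invoke the standard characterization: for a compact convex set $K\subset\R^2$, one has $\min_u h_K(u)\ge\rho>0$ if and only if the closed disk of radius $\rho$ centered at $0$ is contained in $K$. The nontrivial direction is the following. If the disk point $y$ with $\norm{y}\le\rho$ were not in $K$, then by Hahn--Banach separation there would be a unit $u$ with
\[
h_K(u)<\ip{u}{y}\le\rho ,
\]
which is a contradiction. This is exactly the identity $m(t)=\inrad_0(K_t)$ already recorded before \cref{prop:support-certificate}. It yields $B(0,r_j)\subset K_C$, hence $0\in\operatorname{int}K_C$.

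The only point requiring care is that the cell is normalized as $C_0:x\to Qx$ in physical coordinates rather than after a rescaling. If the paper later rescales the cell, say dividing by $r_i$, then the disk becomes $B(0,q)$, which still has positive radius. So the conclusion holds in any normalization, because scaling preserves interior points of the origin.

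I also need to check that the argument does not depend on $0$ lying on the cell itself. It does not: the hull of the block contains the centered disk directly, with no use of the origin from the past. The strict inequality $Q>1$ was used in \cref{thm:cell} to rule out supply by the old profile, so this step inherits that hypothesis and needs nothing further.
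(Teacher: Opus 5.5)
Your proof is correct and follows the same approach as the paper. Renewal gives the block alone a uniformly positive support: you use the sharper identity \eqref{eq:exact-block-profile}, with minimum $r_j$, where the paper cites \eqref{eq:block-renewal}, which gives at least $Qr_i$. A compact convex set whose support function is bounded below by $\rho>0$ then contains the centered disk of radius $\rho$, and your separation argument just spells out this final step, which the paper states without detail.
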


\begin{proof}
Renewal \eqref{eq:block-renewal} imposes a uniformly positive support in every direction on the new block. Its convex hull therefore contains a disk centered at the origin.
\end{proof}

Assume henceforth that $0\in\operatorname{int}K_C$. The convex hull of all preceding copies is $Q^{-1}K_C$, because $Q^{-j}K_C\subset Q^{-1}K_C$ for $j\ge1$. The level guaranteed at phase $t$ is
\begin{equation}
m_C(t):=\min_{u\in\Sph^1}
\max\left\{Q^{-1}h_{K_C}(u),h_{K_C(t)}(u)\right\}.
\label{eq:cell-level}
\end{equation}

\begin{proposition}[Exact cell functional]
\label{prop:cell-functional}
The ratio of the all-scale repetition of $C_0$ is
\begin{equation}
\cellratio_Q(C_0):=\CR(\Gamma_{C_0})
=\sup_{0\le t\le L}
\frac{L/(Q-1)+t}{m_C(t)}.
\label{eq:cell-functional}
\end{equation}
\end{proposition}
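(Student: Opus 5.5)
The plan is to reduce \eqref{eq:cell-functional} to the phase form of \cref{prop:support-certificate}, applied to the path $\Gamma_{C_0}$, and then to use self-similarity to restrict the supremum to a single copy. First I parametrize $\Gamma_{C_0}$ by arc length. The copy $S_{Q^k}C_0$ starts at time $\sum_{j<k}Q^jL=Q^kL/(Q-1)$ and has length $Q^kL$. So every time $s>0$ can be written uniquely as $s=Q^k\bigl(L/(Q-1)+t\bigr)$ with $k\in\mathbb Z$ and $t\in[0,L)$, and at that time the path is at $Q^kC_0(t)$. Since $\Gamma_{C_0}$ starts at $o=0$ and $0\in\operatorname{int}K_C$, the guaranteed level $m(s)$ is positive for all $s>0$.

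Next I compute the convex hull of the prefix up to time $s$. It is the closure of the convex hull of $\bigcup_{j<k}Q^jK_C\cup Q^kK_C(t)$. Because $0\in K_C$, the sets $Q^jK_C$ are nested increasing in $j$, so the union over $j<k$ equals $Q^{k-1}K_C$, and it contains the origin. The support function of the convex hull of a union is the pointwise maximum of the support functions. Hence
\[
h_s(u)=Q^k\max\bigl\{Q^{-1}h_{K_C}(u),\,h_{K_C(t)}(u)\bigr\},
\]
and therefore $m(s)=Q^km_C(t)$ by \eqref{eq:cell-level}. This is the exact, not merely lower-bound, version of the domination argument in step 5 of the proof of \cref{thm:cell}.

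Substituting into \eqref{eq:shoreline-phase}, the factor $Q^k$ cancels. This gives
\[
\CR(\Gamma_{C_0})=\sup_{k}\sup_{0\le t<L}\frac{L/(Q-1)+t}{m_C(t)},
\]
and the dependence on $k$ disappears. The remaining point is the endpoint $t=L$, which I handle by consistency of the two parametrizations. At $t=L$ we have $K_C(L)=K_C$, so $m_C(L)=m_{K_C}:=\min_u h_{K_C}(u)$, and the quotient is $\bigl(QL/(Q-1)\bigr)/m_{K_C}$. This is exactly the quotient at $t=0$ of copy $k+1$, where the level is $Q\,m_C(0)=Q\cdot Q^{-1}m_{K_C}$ (using $K_C(0)=\{x\}$ with $h_{\{x\}}(u)=\ip{u}{x}\le h_{K_C}(u)$). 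So including $t=L$ changes nothing, and \eqref{eq:cell-functional} follows.

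The step I expect to need the most care is justifying the use of \cref{prop:support-certificate}, that is, \cref{lem:phase-ratio}. This requires $m$ to be continuous, nondecreasing, and unbounded along $\Gamma_{C_0}$. Monotonicity is clear, and unboundedness follows from $m(s)=Q^km_C(t)\ge Q^{k-1}m_{K_C}>0$. For continuity, $h_s(u)$ is $1$-Lipschitz in $s$ uniformly in $u$ because the path is parametrized by arc length, so $m$ is $1$-Lipschitz. At $s\to0$ we get $m\to0$ because the negative copies shrink to the origin. With these properties the reduction above is complete.
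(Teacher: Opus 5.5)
Your proof is correct and follows the paper's argument: after rescaling by $Q^k$, the older copies contribute length $L/(Q-1)$ and hull $Q^{-1}K_C$, support functions of unions combine by maximum, and \cref{prop:support-certificate} finishes the proof; you also check the continuity, monotonicity and unboundedness hypotheses of \cref{lem:phase-ratio}, which the paper leaves implicit. One small slip: in the endpoint step the inequality you need is $\ip{u}{x}\le Q^{-1}h_{K_C}(u)$, which holds because $x=Q^{-1}C_0(L)\in Q^{-1}K_C$ (your stated reason $x\in K_C$ is not enough), although that step is superfluous anyway, since your hull computation applies verbatim at $t=L$.
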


\begin{quote}\small
\textit{Interpretation.} Before the current cell, all historical cost is $L/(Q-1)$ and all historical support is summarized by the single contracted hull $Q^{-1}K_C$. Formula \eqref{eq:cell-functional} evaluates the true online ratio at every internal phase.
\end{quote}

\begin{proof}
Before the current cell, the total length of older copies is $L/(Q-1)$ and their convex hull is $Q^{-1}K_C$. At phase $t$, the current prefix adds $K_C(t)$. Applying \cref{prop:support-certificate} gives exactly \eqref{eq:cell-functional}.
\end{proof}

The phase formulation \eqref{eq:cell-functional} is used throughout; it requires no inverse convention on possible plateaus of $m_C$.

\begin{remark}[Reading the log-polar panel]
\label{rem:log-polar}
The canonical square in \cref{fig:overview} is a fundamental domain of the log-polar quotient, not an angular-monotonicity hypothesis. A general homothetic cell may reverse angle, cross the same vertical line several times, and have several radii at the same angle. The curves drawn as graphs are illustrative; neither \cref{thm:cell} nor \eqref{eq:cell-functional} assumes a single-valued representation $\log_Q r=y(\theta)$.
\end{remark}

\begin{figure}[H]
\centering
\includegraphics[width=\textwidth]{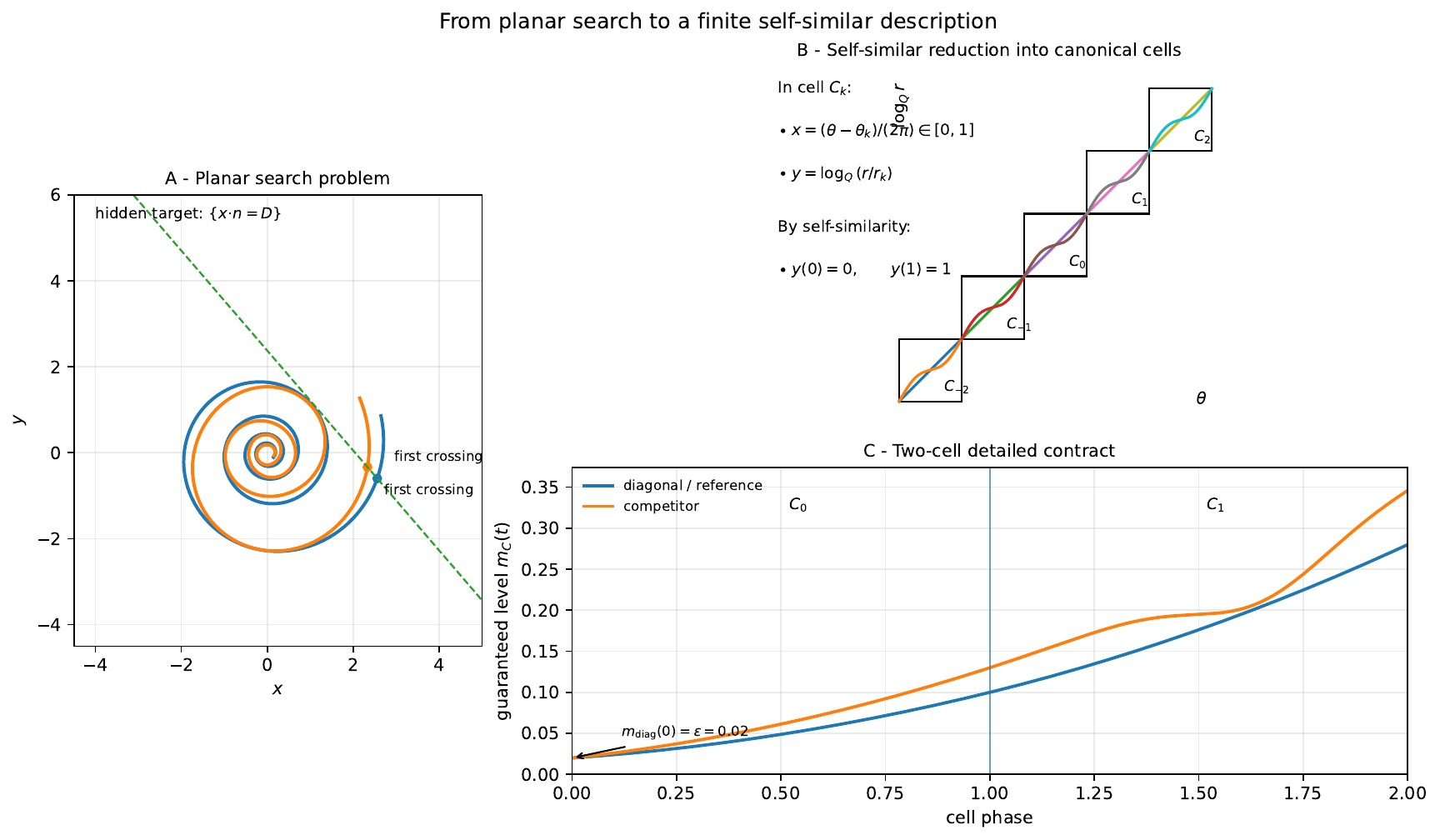}
\caption{From planar search to a finite self-similar description. Panel A compares two illustrative paths against the same hidden target line. Panel B displays five consecutive canonical domains in the log-polar quotient $(\theta,\log_Q r)$;  Panel C plots the guaranteed level $m_C(t)$ over two homologous cells. The plate is generated entirely by code.}
\label{fig:overview}
\end{figure}

\section{From the continuum to finite polygonal cells}
\label{sec:polygonalization}
This section removes the final geometric infinitude. It gives, in turn, an exact evaluator at fixed vertex count, compact parameter bounds, and a conservative polygonal approximation.

\subsection{Exact evaluator}
Consider a polygonal cell
\begin{equation}
p_0,p_1,\ldots,p_N\in\R^2,
\qquad
p_N=Qp_0,
\qquad
Q>1,
\label{eq:polygonal-cell}
\end{equation}
and set
\[
\lambda:=Q^{-1},
\qquad
P:=\conv\{p_0,\ldots,p_N\},
\qquad
\ell_e:=\norm{p_e-p_{e-1}},
\qquad
\ell:=\sum_{e=1}^N\ell_e.
\]
By homogeneity, normalize
\begin{equation}
\inrad_0(P)=1,
\label{eq:inradius-normalization}
\end{equation}
which implies $0\in\operatorname{int}P$. During edge $e$, at phase $\alpha\in[0,1]$, define
\begin{align}
z_{e,\alpha}&:=p_{e-1}+\alpha(p_e-p_{e-1}),
\label{eq:z-edge}\\
K_{e,\alpha}&:=\conv\bigl(\lambda P,p_0,\ldots,p_{e-1},z_{e,\alpha}\bigr),
\label{eq:K-edge}\\
L_{e,\alpha}&:=\frac{\lambda}{1-\lambda}\ell+
\sum_{i<e}\ell_i+\alpha\ell_e.
\label{eq:L-edge}
\end{align}

\begin{theorem}[Exact polygonal evaluator]
\label{thm:polygonal-evaluator}
The ratio of the repetition of \eqref{eq:polygonal-cell} is
\begin{equation}
\CR(p,Q)=
\max_{1\le e\le N}\sup_{0\le\alpha\le1}
\frac{L_{e,\alpha}}{\inrad_0(K_{e,\alpha})}.
\label{eq:polygonal-evaluator}
\end{equation}
\end{theorem}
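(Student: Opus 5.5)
The plan is to specialize \cref{prop:cell-functional} to the polygonal cell \eqref{eq:polygonal-cell} and then parametrize its phase variable edge by edge. First I will check that the hypotheses of \cref{prop:cell-functional} hold. The normalization \eqref{eq:inradius-normalization} gives $0\in\operatorname{int}P$, and $P$ is exactly the terminal hull $K_C$ of the cell, since the hull of a polygonal path is the hull of its vertices. The cell joins $p_0$ to $Qp_0$, so its all-scale repetition $\Gamma_C$ is a locally rectifiable curve starting at $0$: the negative copies have total length $\lambda\ell/(1-\lambda)=\ell/(Q-1)<\infty$ and endpoints $Q^{-k}p_0\to0$. This is the setting of \cref{prop:cell-functional}, which gives
\[
\CR=\sup_{0\le t\le \ell}\frac{\ell/(Q-1)+t}{m_C(t)}.
\]

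Next I will reparametrize the internal arc-length $t\in[0,\ell]$ by pairs $(e,\alpha)$, with $t=\sum_{i<e}\ell_i+\alpha\ell_e$. Under this substitution the numerator becomes $L_{e,\alpha}$ in \eqref{eq:L-edge}, using $\ell/(Q-1)=\lambda\ell/(1-\lambda)$. For the denominator, I will show $m_C(t)=\inrad_0(K_{e,\alpha})$. Two facts are needed.
\begin{enumerate}
\item The prefix hull $K_C(t)$ equals $\conv\{p_0,\ldots,p_{e-1},z_{e,\alpha}\}$, because a polygonal prefix is the union of segments between these points.
\item The pointwise maximum of support functions is the support function of the convex hull of the union, so
\[
\max\{h_{\lambda P},h_{K_C(t)}\}=h_{K_{e,\alpha}}.
\]
\end{enumerate}
Since $0\in\lambda P\subset K_{e,\alpha}$, the minimum over $\Sph^1$ of $h_{K_{e,\alpha}}$ is exactly $\inrad_0(K_{e,\alpha})$, by the identity quoted after \eqref{eq:support-level}.

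Points with $\alpha=1$ on edge $e$ and $\alpha=0$ on edge $e+1$ describe the same phase. Taking the supremum over the union of the edges therefore reproduces the supremum over $t$, and it splits as a maximum over finitely many $e$ of a supremum over $\alpha$.

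I expect the main obstacle to be the careful justification that \cref{prop:cell-functional} applies verbatim. In particular, the hull of all preceding copies must be exactly $\lambda P$ and not something larger, and the infimum at $t=0$ must be handled. For the first point, I would note that $\lambda^jP\subset\lambda P$ for $j\ge1$ by convexity with $0\in P$. For the second, at $t=0$ the level is $\inrad_0(\lambda P)=\lambda>0$, so the ratio is finite there and no degenerate phase with $m=0$ arises. Finally, the denominator is bounded below by $\lambda$ and depends continuously on $\alpha$, since hulls vary continuously in Hausdorff distance. Hence the maximum over $e$ is finite, although the supremum in $\alpha$ need not be attained, and the formula \eqref{eq:polygonal-evaluator} follows.
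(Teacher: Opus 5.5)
Your proposal is correct and follows essentially the paper's own argument: the past contributes the hull $\lambda P$ and length $\lambda\ell/(1-\lambda)$, and the current prefix contributes $p_0,\ldots,p_{e-1},z_{e,\alpha}$. The support--inradius certificate then does the rest; you route this through \cref{prop:cell-functional}, which is itself proved by exactly these observations. One small correction to your closing aside: the quotient is continuous in $\alpha$ on the compact interval $[0,1]$ with denominator at least $\lambda>0$, so the supremum over $\alpha$ is in fact attained, though this does not affect the formula.
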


\begin{proof}
Before the current cell, the historical convex hull is $\lambda P$ and the historical length is $\lambda\ell/(1-\lambda)$. During edge $e$, the history adds exactly the already visited vertices and $z_{e,\alpha}$. Formula \eqref{eq:polygonal-evaluator} follows from \cref{prop:support-certificate}.
\end{proof}

\subsection{Compactness at fixed complexity}
\begin{proposition}[Compact bounds]
\label{prop:compact-bounds}
Fix $N$ and $U>2$. Every polygonal cell with $N$ nonzero edges satisfying \eqref{eq:inradius-normalization} and $\CR(p,Q)\le U$ obeys
\begin{equation}
2\le\ell\le U,
\qquad
\max_i\norm{p_i}\le U-1,
\qquad
\frac{1}{U^N-1}\le\lambda\le1-\frac{2}{U}.
\label{eq:compact-bounds}
\end{equation}
The corresponding sublevel is therefore compact after zero-length subdivisions are allowed, and its infimum is attained.
\end{proposition}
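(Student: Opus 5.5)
The approach is to read all four bounds off the exact evaluator of \cref{thm:polygonal-evaluator}, using two facts repeatedly. First, since $0\in\operatorname{int}P$ and $\lambda<1$, we have $\lambda P\subset P$. Hence the hull at the end of the cell, $K_{N,1}$, equals $P$, and every intermediate hull $K_{e,\alpha}$ is contained in $P$, so $\inrad_0(K_{e,\alpha})\le 1$. Second, the support function $h_{\conv(K\cup\{z\})}(u)=\max\{h_K(u),\ip{u}{z}\}$ is $1$-Lipschitz in $z$. Therefore $\alpha\mapsto\inrad_0(K_{e,\alpha})$ is continuous and grows by at most $\ell_e$ along edge $e$.

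\emph{Bounds on $\ell$ and the upper bound on $\lambda$.} At the terminal phase $e=N$, $\alpha=1$, the evaluator gives
\[
\frac{\lambda\ell}{1-\lambda}+\ell=\frac{\ell}{1-\lambda}\le U\cdot\inrad_0(P)=U .
\]
Hence $\ell\le U(1-\lambda)\le U$. For the lower bound, $P$ contains the unit disk, so it has width at least $2$ in every direction. Its extreme points are among $p_0,\dots,p_N$, and the convex hull of a polygonal path has diameter at most the path's length, so $\ell\ge 2$. Combining the two displays gives $2\le U(1-\lambda)$, i.e.\ $\lambda\le 1-2/U$.

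\emph{Bound on $\norm{p_i}$.} Fix $i$ with $p_i\neq0$ and set $u:=-p_i/\norm{p_i}$. Since $h_P(u)\ge1$, some vertex $y$ of the full history up to the end of the cell satisfies $\ip{u}{y}\ge1$; this uses $\lambda P\subset P$ and the fact that the extreme points of $P$ are path vertices. The full history starts at $o$ and visits both $p_i$ and $y$, in one order or the other. In either case its length is at least $\ip{-u}{p_i}-\ip{-u}{y}\ge\norm{p_i}+1$. That length is at most $\ell/(1-\lambda)\le U$, so $\norm{p_i}\le U-1$.

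\emph{Lower bound on $\lambda$ (main obstacle).} Set $\rho_{e}:=\inrad_0(K_{e,1})$ and $\rho_0:=\inrad_0(\conv(\lambda P,p_0))$, which is comparable to $\lambda$ since $p_0=\lambda p_N\in\lambda P$. Then $\rho_N=1$. The plan is to prove a one-step growth inequality of the form $\rho_e\le U\rho_{e-1}$, up to the additive constants needed to produce exactly $U^N-1$. The key input is the ratio constraint at the start of edge $e$, namely $L_{e,0}\le U\rho_{e-1}$, together with $\rho_e\le\rho_{e-1}+\ell_e$.

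The difficulty is that $\ell_e$ is not directly controlled by $L_{e,0}$. I would first try to bound $\ell_e$ through the ratio constraint at $\alpha=1$: $L_{e,0}+\ell_e\le U\rho_e$. I would combine this with a planar argument showing that adding a single point $z$ to a convex body containing $0$ cannot raise the centered inradius beyond a multiple of $\max\{\rho_{e-1},\,\norm{z}-\ell_e\}$. The reason is that directions roughly opposite to $z$ gain nothing from $z$. Iterating over the $N$ edges gives $1\le U^N\rho_0$ up to constants. Since $\rho_0$ is at most of order $\lambda$ plus $\lambda\ell/(1-\lambda)$, this yields $\lambda\ge1/(U^N-1)$.

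\emph{Compactness.} With all four bounds in hand, the sublevel $\{\CR(p,Q)\le U\}$ sits in a bounded region of the coordinates $(p_0,\dots,p_N,\lambda)$, with $\lambda$ kept away from $0$ and $1$. Allowing zero-length edges makes this region closed. The evaluator \eqref{eq:polygonal-evaluator} is lower semicontinuous there: it is a supremum of ratios of continuous functions, and its denominators are bounded below by $\lambda\,\inrad_0(P)=\lambda>0$. Hence the sublevel is compact and the infimum is attained.
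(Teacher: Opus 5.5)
Your bounds $2\le\ell\le U$, $\lambda\le 1-2/U$ and $\max_i\norm{p_i}\le U-1$ are correct. Using the terminal phase instead of the initial one, and the whole history instead of the cell, are harmless variants of the paper's argument. Your lower-semicontinuity argument for compactness is also fine. The genuine gap is the lower bound $\lambda\ge 1/(U^N-1)$, which you leave as a plan with untracked constants. Your compactness step depends on exactly this bound, since without it a minimizing sequence could have $\lambda\to0$.

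Moreover, the planar lemma you propose is false as a statement about convex bodies. Take $K=\conv\{(\pm1,-\eps),(0,1)\}$, $p_{e-1}=(-1,-\eps)$ and $z=(0,-10)$. Then $\inrad_0(K)=\eps$ and $\norm{z}-\ell_e<0$, yet $\inrad_0\conv(K\cup\{z\})\approx 1/\sqrt2$. A single edge can raise the centered inradius far beyond any multiple of the previous inradius. What it cannot do is raise it beyond the previously \emph{travelled length}.

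The missing idea is to test the inradius at the end of edge $e$ in the direction $u:=-(p_e-p_{e-1})/\ell_e$ opposite to that edge.
\begin{itemize}
\item No point of edge $e$ increases support in direction $u$.
\item Since $p_0=\lambda p_N\in\lambda P$ and the path is $1$-Lipschitz, every earlier vertex satisfies $\ip{u}{p_j}\le h_{\lambda P}(u)+S_{e-1}\le \lambda R+S_{e-1}$. Here $S_e:=\ell_1+\cdots+\ell_e$ and $R:=\max_i\norm{p_i}$.
\item Hence $\inrad_0(K_{e,1})\le\lambda R+S_{e-1}$.
\item The ratio constraint at the end of edge $e$ then gives the linear recursion $S_e\le U(\lambda R+S_{e-1})$.
\item Iterating yields $\ell=S_N\le \lambda RU(U^N-1)/(U-1)$.
\end{itemize}

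To reach the stated constant you also need $R\le\ell-1$. Your projection argument gives this if you run it inside the cell alone: the vertex attaining $h_P(u)\ge1$ is some $p_j$ of the cell, because $P=\conv\{p_0,\ldots,p_N\}$. Your full-history version only gives the weaker $\norm{p_i}+1\le\ell/(1-\lambda)$. Then $\ell/(\ell-1)\ge U/(U-1)$ for $\ell\le U$ gives $\lambda\ge 1/(U^N-1)$.

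Your multiplicative scheme can be rescued by the same observation, because $\inrad_0(K_{e,1})\le\lambda R+S_{e-1}\le L_{e,0}\le U\,\inrad_0(K_{e,0})$. However, iterating from $\inrad_0(\lambda P)=\lambda$ only gives $\lambda\ge U^{-N}$. That suffices for compactness but not for the bound stated in the proposition.
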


\begin{proof}
At the beginning of the cell, the past has inradius $\lambda$ and cost $\lambda\ell/(1-\lambda)$, hence $\ell/(1-\lambda)\le U$. Since $B_2\subset P$, the cell joins two points separated by at least $2$, so $\ell\ge2$ and $\lambda\le1-2/U$. If $R:=\max_i\norm{p_i}$, a direction opposite a farthest vertex provides a point of the hull with support at most $-1$; thus $\ell\ge R+1$ and $R\le U-1$.

For the lower bound on $\lambda$, write $S_e:=\ell_1+\cdots+\ell_e$ and choose, at the end of edge $e$, a unit direction $u$ opposite to $p_e-p_{e-1}$. Since $p_N=Qp_0$ and $\lambda=Q^{-1}$, one has $p_0=\lambda p_N\in\lambda P$. For every previously visited vertex $p_j$, $j<e$,
\[
u\cdot p_j\le u\cdot p_0+\norm{p_j-p_0}
\le h_{\lambda P}(u)+S_{e-1}
\le\lambda R+S_{e-1}.
\]
The new point does not increase support in direction $u$. Hence the available support at the end of edge $e$ is at most $\lambda R+S_{e-1}$, and the ratio constraint gives
\[
S_e\le U(\lambda R+S_{e-1}).
\]
Iteration yields
\[
S_e\le\lambda R U\frac{U^e-1}{U-1}.
\]
At $e=N$, with $S_N=\ell$, $R\le\ell-1$, and $\ell\le U$,
\[
\lambda\ge
\frac{\ell(U-1)}{RU(U^N-1)}
\ge
\frac{\ell(U-1)}{(\ell-1)U(U^N-1)}
\ge
\frac{1}{U^N-1},
\]
where the last inequality uses $\ell/(\ell-1)\ge U/(U-1)$ for $2\le\ell\le U$.
This proves \eqref{eq:compact-bounds}.
\end{proof}

\subsection{One-sided polygonalization}
\begin{theorem}[Conservative polygonal approximation]
\label{thm:polygonalization}
Let $C_0$ be a rectifiable cell such that $m_C(t)\ge1$ at every phase and $L(C_0)\le L_{\max}$. Partition it into arcs of length at most
\[
h:=\frac{L_{\max}}{N}<1
\]
and replace each arc by its chord while preserving the cutting points. The resulting polygonal cell $C_N$ satisfies
\begin{equation}
\cellratio_Q(C_N)\le\frac{\cellratio_Q(C_0)}{1-h}.
\label{eq:polygonalization-bound}
\end{equation}
In particular, if $\cellratio_Q(C_0)\le C_{\mathrm{ref}}$, then
\[
\cellratio_Q(C_N)
\le \cellratio_Q(C_0)+\frac{C_{\mathrm{ref}}h}{1-h}.
\]
\end{theorem}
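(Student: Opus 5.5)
Compare $C_N$ with $C_0$ phase by phase through the exact cell functional of \cref{prop:cell-functional}. The idea is to bound costs from above and supports from below, each time measuring against the \emph{entire} original arc of $C_0$.

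Let the cutting points be $0=t_0<t_1<\dots<t_N=L$, with arcs of length $\le h$. Each chord is no longer than its arc, so the length $L_N$ of $C_N$ satisfies $L_N\le L$. The same holds for every prefix: the polygonal prefix ending at a point of chord $k$ has length at most $t_{k-1}+(\text{portion of chord }k)\le t_{k-1}+h$. The cutting points are preserved, so $C_N$ still joins $x$ to $Qx$ and the same $Q$ applies. Its past cost is therefore $L_N/(Q-1)\le L/(Q-1)$.

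For support, we show that the hull $K_{C_N}$ is not too small compared to $K_C$, and that the current prefix hull is not too small either. The terminal hull is the key point. Every point $\gamma(s)$ on arc $k$ lies within distance $h/2\le h$ of the chord's endpoints, so every support value satisfies $h_{K_{C_N}}(u)\ge h_{K_C}(u)-h$. For the prefix hull at polygonal phase $\tau$ on chord $k$, we use the \emph{conservative} comparison of the statement: we compare with the original phase $t_{k-1}$, the start of the arc. The polygonal prefix contains all cutting points $C_0(t_0),\dots,C_0(t_{k-1})$, and every original point up to time $t_{k-1}$ lies within $h$ of one of them. Hence $h_{K_{C_N}(\tau)}(u)\ge h_{K_C(t_{k-1})}(u)-h$. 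Combining the two and taking the minimum over $u$ gives $m_{C_N}(\tau)\ge m_C(t_{k-1})-h\ge (1-h)\,m_C(t_{k-1})$, where the last step uses $m_C\ge1$. The contracted past contributes $Q^{-1}(h_{K_C}-h)\ge Q^{-1}h_{K_C}-h$ as well, so the same bound survives inside the max in \eqref{eq:cell-level}.

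The cost side is where the argument may fail. The polygonal cost at $\tau$ is $\le L/(Q-1)+t_{k-1}+h$, whereas the admissible comparison cost at the original phase $t_{k-1}$ is only $L/(Q-1)+t_{k-1}$. This extra $h$ is exactly why the statement speaks of ``the time at which the whole original arc is already available''. The fix is to compare cost with the original phase $t_k$ (end of the arc, cost $\ge$ polygonal cost), and support with what the polygon has already secured. To do this I would first try to show that the support of $C_0$ up to $t_k$ exceeds the polygonal support up to $\tau$ by at most $h$. That holds because every point of arc $k$ is within $h$ of its starting cutting point, which the polygon has already visited. We then get $m_{C_N}(\tau)\ge m_C(t_k)-h\ge(1-h)m_C(t_k)$, and the cost satisfies $\le L/(Q-1)+t_k$. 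Dividing yields $\cellratio_Q(C_N)\le\cellratio_Q(C_0)/(1-h)$. The second inequality is then the algebraic identity $\frac{c}{1-h}=c+\frac{ch}{1-h}$ together with $c\le C_{\mathrm{ref}}$. The main obstacle I expect is checking carefully that the terminal hull loss (which enters through $Q^{-1}K$) and the prefix loss combine into a single additive $h$ rather than $2h$. The step $h_{K_{C_N}}\ge h_{K_C}-h$ and scaling by $Q^{-1}<1$ should make this work.
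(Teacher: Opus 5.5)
Your argument is correct: once you compare a point on chord $k$ with the end $t_k$ of the original arc, it is the paper's proof. The polygonal cost is at most $L/(Q-1)+t_k$, the current prefix hull and the contracted past hull lose at most $h$ and $Q^{-1}h\le h$ respectively, and $m_C\ge1$ then gives $m_{C_N}\ge m_C(t_k)-h\ge(1-h)\,m_C(t_k)$. The $2h$ concern does not arise, since the maximum in \eqref{eq:cell-level} commutes with subtracting the common constant $h$; you can simply discard the initial comparison with $t_{k-1}$.
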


\begin{quote}\small
\textit{Interpretation.} Each short arc is replaced by its chord, but a point on that chord is compared with the end of the corresponding original arc, when the entire arc is already present in the historical hull. This one-sided comparison prevents artificial underestimation of the online cost.
\end{quote}

\begin{proof}
Parametrize the original cell by arc length and choose
\[
0=s_0<s_1<\cdots<s_N=L,
\qquad
s_i-s_{i-1}\le h,
\qquad
p_i:=\gamma(s_i).
\]
Let $z$ be a point on the $i$th polygonal chord. Compare it with original time $s_i$, when the whole corresponding arc is already available. Every point of an earlier arc lies within distance $h$ of a visited cutting point, and every point of the current arc lies within distance $h$ of $p_{i-1}$. Therefore
\begin{equation}
\conv\gamma([0,s_i])
\subset
\conv(p_0,\ldots,p_{i-1},z)+hB_2.
\label{eq:hull-neighborhood}
\end{equation}
The same argument for the terminal cell gives $P\subset P_N+hB_2$, hence $Q^{-1}P\subset Q^{-1}P_N+hB_2$. Thus the original historical hull at time $s_i$ lies in the $h$-neighborhood of the polygonal historical hull at $z$, and
\begin{equation}
m_N(z)\ge m(s_i)-h.
\label{eq:inradius-loss}
\end{equation}

Define the total historical costs
\[
A(s):=\frac{L}{Q-1}+s,
\qquad
A_N(z):=\frac{L_N}{Q-1}+t_N(z),
\]
where $L_N$ is the polygonal cell length and $t_N(z)$ is polygonal arc length from $p_0$ to $z$.
Chords do not lengthen any arc, so $A_N(z)\le A(s_i)$. Since $m(s_i)\ge1$,
\[
\frac{A_N(z)}{m_N(z)}
\le
\frac{A(s_i)}{m(s_i)-h}
\le
\frac{\cellratio_Q(C_0)}{1-h}.
\]
Taking the supremum over all edges proves \eqref{eq:polygonalization-bound}.
\end{proof}

\section{Semialgebraic decision at fixed complexity}
Fix $N$ and a rational threshold $c>0$. The variables are $Q$, $\lambda$, the vertex coordinates, the edge lengths, the phase $\alpha$, and a unit direction $u$. The relations
\[
Q\lambda=1,
\qquad
p_N=Qp_0,
\qquad
d_e^2=\norm{p_e-p_{e-1}}^2,
\qquad
d_e>0,
\qquad
\norm{u}^2=1
\]
are polynomial. The support of $K_{e,\alpha}$ is the maximum of a finite list of scalar products, so the inequality
\[
h_{K_{e,\alpha}}(u)\ge\frac{L_{e,\alpha}}{c}
\]
is a finite disjunction of polynomial inequalities. The gauge $\inrad_0(P)=1$ is also first-order. An explicit encoding is given in Appendix~\ref{app:semialgebraic}.

\begin{proposition}[Decidability at fixed $N$]
\label{prop:qe}
For fixed $N$ and $c\in\mathbb{Q}_{>0}$, the existence of a polygonal cell with exactly $N$ nonzero edges, normalized by $\inrad_0(P)=1$ and satisfying $\CR(p,Q)\le c$, is expressible by a first-order formula over the reals with rational coefficients. It is therefore decidable by quantifier elimination \cite{BasuPollackRoy}.
\end{proposition}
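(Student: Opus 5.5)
We write an explicit existential sentence $\Phi_{N,c}$ in the language of ordered fields with rational constants, and then appeal to Tarski--Seidenberg quantifier elimination (in the effective form of \cite{BasuPollackRoy}). Since the sentence has no free variables, elimination reduces it to a variable-free Boolean combination of rational comparisons, which can be evaluated. The work therefore lies in making the sentence say exactly ``$\CR(p,Q)\le c$ with $\inrad_0(P)=1$'', with $\CR(p,Q)$ as given by the exact evaluator of \cref{thm:polygonal-evaluator}.

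\emph{Outer existential block.} We quantify existentially over $Q,\lambda$, the points $p_0,\dots,p_N\in\R^2$, and the edge lengths $d_1,\dots,d_N$. These are subject to the polynomial constraints
\[
Q\lambda=1,\quad Q>1,\quad p_N=Qp_0,\quad d_e^2=\norm{p_e-p_{e-1}}^2,\quad d_e>0.
\]
Then $\ell=\sum_e d_e$ and $L_{e,\alpha}=\frac{\lambda}{1-\lambda}\ell+\sum_{i<e}d_i+\alpha d_e$ are polynomial after clearing the positive denominator $1-\lambda$. No square roots are needed, because each $d_e$ is a variable pinned down by its square together with $d_e>0$.

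\emph{Gauge.} For a convex set $K\ni0$ we use
\[
\inrad_0(K)\ge r \iff \forall u\,\bigl(\norm{u}^2=1\Rightarrow h_K(u)\ge r\bigr),
\]
with $h_K(u)=\max_k\ip{u}{v_k}$ over a finite generating list $v_k$. The condition $h_K(u)\ge r$ is the disjunction $\bigvee_k\ip{u}{v_k}\ge r$. We encode $\inrad_0(P)=1$ as the conjunction of two parts. The first is $\forall u\,(\norm{u}^2=1\Rightarrow\bigvee_i\ip{u}{p_i}\ge1)$. The second is $\exists u\,(\norm{u}^2=1\wedge\bigwedge_i\ip{u}{p_i}\le1)$, which holds because the minimum over the compact circle is attained.

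\emph{Ratio constraint.} By \eqref{eq:polygonal-evaluator}, $\CR(p,Q)\le c$ holds if and only if $\inrad_0(K_{e,\alpha})\ge L_{e,\alpha}/c$ for all $e$ and all $\alpha\in[0,1]$. We express this as
\[
\bigwedge_{e=1}^N\forall\alpha\,\forall u\,\Bigl(0\le\alpha\le1\wedge\norm{u}^2=1\Rightarrow\bigvee_{v\in V_{e,\alpha}}c\,\ip{u}{v}\ge L_{e,\alpha}\Bigr),
\]
where $V_{e,\alpha}=\{\lambda p_0,\dots,\lambda p_N,p_0,\dots,p_{e-1},z_{e,\alpha}\}$ and $z_{e,\alpha}$ is the polynomial expression \eqref{eq:z-edge}. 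The case $L_{e,\alpha}=0$ cannot occur, since $\ell\ge2$.

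The step needing the most care is the exact match of the sentence with \eqref{eq:polygonal-evaluator}, in two respects. First, the evaluator is a supremum, and $\sup\le c$ is equivalent to a pointwise $\le c$ at every $(e,\alpha)$, so no strictness issue arises. Second, a ratio with $\inrad_0(K_{e,\alpha})=0$ must count as a violation. The pointwise formula handles this automatically, because the right-hand side $L_{e,\alpha}/c$ is positive.

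The resulting sentence has the quantifier prefix $\exists\forall$ (and an auxiliary $\exists$ inside the gauge) over a fixed number $O(N)$ of variables, with finitely many polynomial atoms with rational coefficients, all built from $N$ and $c$. It is therefore a first-order formula over the reals. Applying the effective quantifier elimination algorithm of \cite{BasuPollackRoy} to it yields a terminating decision procedure, and this proves the proposition.
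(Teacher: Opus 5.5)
Your proposal is correct and is essentially the paper's own argument, namely the explicit encoding in Appendix~\ref{app:semialgebraic}. It uses the same existential variables with $Q\lambda=1$, $Q>1$, $p_N=Qp_0$, $d_e^2=\norm{p_e-p_{e-1}}^2$ and $d_e>0$; the same two-part gauge for $\inrad_0(P)=1$; and the same universally quantified disjunction over $V_{e,\alpha}$ after clearing $1-\lambda$. Your extra remarks, on replacing the supremum by a pointwise bound and on the positivity of $L_{e,\alpha}$, only make explicit what the paper leaves implicit; that positivity already follows from $\lambda>0$ and $d_e>0$, so citing $\ell\ge2$ is not needed.
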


\begin{quote}\small
\textit{Interpretation.} At fixed vertex count, all remaining unknowns are real coordinates. Lengths, finite support maxima, the inradius gauge, and the synchronized ratio constraint are polynomial equalities, inequalities, and finite disjunctions. The decision procedure is qualitative and expensive; its role is termination, not numerical efficiency.
\end{quote}

\section{Effective computability of the planar Shoreline value}
\subsection{Statement}
\begin{theorem}[Effective computability of the optimal value]
\label{thm:computability}
For every rational $\eps>0$, there is a terminating algorithm that produces rationals $L_\eps\le U_\eps$ such that
\begin{equation}
L_\eps\le C_2^*\le U_\eps,
\qquad
U_\eps-L_\eps\le\eps.
\label{eq:computability-interval}
\end{equation}
In particular, $C_2^*$ is a computable real in the sense of computable analysis \cite{Weihrauch}.
\end{theorem}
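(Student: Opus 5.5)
The plan is to combine the main chain into a threshold oracle with a one-sided error and then run rational bisection. Fix a rational precision $\zeta>0$, to be tied to $\eps$ later. Take a reference upper bound $U:=14$ (rational, above $\spc$), so that $C_2^*\le \spc<U$, and fix $\lambda=2$. By \cref{thm:effective-return} with $D=2$, every path of ratio at most $U$, in particular paths of ratio within $\zeta$ of $C_2^*$, yields a cell $C_0$ with $\cellratio_Q(C_0)\le C_2^*+2\zeta$. Its normalized length is at most $L_{\max}(2,U,\zeta,2)$, and its scale factor satisfies $Q\le q\le \lambda^{N_{\mathrm{ret}}}$. By \cref{lem:renewing-interior} and the renewal identity \eqref{eq:exact-block-profile}, every phase of $C_0$ satisfies $m_C(t)\ge 1$ after scaling by $r_i$: the previous copy already supplies level $Q^{-1}\cdot qr_i\ge r_i$. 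So \cref{thm:polygonalization} applies with a computable $N:=\lceil L_{\max}/h\rceil$, where $h$ is a rational chosen so that $Uh/(1-h)\le\zeta$. This produces a polygonal cell with at most $N$ edges and ratio at most $C_2^*+3\zeta$. Allowing zero-length edges, or equivalently subdividing edges, we may assume it has exactly $N$ edges. We then rescale it to the gauge $\inrad_0(P)=1$, which is legitimate because the ratio is scale invariant.

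The oracle, for a rational $c$, decides via \cref{prop:qe} whether some polygonal cell with $N'$ nonzero edges, for some $N'\le N$ and under the normalization, satisfies $\CR(p,Q)\le c$. This is finitely many decidable sentences. The two directions of correctness are as follows. If the answer is \emph{yes}, the polygonal repetition is an actual Shoreline path, so $C_2^*\le c$. If $c\ge C_2^*+3\zeta$, the construction above produces a witness, so the answer is \emph{yes}. Hence a \emph{no} certifies $C_2^*>c-3\zeta$.

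For the search, start from the interval $[0,U]$, or use Temerev's rational lower bound. Query midpoints $c$ on a rational grid. On \emph{yes}, set $U_\eps:=c$. On \emph{no}, set $L_\eps:=c-3\zeta$. Each step shrinks the interval by roughly half, up to an additive slack of $3\zeta$. Choosing $\zeta:=\eps/12$ and stopping once the bracket has width at most $\eps$ yields \eqref{eq:computability-interval} after finitely many queries. All quantities involved, namely $\delta$, $M$, $N_{\mathrm{ret}}$, $L_{\max}$, $h$, and $N$, are computable from $\eps$, so the algorithm terminates. Running it for $\eps=2^{-k}$ gives a computable Cauchy name for $C_2^*$.

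The main obstacle is verifying the hypothesis $m_C(t)\ge1$ at every phase, with the normalization consistent across the three results. \Cref{thm:effective-return} measures length in units of $r_i$, \cref{thm:polygonalization} needs an absolute lower bound on $m_C$, and \cref{prop:qe} uses the gauge $\inrad_0(P)=1$. I would handle this by scaling $r_i=1$. In this scaling, the historical hull at the start of the cell contains $Q^{-1}\conv\gamma([0,T_j])$, and its inradius is $\ge Q^{-1}q r_i\ge1$. Taking $h<1$ then bounds the loss. After polygonalization, the inradius of $P$ is $\ge 1-h>0$, so the final rescaling to the gauge is valid. The constraint $U>2$ in \cref{prop:compact-bounds} holds automatically, and that proposition is only needed to guarantee that the existential sentences range over bounded sets, not for the logic itself.
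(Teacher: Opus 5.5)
Your proposal is correct and follows the paper's own chain: \cref{thm:effective-return} with $D=2$ and $\lambda=2$, then the normalization $r_i=1$ so that the renewed block makes the preceding copy supply level at least $1$ at every phase, then \cref{thm:polygonalization} with a computable vertex count, the disjunction over $n\le N$ of the sentences of \cref{prop:qe}, and rational bisection. The differences are only bookkeeping: you take the a priori bound $U=14$ from the literature value $\spc$, whereas the paper certifies $C_2^*<37$ with the explicit cell \eqref{eq:explicit-cell} and caps $\eps$ at $1$ (with your thin margin you should likewise cap $\zeta$ so that $C_2^*+2\zeta\le U$ when you use $C_{\mathrm{ref}}=U$); and you use the oracle as a one-sided test directly on $C_2^*$ with slack $3\zeta$ instead of bracketing $W_N$, which spares you the attainment argument via \cref{prop:compact-bounds} (quantifier elimination does not require the quantified variables to be bounded).
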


\begin{quote}\small
\textit{Interpretation.} A bounded nearly optimal cell comes from \cref{thm:effective-return}; \cref{thm:polygonalization} makes its combinatorial complexity computable; \cref{prop:qe} decides rational thresholds; rational bisection then produces \eqref{eq:computability-interval}.
\end{quote}

\subsection{A nearly optimal cell of computable size}
Consider the polygonal cell
\begin{equation}
(1,0)\longrightarrow(0,1)\longrightarrow(-1,0)\longrightarrow(0,-1)\longrightarrow(2,0)
\label{eq:explicit-cell}
\end{equation}
with factor $Q=2$. Its length is
\[
L_0=3\sqrt2+\sqrt5.
\]
Its terminal hull contains the diamond with vertices $(\pm1,0)$ and $(0,\pm1)$, whose inradius is $1/\sqrt2$. The preceding copy, contracted by $2$, therefore supplies at every phase a guaranteed level at least $1/(2\sqrt2)$. Historical cost is at most the total length of the past plus the current cell, namely $2L_0$. Hence
\begin{equation}
C_2^*\le\cellratio_Q(C_0)
\le4\sqrt2\,L_0
=24+4\sqrt{10}<37.
\label{eq:crude-upper}
\end{equation}
Fix $C=38$ and replace $\eps$ by $\bar\eps:=\min\{\eps,1\}$. By definition of the infimum, there is a path $\gamma$ such that
\[
\CR(\gamma)<C_2^*+\frac{\bar\eps}{8}<C.
\]
Apply \cref{thm:effective-return} with $D=2$, $U=C$, $\zeta=\bar\eps/8$, and $\lambda=2$. We obtain a cell $C_0$ with
\begin{equation}
\cellratio_Q(C_0)<C_2^*+\frac{\bar\eps}{4},
\qquad
L(C_0)\le L_{\max}\left(2,C,\frac{\bar\eps}{8},2\right).
\label{eq:almost-optimal-cell}
\end{equation}
After a homogeneous normalization, the preceding copy supplies a guaranteed level at least $1$ throughout the cell.

\subsection{Computable polygonal complexity}
Set
\begin{equation}
h_\eps:=\min\left\{\frac12,\frac{\bar\eps}{8C}\right\},
\qquad
N(\eps):=\left\lceil\frac{L_{\max}}{h_\eps}\right\rceil.
\label{eq:N-eps}
\end{equation}
By \cref{thm:polygonalization}, there is a polygonal cell with at most $N(\eps)$ edges and ratio less than $C_2^*+\bar\eps/2$.

Let $W_N$ be the infimum of the ratio over all planar polygonal cells with at most $N=N(\eps)$ nonzero edges, free factor $Q>1$, and gauge $\inrad_0(P)=1$. Every such cell is an admissible strategy, so
\begin{equation}
C_2^*\le W_N\le C_2^*+\frac{\bar\eps}{2}.
\label{eq:WN-sandwich}
\end{equation}

\subsection{Exact oracle and rational bisection}
For $1\le n\le N$, let $\mathrm{FEAS}_n(c)$ assert the existence of a cell with exactly $n$ nonzero edges and ratio at most $c$, and set
\begin{equation}
\mathrm{FEAS}_{\le N}(c):=\bigvee_{n=1}^N\mathrm{FEAS}_n(c).
\label{eq:feas-union}
\end{equation}
By \cref{prop:qe}, this assertion is decidable for every rational $c$.

To justify attainment without confusing ``exactly $n$ nonzero edges'' with a closed parameter set, first allow zero-length subdivisions in an $N$-edge representation. The bounds of \cref{prop:compact-bounds} place the relevant sublevel in a compact box, and the exact evaluator is continuous there. A minimizer therefore exists. Deleting its zero-length edges leaves the same geometric path and the same ratio, with some number $n\le N$ of nonzero edges. Consequently the disjunction \eqref{eq:feas-union} detects the attained minimum.
Thus
\[
\mathrm{FEAS}_{\le N}(c)\quad\Longleftrightarrow\quad W_N\le c.
\]
Rational bisection on $[0,C]$ gives, after finitely many steps, rationals $a\le W_N\le b$ with $b-a\le\bar\eps/2$. By \eqref{eq:WN-sandwich},
\[
a-\frac{\bar\eps}{2}\le C_2^*\le b.
\]
The choice
\[
L_\eps:=\max\left\{0,a-\frac{\bar\eps}{2}\right\},
\qquad
U_\eps:=b
\]
gives \eqref{eq:computability-interval} and completes the proof of \cref{thm:computability}.

\begin{corollary}[Synthesis of an $\eps$-optimal strategy]
\label{cor:synthesis}
A constructive quantifier-elimination procedure can return a feasibility witness. Hence, in principle, one can produce an algebraic factor $Q$ and a polygonal cell with algebraic coordinates whose ratio is at most $C_2^*+\eps$.
\end{corollary}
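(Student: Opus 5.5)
The plan is to reuse the decision procedure from the proof of \cref{thm:computability}, but to run it in constructive form. The setup is unchanged. Fix $\eps$ and compute $\bar\eps$, $L_{\max}$, $h_\eps$ and $N=N(\eps)$ from \eqref{eq:N-eps}. Then run rational bisection with the oracle $\mathrm{FEAS}_{\le N}$ until it returns a rational $b$ with $\mathrm{FEAS}_{\le N}(b)$ true and $b\le W_N+\bar\eps/2$. By \eqref{eq:WN-sandwich}, $b\le C_2^*+\bar\eps\le C_2^*+\eps$ (using $\bar\eps/2$ at each stage so the total error stays below $\eps$). It therefore suffices to extract, from the true sentence $\mathrm{FEAS}_n(b)$ for some $n\le N$, an explicit witness $(Q,p_0,\dots,p_n)$ with real algebraic coordinates.

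For the extraction, I would remove only the outer existential block of the formula of \cref{prop:qe}. That formula has the shape $\exists (Q,\lambda,p,d)\,\Phi(Q,\lambda,p,d)$. Here $\Phi$ is the universally quantified statement over $(e,\alpha,u)$ asserting that $L_{e,\alpha}\le b\,\inrad_0(K_{e,\alpha})$, together with the gauge $\inrad_0(P)=1$.

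\begin{enumerate}
\item Apply quantifier elimination with rational coefficients to the inner universal part. This yields a quantifier-free semialgebraic description of the feasible set $F_n(b)\subset\R^{m}$ in the remaining variables.
\item Use the standard sampling algorithms of effective real algebraic geometry (cylindrical algebraic decomposition, or the ``sample points of each connected component'' procedure in \cite{BasuPollackRoy}). Since $F_n(b)$ is nonempty, this outputs at least one point of it, encoded by Thom encodings or isolating intervals of real algebraic numbers.
\item The sample point gives algebraic values of $Q$ and the $p_i$, and $Q>1$ holds by the constraint $Q\lambda=1$ with $\lambda<1$.
\end{enumerate}

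Finally, the witness is an admissible strategy by \cref{thm:polygonal-evaluator}: it satisfies $\CR(p,Q)\le b\le C_2^*+\eps$, so its all-scale repetition $\Gamma_{C_0}$ is the desired $\eps$-optimal strategy.

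The main obstacle is that the feasible set need not be closed. The constraints $d_e>0$ and the strict parts of $Q>1$ mean that sample points must come from a nonempty semialgebraic set that may be open. Sampling algorithms handle arbitrary semialgebraic sets, not only closed ones, so they should still return a valid point. I would check this against the precise statement in \cite{BasuPollackRoy}. A second worry is that bisection produces $b$ only up to the closed condition $W_N\le b$. This is harmless, because the attainment argument following \eqref{eq:feas-union} guarantees that the sentence with threshold $b$ is actually satisfied, not merely approached.
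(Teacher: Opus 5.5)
Your proposal is correct and follows the route the paper intends; the paper gives no separate proof. You run the bisection of \cref{thm:computability} to obtain a rational $b\le C_2^*+\bar\eps$ with $\mathrm{FEAS}_{\le N}(b)$ true, and then combine effective quantifier elimination with the sampling algorithms of \cite{BasuPollackRoy}, which return a point in every nonempty semialgebraic set, open or not, so your first worry is moot. Your second worry is also unnecessary: the right bisection endpoint is by construction a threshold where the oracle answered yes, and $\mathrm{FEAS}_{\le N}(C)$ holds simply because $W_N<C$, so attainment is not needed for the extraction.
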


\begin{corollary}[Certification of every strict gap]
\label{cor:strict-gap}
For every rational $c<C_2^*$, the following procedure terminates: for $n=1,2,\ldots$, apply \cref{thm:computability} with $\eps=2^{-n}$ and stop as soon as $L_{2^{-n}}>c$.
\end{corollary}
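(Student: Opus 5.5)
The statement to prove is \cref{cor:strict-gap}: for rational $c<C_2^*$, the search over $n$ stops. The plan is to use only the guarantee \eqref{eq:computability-interval} of \cref{thm:computability}. There are two things to check: each call to the algorithm halts, and some $n$ triggers the stopping test.

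First, for each fixed $n$ the value $\eps=2^{-n}$ is a positive rational. \Cref{thm:computability} therefore provides a terminating algorithm that outputs rationals $L_{2^{-n}}\le U_{2^{-n}}$ satisfying \eqref{eq:computability-interval}. Comparing the rational $L_{2^{-n}}$ with the rational $c$ is an exact, finite operation, so each stage of the loop halts.

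Second, I would produce a specific $n$ at which the test succeeds. Put $g:=C_2^*-c>0$. By \eqref{eq:computability-interval},
\[
L_{2^{-n}}\ge U_{2^{-n}}-2^{-n}\ge C_2^*-2^{-n}=c+g-2^{-n}.
\]
Any $n$ with $2^{-n}<g$ therefore gives $L_{2^{-n}}>c$. Such an $n$ exists because $g>0$, even though $g$ is not known in advance.

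The loop thus halts no later than the first $n$ with $2^{-n}<g$. This is roughly $\log_2(1/g)$ iterations, so the running time grows with the inverse of the gap.

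I do not see a real obstacle in this argument. The one point that needs care is that it relies on the strict inequality $c<C_2^*$. If $c=C_2^*$, the test $L>c$ might never succeed, since we only know $L\le C_2^*$. So the procedure is a semi-decision method for $c<C_2^*$, and this matches how the statement is phrased.
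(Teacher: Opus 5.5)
Your argument is correct and matches the paper, which states this corollary without a separate proof as an immediate consequence of \eqref{eq:computability-interval}: each call halts, and $L_{2^{-n}}\ge U_{2^{-n}}-2^{-n}\ge C_2^*-2^{-n}>c$ as soon as $2^{-n}<C_2^*-c$. The only unstated ingredient is that the algorithm of \cref{thm:computability} is uniform in $\eps$, so the loop over $n$ is a single procedure; you may wish to say this explicitly, as the theorem's proof and its phrase ``computable real'' support it.
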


\begin{remark}
The covering constants, then $L_{\max}$ and $N(\eps)$, are enormous, and quantifier elimination is prohibitive. \Cref{thm:computability} gives a terminating procedure at arbitrary precision, not a numerically competitive algorithm.
\end{remark}

\section{Scope of the main chain}
Part~I establishes four distinct facts. First, the infimum over all admissible histories equals the infimum over repeated homothetic cells. Second, at fixed precision, a nearly optimal cell can be chosen with computably bounded factor and normalized length. Third, this cell can be replaced by a polygon of computable complexity without artificially underestimating the true online cost. Finally, fixed-complexity polygonal optimization is decidable at a prescribed threshold, making $C_2^*$ computable.

The conclusion is qualitative rather than practical: the covering bounds and the complexity of quantifier elimination are extremely large. It also does not classify extremal cells. The remaining structural problem is to determine whether the computable value equals $\spc$ or whether a nonstationary cell achieves a strictly smaller ratio.

Part~II collects exact identities not required by any step above. They shorten memory, describe a Bellman transition, provide exclusion criteria, and explain why a logarithmic spiral appears when a normalized state becomes stationary up to rotation.

\part{A toolbox for Shoreline search}

\section{Auxiliary results for computation and rigidity}
The results in this part are not used in the proof of \cref{thm:computability}. Each subsection gives the intuition, the useful statement, and its role; complete proofs are collected in Appendices~\ref{app:toolbox-proofs} and~\ref{app:filters}.

\subsection{Homologous recutting}
A critical internal phase may be chosen as a new beginning of the cell: the prefix moved to the end is dilated by $Q$, so the cell remains homologous to itself. For $s\in[0,L]$, set
\begin{equation}
C_0^{(s)}:=C_0[s,L]\star Q C_0[0,s],
\qquad
L(C_0^{(s)})=L+(Q-1)s.
\label{eq:recut-cell}
\end{equation}

\begin{theorem}[Exact recutting]
\label{thm:recut}
For every phase $s$ with $m_C(s)>0$,
\begin{equation}
\inrad_0\bigl(\conv C_0^{(s)}\bigr)=Qm_C(s),
\label{eq:recut-inradius}
\end{equation}
and
\begin{equation}
\cellratio_Q(C_0)=
\frac{Q}{Q-1}
\sup_{0\le s\le L}
\frac{L(C_0^{(s)})}{\inrad_0(\conv C_0^{(s)})}.
\label{eq:recut-functional}
\end{equation}
\end{theorem}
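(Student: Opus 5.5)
The plan is to reduce both identities to \cref{prop:cell-functional} through a pointwise comparison of support functions. Throughout I keep the standing assumption $0\in\operatorname{int}K_C$, which gives $m_C(s)\ge Q^{-1}\inrad_0(K_C)>0$ at every phase. So the hypothesis $m_C(s)>0$ is automatic, and the supremum in \eqref{eq:recut-functional} runs over all $s\in[0,L]$.

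\emph{Step 1: the support function of the recut hull.} The recut cell is the union of $C_0[s,L]$ and $QC_0[0,s]$, so for every $u\in\Sph^1$
\[
h_{\conv C_0^{(s)}}(u)=\max\bigl\{h_{C_0[s,L]}(u),\,Q\,h_{K_C(s)}(u)\bigr\},
\]
where $h_A$ denotes the maximum of $\ip{u}{\cdot}$ over $A$. Dividing by $Q$ gives
\[
Q^{-1}h_{\conv C_0^{(s)}}(u)=\max\bigl\{Q^{-1}h_{C_0[s,L]}(u),\,h_{K_C(s)}(u)\bigr\}=:g(u).
\]
On the other hand, $h_{K_C}=\max\{h_{K_C(s)},h_{C_0[s,L]}\}$. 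Hence the integrand of \eqref{eq:cell-level} is
\[
f(u):=\max\bigl\{Q^{-1}h_{K_C(s)}(u),\,Q^{-1}h_{C_0[s,L]}(u),\,h_{K_C(s)}(u)\bigr\},
\]
which is $g(u)$ with one extra term, $Q^{-1}h_{K_C(s)}(u)$.

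\emph{Step 2: the extra term is harmless.} This is the only delicate point: the extra term dominates $h_{K_C(s)}(u)$ exactly when $h_{K_C(s)}(u)<0$. I will handle the two signs separately.
\begin{itemize}
\item If $h_{K_C(s)}(u)\ge0$, then $Q^{-1}h_{K_C(s)}(u)\le h_{K_C(s)}(u)$, so $f(u)=g(u)$.
\item If $h_{K_C(s)}(u)<0$, the extra term is negative. Since $f(u)\ge m_C(s)>0$, the maximum defining $f(u)$ is attained by a nonnegative term other than the extra one, and again $f(u)=g(u)$.
\end{itemize}
Thus $f\equiv g$, and in particular $g>0$ everywhere.

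\emph{Step 3: the inradius identity.} For a compact convex set whose support function is positive everywhere, the origin is interior and $\inrad_0$ equals the minimum of the support function over $\Sph^1$. Applying this to $\conv C_0^{(s)}$, whose support function is $Qg=Qf>0$, gives
\[
\inrad_0\bigl(\conv C_0^{(s)}\bigr)=Q\min_{u\in\Sph^1} f(u)=Q\,m_C(s),
\]
which is \eqref{eq:recut-inradius}.

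\emph{Step 4: the cell functional.} The recut cell traverses the arc $C_0[s,L]$, of length $L-s$, and the arc $QC_0[0,s]$, of length $Qs$. This confirms $L(C_0^{(s)})=L+(Q-1)s$. Substituting this and \eqref{eq:recut-inradius},
\[
\frac{Q}{Q-1}\cdot\frac{L(C_0^{(s)})}{\inrad_0(\conv C_0^{(s)})}
=\frac{Q}{Q-1}\cdot\frac{L+(Q-1)s}{Q\,m_C(s)}
=\frac{L/(Q-1)+s}{m_C(s)}.
\]
Taking the supremum over $s\in[0,L]$ and invoking \cref{prop:cell-functional} yields \eqref{eq:recut-functional}.

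I expect Step 2 to be the main obstacle. Everything else is bookkeeping, and the sign case is exactly where the positivity hypothesis $m_C(s)>0$ is used.
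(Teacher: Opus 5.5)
Your proof is correct and follows the paper's argument. Both compare, direction by direction, the recut support $\max\{h_{C_0[s,L]},\,Q\,h_{K_C(s)}\}$ with $Q$ times the integrand of \eqref{eq:cell-level}, split on the sign of $h_{K_C(s)}(u)$, and then substitute $L(C_0^{(s)})=L+(Q-1)s$ into \cref{prop:cell-functional}. The only cosmetic difference is in the negative case: you use $f(u)\ge m_C(s)>0$ directly, whereas the paper uses positivity of the terminal support to get $h_{C_0[s,L]}(u)=h_{K_C}(u)>Q\,h_{K_C(s)}(u)$.
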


A recut turns a bad prefix into a terminal certificate. It is particularly useful for pruning at fixed $Q$ and for comparing phases without losing the true historical cost. The proof appears in Appendix~\ref{app:recut-proof}.

\subsection{Sliding memory}
Under a competitive bound $C$, every point visited before $t/C$ already lies in the disk guaranteed at time $t$ and can no longer enlarge the convex hull.

\begin{theorem}[Exact sliding memory]
\label{thm:sliding}
If $\CR(\gamma)\le C$, then for every $t>0$,
\begin{equation}
K_t=\conv\gamma([t/C,t]).
\label{eq:sliding-memory}
\end{equation}
In logarithmic time, useful memory has the exact horizon $\log C$.
\end{theorem}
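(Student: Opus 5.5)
The plan is to prove the two inclusions in \eqref{eq:sliding-memory} separately. The inclusion $\conv\gamma([t/C,t])\subset K_t$ is immediate, because $[t/C,t]\subset[0,t]$. For the reverse inclusion, write $K'_t:=\conv\gamma([t/C,t])$. Since $K_t$ is the convex hull of $K'_t\cup\gamma([0,t/C])$, it suffices to show that every early point $\gamma(s)$ with $s<t/C$ already lies in $K'_t$.

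\emph{Step 1: early points lie in a small disk.} Arc-length parametrization and $\gamma(0)=0$ give $\norm{\gamma(s)}\le s$. So for $s\le t/C$ every early point lies in the closed disk $\bar B(0,t/C)$.

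\emph{Step 2: the guaranteed disk is carried by the recent window.} By \cref{prop:support-certificate} and \cref{lem:phase-ratio}, $\CR(\gamma)\le C$ gives $m(t)\ge t/C$, so $\bar B(0,t/C)\subset K_t$. This alone does not suffice, because we need the disk inside $K'_t$ and not merely inside $K_t$. I will argue through support functions: it is enough to show $h_{K'_t}(u)\ge t/C$ for every unit $u$. Then $\bar B(0,t/C)\subset K'_t$, since a compact convex set contains the disk exactly when its support function dominates $t/C$. The early points then lie in $K'_t$, which gives $K_t=K'_t$.

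\emph{Step 3: the main obstacle, the support bound on the window.} Fix $u$ and let $s^*:=\inf\{s:\ip{u}{\gamma(s)}\ge t/C\}$. The point $\gamma(s^*)$ has $\ip{u}{\gamma(s^*)}=t/C$ and hence norm at least $t/C$, so $s^*\ge t/C$. We also know $s^*\le T(t/C)\le t$, using $T(d)\le Cd$, which follows from \cref{lem:phase-ratio}. Hence $s^*\in[t/C,t]$, and therefore $h_{K'_t}(u)\ge\ip{u}{\gamma(s^*)}=t/C$.

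The delicate points are the boundary cases. When $s^*=t/C$ exactly, the point still lies in the closed window, so this case is fine. I also have to check that the first passage to level $t/C$ in direction $u$ really occurs by time $t$. This follows because $h_t(u)\ge m(t)\ge t/C$, and by continuity of $s\mapsto\ip{u}{\gamma(s)}$ the infimum is attained.

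The final remark about logarithmic time then follows. The window $[t/C,t]$ has length $\log C$ in $\log t$, so only the last $\log C$ units of logarithmic time affect the hull.
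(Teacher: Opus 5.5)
Your proposal is correct and follows essentially the same route as the paper. Both arguments observe that points visited before time $t/C$ lie in the disk of radius $t/C$, and that support of level $t/C$ in every direction is attained within the window $[t/C,t]$, so the window's hull already contains that disk. Your first-passage argument in Step 3 ($s^*\ge t/C$ from $\norm{\gamma(s)}\le s$, and $s^*\le t$ from $m(t)\ge t/C$) supplies the justification of that window-support step, which the paper only asserts.
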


Thus the state depends on a finite multiplicative window rather than on the whole past. This makes a dynamic implementation conceivable. The proof appears in Appendix~\ref{app:sliding-proof}.

\subsection{Bellman transition, predecessors, and deadlines}
In logarithmic time, the old past contracts while a new path segment is added. The support function places these two contributions in competition through a maximum.

Fix a candidate ratio $C$, set $d_C:=1/C$ and $t=e^\tau$, and define
\begin{equation}
p(\tau):=\frac{\gamma(t)}{t},
\qquad
H_\tau:=\frac{K_t}{t},
\qquad
h_\tau:=h_{H_\tau}.
\label{eq:normalized-bellman-state}
\end{equation}

\begin{proposition}[Exact normalized state]
\label{prop:normalized-state}
For every finite threshold $C$, with $d_C=1/C$,
\begin{equation}
\CR(\gamma)\le C
\quad\Longleftrightarrow\quad
\min_{u\in\Sph^1}h_\tau(u)\ge d_C
\quad\text{for all }\tau.
\label{eq:viability-state}
\end{equation}
Every viable state satisfies
\begin{equation}
d_CB_2\subset H_\tau\subset B_2,
\qquad
p(\tau)\in H_\tau.
\label{eq:state-inclusions}
\end{equation}
If $v(\tau):=\gamma'(e^\tau)$ at differentiability times, then
\begin{equation}
p'(\tau)=v(\tau)-p(\tau).
\label{eq:p-dynamics}
\end{equation}
\end{proposition}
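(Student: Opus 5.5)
The plan is to derive all three parts of \cref{prop:normalized-state} directly from \cref{prop:support-certificate} by rescaling. Support functions are positively homogeneous under dilation: $h_{K_t/t}(u)=h_{K_t}(u)/t$. Hence $\min_u h_\tau(u)=m(t)/t$ with $t=e^\tau$, where $m(t)=\inrad_0(K_t)$.

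\textbf{Viability.} By \eqref{eq:shoreline-phase}, $\CR(\gamma)\le C$ holds if and only if $t\le C\,m(t)$ for every $t$ with $m(t)>0$.

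\begin{itemize}
\item For the reverse implication, assume $m(t)/t\ge d_C$ for all $t>0$. Then $m(t)>0$ automatically, and $t/m(t)\le C$ for all $t$, so the supremum in \eqref{eq:shoreline-phase} is at most $C$.
\item For the forward implication, assume $\CR(\gamma)\le C<\infty$. By the final clause of \cref{lem:phase-ratio}, $m(t)\ge t/C$ for all $t>0$. To apply that lemma I would check its hypotheses for $m$:
\begin{itemize}
\item $m$ is nondecreasing, because $K_t$ increases with $t$;
\item $m$ is continuous, because $h_t(u)$ is jointly continuous and $1$-Lipschitz in $t$, and the minimum over the compact set $\Sph^1$ preserves this;
\item $m$ is unbounded, because a finite ratio forces finite completion times at every level.
\end{itemize}
This gives $\min_u h_\tau\ge d_C$ for every $\tau$.
\end{itemize}

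\textbf{Inclusions.} The left inclusion $d_CB_2\subset H_\tau$ follows from the previous step. Since $0\in H_\tau$ and $H_\tau$ is a compact convex set whose support function is at least $d_C$ in every direction, the Hahn--Banach/support characterization of convex bodies gives $d_CB_2\subset H_\tau$.

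For the right inclusion, every point of $\gamma([0,t])$ has norm at most its arc-length time $s\le t$, since $\gamma(0)=0$ and $\gamma$ is parametrized by arc length. So $K_t\subset tB_2$, and dividing by $t$ gives $H_\tau\subset B_2$.

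Finally, $\gamma(t)\in K_t$, so $p(\tau)=\gamma(t)/t\in H_\tau$.

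\textbf{Dynamics.} This is a chain-rule computation. Write $p(\tau)=e^{-\tau}\gamma(e^\tau)$. At a differentiability time,
\[
p'(\tau)=-e^{-\tau}\gamma(e^\tau)+e^{-\tau}\gamma'(e^\tau)e^{\tau}=v(\tau)-p(\tau).
\]
A locally rectifiable arc-length curve is $1$-Lipschitz, hence differentiable almost everywhere with $\norm{v}\le1$. The identity therefore holds for almost every $\tau$, and $p$ is locally absolutely continuous.

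\textbf{Main obstacle.} The only delicate point is the forward direction of the viability equivalence, namely the continuity and unboundedness of $m$ that \cref{lem:phase-ratio} requires. For unboundedness, I would note that if $m$ stayed bounded by $M$, then $T_\gamma(2M)=+\infty$, which would make $\CR(\gamma)=+\infty$ and contradict $\CR(\gamma)\le C$. The remaining steps are immediate.
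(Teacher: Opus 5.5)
Your proposal is correct and follows essentially the same route as the paper: rescale \cref{prop:support-certificate} using homogeneity of the support function, get $H_\tau\subset B_2$ from $\norm{\gamma(s)}\le s$, and obtain \eqref{eq:p-dynamics} by the chain rule applied to $p(\tau)=e^{-\tau}\gamma(e^\tau)$. The only difference is that you explicitly check the hypotheses of \cref{lem:phase-ratio}: monotonicity, continuity via the $1$-Lipschitz dependence of $h_t(u)$ on $t$, and unboundedness. The paper leaves these implicit.
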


For a logarithmic step $\Delta>0$, put $a=e^{-\Delta}$ and $\delta=1-a$, and represent the new segment by a unit-speed curve $c:[0,\delta]\to\R^2$ with $c(0)=ap$.

\begin{theorem}[Exact Bellman step]
\label{thm:bellman-step}
The complete transition is
\begin{equation}
p^+=c(\delta),
\qquad
H^+=\conv\bigl(aH\cup c([0,\delta])\bigr),
\qquad
h^+(u)=\max\{ah(u),h_c(u)\}.
\label{eq:bellman-step}
\end{equation}
\end{theorem}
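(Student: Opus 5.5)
The plan is to unwind the definitions \eqref{eq:normalized-bellman-state} at the two consecutive logarithmic times $\tau$ and $\tau+\Delta$, that is, at physical times $t=e^\tau$ and $t^+=e^{\tau+\Delta}=t/a$. The key observation is that the normalization by the current time turns the passage from $t$ to $t^+$ into a contraction of the old state by $a=t/t^+$, followed by the addition of a rescaled path segment of unit speed and length $\delta$.

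First, define the rescaled segment
\[
c(\sigma):=\frac{\gamma(t+t^+\sigma)}{t^+},\qquad \sigma\in[0,\delta].
\]
Since $t^+\delta=t^+-t$, this parametrization covers exactly $\gamma([t,t^+])$. It has unit speed, because $\gamma$ is arc-length parametrized and the time and space rescalings are both by $t^+$. Its initial point is $c(0)=\gamma(t)/t^+=(t/t^+)\,p=ap$, and its endpoint is $c(\delta)=\gamma(t^+)/t^+=p(\tau+\Delta)=p^+$. This gives the first identity.

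Second, write $K_{t^+}=\conv(\gamma([0,t])\cup\gamma([t,t^+]))=\conv(K_t\cup\gamma([t,t^+]))$. This uses only the elementary fact $\conv(A\cup B)=\conv(\conv A\cup B)$. Dividing by $t^+$ and using $K_t/t^+=aK_t/t=aH$ gives $H^+=\conv(aH\cup c([0,\delta]))$.

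Third, the support function of a convex hull of a union is the pointwise maximum of the support functions, and support functions are positively homogeneous: $h_{aH}=a\,h_H$. Hence $h^+(u)=\max\{ah(u),h_c(u)\}$, where $h_c(u):=\max_{\sigma}\ip{u}{c(\sigma)}$. This is the max-law \eqref{eq:max-memory} transported to normalized coordinates.

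The only point requiring care is the bookkeeping of the time rescaling. One must check that normalizing by $t^+$ rather than $t$ is what yields unit speed on an interval of length exactly $\delta=1-a$, and that $c(0)=ap$ matches the stated convention. Once the affine reparametrization above is written down, everything else is immediate. No differentiability is needed, since \eqref{eq:p-dynamics} is not used.
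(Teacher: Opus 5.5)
Your proof is correct and follows essentially the same route as the paper: normalize by the final time $t^+=e^{\Delta}t$, so that the old hull becomes $aH$ and the new piece has normalized length $(t^+-t)/t^+=\delta$, then use that the support function of the convex hull of a union is the pointwise maximum. The only difference is that you write out the affine reparametrization $c(\sigma)=\gamma(t+t^+\sigma)/t^+$ and check $c(0)=ap$, $c(\delta)=p^+$ and unit speed, details the paper leaves implicit.
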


\begin{proposition}[Backward constraint and deadline]
\label{prop:deadline}
If $h_c(u)<h_{H^+}(u)$, then terminal support in direction $u$ must be inherited from the past:
\begin{equation}
h_{aH}(u)=h_{H^+}(u).
\label{eq:backward-support}
\end{equation}
Every state viable at ratio $C$ also satisfies
\begin{equation}
\ip{u}{p}\ge1-(C-1)h(u).
\label{eq:deadline}
\end{equation}
At the competitive floor this becomes an identity:
\begin{equation}
h(u)=d_C
\quad\Longrightarrow\quad
\ip{u}{p}=d_C.
\label{eq:floor-contact}
\end{equation}
\end{proposition}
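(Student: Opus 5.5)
The proposition has three parts: the backward constraint \eqref{eq:backward-support}, the deadline inequality \eqref{eq:deadline}, and the floor-contact identity \eqref{eq:floor-contact}. I will prove them in that order.

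\emph{Backward constraint.} I start from the Bellman step, \cref{thm:bellman-step}, which gives $h^+(u)=\max\{ah(u),h_c(u)\}$. Here $h_{aH}=ah$, since scaling multiplies support functions. If $h_c(u)<h_{H^+}(u)$, the maximum cannot be realized by the new segment. Hence $h_{H^+}(u)=ah(u)=h_{aH}(u)$, which is \eqref{eq:backward-support}. This part is only the max-law \eqref{eq:max-memory} read backwards.

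\emph{Deadline inequality.} I fix a time $t=e^\tau$ and a direction $u$. Let $s\ge t$ be the first time at which the support in direction $u$ exceeds its current value. Viability at ratio $C$, in the form of \cref{prop:normalized-state} and \cref{lem:phase-ratio}, forces $h_{K_s}(u)\ge m(s)\ge s/C$ for all $s$. The current support is $t\,h(u)$. It is nondecreasing and can only grow after the path has travelled away from $\gamma(t)$. The idea is to bound, by arc length, how far the path must move in direction $u$ before the guaranteed level catches up.

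A cleaner route uses the unit-speed bound. For $s\ge t$, every point of $\gamma([t,s])$ satisfies $\ip{u}{\gamma(r)}\le \ip{u}{\gamma(t)}+(s-t)$. Hence
\[
h_{K_s}(u)\le \max\{t\,h(u),\ \ip{u}{\gamma(t)}+s-t\}.
\]
Viability requires this to be at least $s/C$ for every $s\ge t$. I pick the critical time $s^*=C\,t\,h(u)$, at which the old support just equals the required level; strictly beyond it, the old support alone is insufficient. Viability slightly after $s^*$ then requires $\ip{u}{\gamma(t)}+s-t\ge s/C$ as $s\downarrow s^*$. This gives $\ip{u}{\gamma(t)}\ge t-(1-1/C)s^*=t-(C-1)t\,h(u)$. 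Dividing by $t$ yields $\ip{u}{p}\ge 1-(C-1)h(u)$.

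\emph{Main obstacle.} The delicate step is the strict-versus-nonstrict handling at $s^*$. If $h(u)\ge 1/(C-1)$ the inequality is trivial, since $\ip{u}{p}\ge -\norm{p}\ge -1$. Otherwise, the case $s^*\le t$ requires $Ch(u)\le1$; combined with $h(u)\ge d_C$ from \eqref{eq:viability-state}, this reduces to the floor case. I will resolve the limit by continuity of the bound in $s$, so the nonstrict inequality holds in the limit.

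\emph{Floor contact.} At the floor $h(u)=d_C$, the inequality \eqref{eq:deadline} gives $\ip{u}{p}\ge 1-(C-1)/C=d_C$. Conversely, $p\in H_\tau$ by \eqref{eq:state-inclusions}, so $\ip{u}{p}\le h(u)=d_C$. Equality follows, which is \eqref{eq:floor-contact}.
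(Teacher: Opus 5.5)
Your proof is correct and follows the paper's route: the backward constraint is the max-law read backwards, the deadline compares the unit-speed travel needed to raise the support in direction $u$ above $t\,h(u)$ with the ratio deadline, and floor contact uses $p\in H_\tau$ exactly as the paper does (the paper fixes the level $t\,h(u)+\eps$ and bounds its hitting time on both sides, while you fix a time $s\downarrow s^*=C\,t\,h(u)$ and bound the support on both sides, which is the same estimate). One harmless slip: $h(u)\ge 1/(C-1)$ only makes the right-hand side nonpositive, so $\ip{u}{p}\ge-1$ does not make that case trivial (that would need $h(u)\ge 2/(C-1)$), but your main argument needs no such case split, since viability gives $s^*\ge t$ in every case.
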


The forward step gives an autonomous dynamics. The backward constraint describes possible predecessors, while the deadline immediately eliminates a state that cannot refresh a support direction before it becomes critical. Proofs appear in Appendices~\ref{app:bellman-proof} and~\ref{app:deadline-proof}.

\subsection{Stored support and relative equilibrium}
\label{sec:relative-equilibria}
Support above the minimum is prepaid service carried by max-memory. A state that preserves its shape and merely rotates in logarithmic time becomes a spiral in physical space.

For each direction $u$, write $Z_t(u):=h_{K_t}(u)$ and define
\begin{equation}
b_C(t):=m(t)-\frac{t}{C},
\qquad
\sigma_t(u):=Z_t(u)-m(t).
\label{eq:stored-support}
\end{equation}
Then
\begin{equation}
Z_t(u)-\frac{t}{C}=b_C(t)+\sigma_t(u).
\label{eq:reserve-decomposition}
\end{equation}
The term $b_C$ is a common reserve and $\sigma_t(u)$ a directional reserve. When a point $x$ is added to a convex body $K$, the fresh support created in direction $u$ is exactly
\begin{equation}
\Delta Z(u)=\bigl[\ip{u}{x}-h_K(u)\bigr]_+.
\label{eq:fresh-support}
\end{equation}
At the floor $h(u)=d_C$, identity \eqref{eq:floor-contact} forces the current point itself to carry every critical direction.

\begin{proposition}[Exact price of fresh support]
\label{prop:fresh-support-price}
Let $A,X,B\in\R^2$, let $S=[A,B]$, $T=\conv\{A,X,B\}$, and
\[
\eps_X:=\norm{X-A}+\norm{B-X}-\norm{B-A}\ge0.
\]
For $u_\theta=(\cos\theta,\sin\theta)$, set
\[
f_X(\theta):=h_T(u_\theta)-h_S(u_\theta)\ge0.
\]
Then
\begin{equation}
\int_0^{2\pi}f_X(\theta)\,d\theta=\eps_X.
\label{eq:fresh-price}
\end{equation}
More generally, after masking by an old support $h_K$, if
\[
Z_X:=\max\{h_K,h_T\},
\qquad
Z_0:=\max\{h_K,h_S\},
\]
then
\begin{equation}
0\le Z_X-Z_0\le f_X,
\qquad
\int_0^{2\pi}(Z_X-Z_0)\,d\theta\le\eps_X.
\label{eq:masked-price}
\end{equation}
\end{proposition}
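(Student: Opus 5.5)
The plan is to compute the support difference $f_X$ explicitly and integrate it with Cauchy's perimeter formula. For a planar compact convex set $K$, the perimeter is $\Per(K)=\int_0^{2\pi}h_K(u_\theta)\,d\theta$. Degenerate segments count twice: $\Per([A,B])=2\norm{B-A}$. Since $S\subset T$, we have $h_S\le h_T$, so $f_X\ge0$ pointwise and
\[
\int_0^{2\pi}f_X\,d\theta=\Per(T)-\Per(S).
\]
If $X,A,B$ are not collinear, $T$ is a triangle with perimeter $\norm{X-A}+\norm{B-X}+\norm{B-A}$. Subtracting $2\norm{B-A}$ leaves exactly $\eps_X$. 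In the collinear cases the identity should still hold:
\begin{itemize}
\item If $X\in[A,B]$, then $T=S$ and $\eps_X=0$.
\item If $X$ lies on the line but outside the segment, say beyond $B$, then $T=[A,X]$. Its perimeter is $2\norm{X-A}$, and $2\norm{X-A}-2\norm{B-A}=2\norm{X-B}$, which equals $\eps_X$ because $\norm{X-A}=\norm{X-B}+\norm{B-A}$.
\end{itemize}
To avoid the case split, one can also argue by continuity in $X$: both sides of the identity are continuous in $X$, and the non-collinear case is dense.

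For the masked statement, the plan is a pointwise argument. Write $a=h_K$, $s=h_S$ and $t=h_T$ at a fixed $\theta$, with $s\le t$. The map $y\mapsto\max\{a,y\}$ is nondecreasing and $1$-Lipschitz. It follows that
\[
0\le \max\{a,t\}-\max\{a,s\}\le t-s=f_X(\theta).
\]
Integrating over $\theta$ and using the first part gives the integral bound $\eps_X$.

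The main obstacle is being careful that Cauchy's formula applies to degenerate convex sets, that is, segments and points, with perimeter understood as the limit of boundary lengths. I would handle this either by citing it from Schneider in the form $\Per(K)=\int h_K$, valid for all compact convex sets via the mean width, or by checking directly that $\int_0^{2\pi}h_{[A,B]}\,d\theta=2\norm{B-A}$. After translating so that $A=0$, this direct check reduces to $\int_0^{2\pi}\max\{0,\norm{B}\cos\theta\}\,d\theta=2\norm{B}$. Translation invariance of the left side of \eqref{eq:fresh-price} holds because $h$ shifts by the same linear term for both $S$ and $T$, so the difference $f_X$ is unchanged.
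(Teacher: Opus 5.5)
Your proposal is correct and follows the paper's own proof: Cauchy's formula $\int_0^{2\pi}h_K(u_\theta)\,d\theta=\Per(K)$ applied to $T$ and to the segment $S$ (perimeter $2\norm{B-A}$), then the fact that $y\mapsto\max\{a,y\}$ is monotone and $1$-Lipschitz gives the masked bound pointwise. Your explicit check of the collinear cases fills in what the paper covers only with the phrase ``possibly degenerate triangle''.
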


\begin{remark}[Clock credit]
\label{rem:clock-credit}
Suppose a block is replaced by a block with the same endpoints, shorter by $\eps>0$, viable during its execution, and whose terminal support loses at most $\eps/C$ in every direction. Then the same continuation remains viable. The length saving advances the clock by $\eps$ and lowers the future barrier by exactly $\eps/C$.
\end{remark}

Further budget and stock-renewal identities are given in Appendix~\ref{app:support-budget}.

\begin{theorem}[Relative equilibrium and logarithmic spiral]
\label{thm:relative-equilibrium}
Suppose there exist $\omega\ne0$ and a state $(p_*,H_*)$ such that
\begin{equation}
p(\tau)=R_{\omega\tau}p_*,
\qquad
H_\tau=R_{\omega\tau}H_*.
\label{eq:relative-equilibrium}
\end{equation}
Then
\begin{equation}
\gamma(t)=tR_{\omega\log t}p_*,
\qquad
r(\theta)=r_0e^{\kappa\theta},
\qquad
\kappa=\omega^{-1}.
\label{eq:log-spiral}
\end{equation}
Within the stationary family, tangency between the current point and an old provider delayed by an angle $\tau\in(\pi,2\pi)$ gives
\begin{align}
e^{-\kappa\tau}&=\cos\tau-\kappa\sin\tau,
\label{eq:spiral-tangency}\\
C_{\mathrm{rel}}(\kappa)&=
\frac{1+\kappa^2}{\kappa}e^{\kappa\tau(\kappa)},
\label{eq:spiral-ratio}\\
\tan\tau&=\frac{\kappa^2\tau}{1-\kappa\tau}.
\label{eq:spiral-stationarity}
\end{align}
\end{theorem}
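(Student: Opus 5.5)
The proof has two parts: first the spiral form of the path, then the tangency, ratio and stationarity relations.

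\emph{Step 1: the path is a logarithmic spiral.} I will use the definition $p(\tau)=\gamma(e^\tau)/e^\tau$ from \eqref{eq:normalized-bellman-state}. Substituting \eqref{eq:relative-equilibrium} with $\tau=\log t$ gives $\gamma(t)=tR_{\omega\log t}p_*$ immediately. In polar form this reads $r(t)=t\norm{p_*}$ and $\theta(t)=\theta_*+\omega\log t$. Eliminating $t$ gives $t=e^{(\theta-\theta_*)/\omega}$, hence $r=r_0e^{\theta/\omega}$ with $\kappa=\omega^{-1}$.

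I will also use the kinematic constraint. By \eqref{eq:p-dynamics}, $v=p'+p=(I+\omega J)R_{\omega\tau}p_*$, where $J$ is rotation by $\pi/2$. Unit speed then forces $\norm{p_*}^2(1+\omega^2)=1$, so $\norm{p_*}=\kappa/\sqrt{1+\kappa^2}$ when $\kappa>0$; the case of negative $\omega$ follows by reflection. This pins down the arc-length normalization, $t=r\sqrt{1+\kappa^2}/\kappa$.

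\emph{Step 2: reduce the ratio to one direction.} By rotational self-similarity, $h_\tau(u)$ depends only on the angle between $u$ and the current direction. So $\min_u h_\tau$ is a constant $m_*$, and by \cref{prop:support-certificate} the ratio is $C_{\mathrm{rel}}=1/m_*$ in normalized units.

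I then parametrize the spiral by angle, with $x(\theta)=e^{\kappa\theta}(\cos\theta,\sin\theta)$ and current angle $0$. I take $u=u_\phi$ and compute $\ip{u_\phi}{x(\theta)}=e^{\kappa\theta}\cos(\theta-\phi)$. The support $h(u_\phi)$ is the maximum of this over past angles $\theta\le0$. Its interior critical points satisfy $\tan(\theta-\phi)=\kappa$, and the maximum is compared with the current-point value $\cos\phi$.

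\emph{Step 3: the tangency equation.} The critical direction is where the current point and an older provider, delayed by angle $\tau$ (taking $\theta=-\tau$), give equal support:
\[
\cos\phi=e^{-\kappa\tau}\cos(\tau+\phi),
\]
with the old provider at its critical point. I will solve these two conditions for $\phi$ and simplify with $\tan=\kappa$ relations, aiming to reach $e^{-\kappa\tau}=\cos\tau-\kappa\sin\tau$. The physical meaning is that the curve's tangent line at the current point passes through the old provider's contact.

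Next I evaluate the ratio. The elapsed arc length is $\int_{-\infty}^0\sqrt{1+\kappa^2}\,e^{\kappa\theta}\,d\theta=\sqrt{1+\kappa^2}/\kappa$. The guaranteed level is the support $\cos\phi$ at the minimizing direction. Expressing $\cos\phi$ through $\tau$ should give $C_{\mathrm{rel}}=\frac{1+\kappa^2}{\kappa}e^{\kappa\tau}$.

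\emph{Step 4: stationarity, the main obstacle.} I read \eqref{eq:spiral-stationarity} as the minimization over the stationary family, $\frac{d}{d\kappa}C_{\mathrm{rel}}=0$. I will implicitly differentiate \eqref{eq:spiral-tangency} to obtain $\tau'(\kappa)$. I then substitute into $\frac{d}{d\kappa}\bigl[\log(1+\kappa^2)-\log\kappa+\kappa\tau\bigr]=0$ and eliminate $e^{-\kappa\tau}$ using \eqref{eq:spiral-tangency}.

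This algebra, together with justifying the choice of $\tau\in(\pi,2\pi)$, is where I expect the work to be. The range of $\tau$ needs a check that the minimizing direction's provider is the first critical point beyond $\pi$. I would verify the resulting identity numerically at the known optimum before trusting the trigonometric simplification.
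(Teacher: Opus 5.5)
Your proposal is correct and follows the paper's route: the spiral comes from eliminating $t$, the tangency identity from equal support plus a critical point at the old provider, $C_{\mathrm{rel}}$ from the historical quotient, and \eqref{eq:spiral-stationarity} from $\frac{d}{d\kappa}\log C_{\mathrm{rel}}=0$, which does reduce to $\kappa^2\tau\cot\tau=1-\kappa\tau$. One correction to your gloss: the tangency is at the old provider, since the chord from the current point is tangent to the spiral at $P_\kappa(\tau)$ as in \cref{lem:active-chord}; the tangent line at the current point can never pass through $P_\kappa(\tau)$ for $\tau\in(\pi,2\pi)$, but your two equations already encode the correct condition, so nothing else changes.
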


Tangency is a local condition. To identify the actual support face, one must also verify that the candidate line dominates the entire past spiral and that the perpendicular foot lies on the active chord. \Cref{lem:active-chord} proves both facts for the branch $\tau\in(\pi,2\pi)$ used here. The active solution is
\begin{equation}
\kappa_*\simeq0.2124695594156479,
\qquad
\tau_*\simeq4.8588823628256375,
\qquad
Q_*=e^{2\pi\kappa_*}\simeq3.79994128,
\label{eq:spiral-numerics}
\end{equation}
and $C_{\mathrm{rel}}(\kappa_*)=\spc\simeq13.81113517946$ \cite{FinchSpiral}. These equations evaluate the best stationary logarithmic-spiral family; they are not used in the proof of computability and do not establish global optimality.

\subsection{Computational filters and local diagnostics}
\label{sec:filters}
The constraints in \cref{tab:filters} are one-sided: violating one certifies impossibility, while satisfying all of them does not certify historical realizability.

\begin{proposition}[Terminal bound at fixed scale factor]
\label{prop:terminal-bound}
Every planar homothetic cell with factor $Q>1$, length $L$, and terminal inradius $M$ satisfies
\begin{equation}
\frac{L}{M}\ge
\Lambda(Q):=2\pi-2\arcsin\left(\frac{Q-1}{2Q}\right),
\label{eq:Lambda}
\end{equation}
therefore
\begin{equation}
\cellratio_Q(C_0)\ge
\Phi(Q):=\frac{Q}{Q-1}\Lambda(Q).
\label{eq:Phi}
\end{equation}
The function $\Phi$ is strictly decreasing. The equation $\Phi(Q_1)=\spc$ has the unique solution
\begin{equation}
Q_1\simeq1.7361155691.
\label{eq:Q1}
\end{equation}
Thus every cell that might strictly beat the spiral reference value must satisfy $Q>Q_1$.
\end{proposition}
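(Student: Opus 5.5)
The proof has three parts: a geometric lower bound on $L/M$, insertion of that bound into the exact cell functional at the terminal phase, and a one-variable analysis of $\Phi$.

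\emph{Terminal bound.} By homogeneity, normalize $M=1$, so the terminal hull $K_C$ contains the unit disk $B_2$. The cell is a curve from $x$ to $Qx$ whose convex hull contains $B_2$. I will use the classical comparison for such a curve: its length is at least the length of the shortest curve from $A=x$ to $B=Qx$ whose hull contains $B_2$. That quantity equals $\Per\bigl(\conv(B_2\cup\{A,B\})\bigr)-\abs{A-B}$. It holds because the curve closed by the chord $[B,A]$ has a hull containing $\conv(B_2\cup\{A,B\})$, and perimeter is monotone under inclusion of convex sets \cite{Schneider}.

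With $r:=\norm{x}$ and $\abs{A-B}=(Q-1)r$, this reduces the bound to minimizing the explicit function
\[
g(r):=\Per\bigl(\conv(B_2\cup\{x,Qx\})\bigr)-(Q-1)r,\qquad r\ge0 .
\]
Since $x$ and $Qx$ lie on one ray, $g$ depends only on $r$, and it splits into three regimes:
\begin{itemize}
\item for $Qr\le1$, both points lie in the disk and $g=2\pi$;
\item for $r\le1<Qr$, one tangent pair from $Qx$ is present, and its contribution competes with the chord subtraction;
\item for $r>1$, both tangent pairs are present.
\end{itemize}
In the regime $r>1$ I use $\sqrt{r^2-1}+\sqrt{Q^2r^2-1}-(Q-1)r$ together with the lost arcs. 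As $r\to\infty$ this grows like $2r$, so it is not the minimizer.

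\emph{Main obstacle.} The hard step is the minimization in the middle regime. There I will differentiate $g$ in $r$ and solve the stationarity equation, expecting the critical configuration to be the one in which the segment $[x,Qx]$ meets the circle so that the saved arc is exactly $2\arcsin\bigl((Q-1)/(2Q)\bigr)$. I then have to check that this interior critical value lies below both $2\pi$ and the values at the regime boundaries, which should give $\min_r g(r)=\Lambda(Q)$. I will try first to reparametrize by the tangent angle, which should turn the derivative into a trigonometric identity. Failing that, I will fall back to a direct convexity argument for $g$ on each regime.

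\emph{From $\Lambda$ to $\Phi$.} At the terminal phase $t=L$, \eqref{eq:cell-level} gives $m_C(L)=\min_u\max\{Q^{-1}h_{K_C},h_{K_C}\}=M$. Then \cref{prop:cell-functional} yields
\[
\cellratio_Q(C_0)\ge\frac{L/(Q-1)+L}{M}=\frac{Q}{Q-1}\,\frac{L}{M}\ge\Phi(Q).
\]

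\emph{Behaviour of $\Phi$.} On $(1,\infty)$, the factor $Q/(Q-1)$ is positive and strictly decreasing. The argument $(Q-1)/(2Q)=\tfrac12-\tfrac1{2Q}$ is increasing with values in $(0,\tfrac12)$, so $\Lambda$ is positive and strictly decreasing. Hence $\Phi$ is strictly decreasing and continuous. As $Q\downarrow1$, $\Phi(Q)\to\infty$. As $Q\to\infty$, $\Phi(Q)\to2\pi-2\arcsin\tfrac12=5\pi/3<\spc$. The intermediate value theorem then gives a unique $Q_1$ with $\Phi(Q_1)=\spc$, and its numerical value follows by interval bisection.

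Finally, if $Q\le Q_1$ then $\cellratio_Q(C_0)\ge\Phi(Q)\ge\spc$, so any cell with ratio strictly below $\spc$ must have $Q>Q_1$.
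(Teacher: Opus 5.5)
Your strategy is sound, and it is a genuinely different packaging of the paper's estimate. However, your evaluation of $g$ contains errors that must be fixed.

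\emph{Comparison with the paper.} The paper works direction by direction. The projection $x_\theta=\langle u_\theta,\gamma\rangle$ runs from $r\cos\theta$ to $Qr\cos\theta$ and must visit $\pm1$, so its total variation is at least $F_Q(r\abs{\cos\theta})$. Integrating with $\int_0^{2\pi}\operatorname{Var}(x_\theta)\,d\theta=4L$ gives the bound. You instead close the curve by the chord and compare perimeters of convex hulls. By Cauchy's formula these are the same inequality: in each direction, twice the width of $\conv(B_2\cup\{x,Qx\})$ minus the projected chord is exactly $F_Q(r\abs{\cos\theta})$. The paper's version needs only the variation identity and matches the per-direction bookkeeping of its later renewal filters; yours makes the extremal geometry visible. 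Your last two steps are correct:
\begin{itemize}
\item evaluating \eqref{eq:cell-functional} at $t=L$, where $m_C(L)=M$ because $h_{K_C}\ge M>0$;
\item the monotonicity and limit argument for $\Phi$, which even gives existence of $Q_1$ via $\Phi\to5\pi/3<\spc$.
\end{itemize}

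\emph{The errors.} Since $x=Q^{-1}(Qx)+(1-Q^{-1})\cdot 0\in[0,Qx]$, one always has $\conv(B_2\cup\{x,Qx\})=\conv(B_2\cup\{Qx\})$. Consequently:
\begin{itemize}
\item There is never a second pair of tangents, and $r=1$ is not a regime boundary. Your $r>1$ formula is not the perimeter of the actual hull: the true growth is $(Q+1)r$, not $2r$. In any case, an asymptotic remark does not show $g\ge\Lambda(Q)$ on that range.
\item For $Qr\le1$ you dropped the chord: $g=2\pi-(Q-1)r$, not $2\pi$. The needed comparison is therefore $2\pi-\alpha\ge\Lambda(Q)$ with $\alpha:=(Q-1)/Q$, i.e.\ $\alpha\le2\arcsin(\alpha/2)$.
\item The claim that this quantity equals the length of the shortest admissible curve is false. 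Equality would force the chord-closed curve to be the boundary of $\conv(B_2\cup\{Qx\})$, yet the chord runs through its interior. You only need the inequality, and your argument does prove it.
\end{itemize}

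\emph{The minimization.} Once these are fixed, your main obstacle takes one line. For $R:=Qr>1$ and tangent half-angle $\beta:=\arccos(1/R)$,
\[
g=2\pi-2\beta+2\tan\beta-\alpha\sec\beta,
\qquad
\frac{dg}{d\beta}=\sec\beta\,\tan\beta\,(2\sin\beta-\alpha).
\]
So $g$ has a unique minimum at $\sin\beta=\alpha/2$. There the tangent lengths exactly cancel the chord, and $g=2\pi-2\arcsin(\alpha/2)=\Lambda(Q)$. This is exactly the paper's function $\pi+2\arcsin(1/R)+2\sqrt{R^2-1}-\alpha R$, with the same minimizer.
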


\begin{proposition}[Anisotropy cage]
\label{prop:anisotropy-cage}
Assume $\cellratio_Q(C_0)<\spc$ and consider a recut cell of length $\ell$, terminal hull $K$, inradius $M$, and support $h$. Set
\[
z(\theta):=\frac{h(u_\theta)}{M}\ge1,
\qquad
E:=2\ell-\Per(K)\ge0,
\]
and
\begin{equation}
A_{\mathrm{sp}}(Q):=\spc\frac{Q-1}{Q},
\qquad
B_{\mathrm{sp}}(Q):=2\bigl(A_{\mathrm{sp}}(Q)-\pi\bigr).
\label{eq:Asp-Bsp}
\end{equation}
Then
\begin{equation}
\int_0^{2\pi}(z(\theta)-1)\,d\theta+\frac{E}{M}
<B_{\mathrm{sp}}(Q).
\label{eq:anisotropy-budget}
\end{equation}
In particular, for every $a>0$,
\begin{equation}
\abs{\{\theta:z(\theta)\ge e^a\}}
<\frac{B_{\mathrm{sp}}(Q)}{e^a-1},
\label{eq:anisotropy-width}
\end{equation}
and
\begin{equation}
\frac{h_{\max}}{M}
<\min\left\{
A_{\mathrm{sp}}(Q)-1,
1+\sqrt{B_{\mathrm{sp}}(Q)\bigl(A_{\mathrm{sp}}(Q)-1\bigr)}
\right\}.
\label{eq:anisotropy-cap}
\end{equation}
\end{proposition}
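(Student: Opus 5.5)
The plan is to derive all three inequalities from one integral identity. That identity combines Cauchy's perimeter formula with the recut representation of the cell ratio in \cref{thm:recut}.

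\textbf{Step 1: the budget inequality \eqref{eq:anisotropy-budget}.} The recut cell has length $\ell=L(C_0^{(s)})$ and terminal inradius $M=\inrad_0(K)$. By \eqref{eq:recut-functional},
\[
\frac{\ell}{M}\le\frac{Q-1}{Q}\,\cellratio_Q(C_0)<\spc\frac{Q-1}{Q}=A_{\mathrm{sp}}(Q).
\]
Since $0\in\operatorname{int}K$, Cauchy's formula gives $\Per(K)=\int_0^{2\pi}h(u_\theta)\,d\theta$. Hence
\[
\int_0^{2\pi}(z-1)\,d\theta=\frac{\Per(K)}{M}-2\pi,
\]
and adding $E/M=2\ell/M-\Per(K)/M$ gives
\[
\int_0^{2\pi}(z-1)\,d\theta+\frac EM=\frac{2\ell}{M}-2\pi<2A_{\mathrm{sp}}-2\pi=B_{\mathrm{sp}}.
\]
The nonnegativity $E\ge0$ follows because closing a path of length $\ell$ by its chord gives a closed curve of length at most $2\ell$ whose convex hull is $K$. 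The perimeter of $K$ is at most the length of any closed curve enclosing it, so $\Per(K)\le2\ell$. Also $z\ge1$ because $h\ge M$ in every direction.

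\textbf{Step 2: the width bound \eqref{eq:anisotropy-width}.} This is Markov's inequality applied to the nonnegative function $z-1$. Since $E/M\ge0$, Step 1 gives $\int(z-1)<B_{\mathrm{sp}}$. On the set $\{z\ge e^a\}$ we have $z-1\ge e^a-1$, which yields the stated measure bound.

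\textbf{Step 3: the cap \eqref{eq:anisotropy-cap}.} Set $H:=h_{\max}/M$, attained in direction $\theta_0$.

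For the first bound, $K$ contains a point $P$ with $|P|\ge h_{\max}$, namely a point of the path. The path must also reach support at least $M$ in direction $-u_{\theta_0}$. Both are points of the same curve, so $\ell\ge h_{\max}+M$, whence $H\le\ell/M-1<A_{\mathrm{sp}}-1$.

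For the second bound, $P\in K$ gives $h(u_\theta)\ge h_{\max}\cos(\theta-\theta_0)$. Using $\cos\varphi\ge1-|\varphi|$,
\[
z(\theta)-1\ge\bigl(H-1-H|\theta-\theta_0|\bigr)_+,
\]
and this triangle integrates to $(H-1)^2/H$. Step 1 then gives $(H-1)^2<B_{\mathrm{sp}}H$. Combined with $H<A_{\mathrm{sp}}-1$, this yields $(H-1)^2<B_{\mathrm{sp}}(A_{\mathrm{sp}}-1)$, which is the second term of the minimum.

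\textbf{Expected obstacle.} The only delicate point is Step 1, namely justifying $\ell/M<A_{\mathrm{sp}}(Q)$ for a recut cell. One must check that \eqref{eq:recut-functional} applies to the particular recut, i.e.\ that $m_C(s)>0$; \cref{lem:renewing-interior} supplies this. One must also track the factor $Q/(Q-1)$ correctly. The geometric facts used, Cauchy's formula and the hull-perimeter bound $\Per(K)\le2\ell$, are standard.
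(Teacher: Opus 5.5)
Your proposal is correct and follows the paper's proof essentially step by step: the Cauchy-formula identity $2\ell/M-2\pi=\int_0^{2\pi}(z-1)\,d\theta+E/M$ combined with $\ell/M<A_{\mathrm{sp}}(Q)$ from the recut formula, Markov's inequality for the width, $\ell\ge h_{\max}+M$ for the first cap, and for the second cap a triangular peak of height $H-1$ and slope $H=R/M$, which your cosine estimate reproduces exactly as the paper's Lipschitz-in-angle bound. One small correction: positivity of $m_C(s)$ follows from the finite-ratio hypothesis via \cref{lem:phase-ratio} (or from the standing assumption $0\in\operatorname{int}K_C$), not from \cref{lem:renewing-interior}, which only concerns cells produced by \cref{thm:cell}.
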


\begin{table}[ht]
\centering
\small
\caption{Exact filters and their practical role.}
\label{tab:filters}
\begin{tabularx}{\textwidth}{@{}p{3.2cm}X X@{}}
\toprule
Result & Statement & Use \\
\midrule
Perimeter bound & Every finite-ratio planar strategy satisfies $C\ge1+\pi$. & Elementary consistency check. \\
Fixed-$Q$ bound & $\cellratio_Q(C)\ge\Phi(Q)$; in particular $Q\le Q_1$ cannot beat $\spc$. & Removes an entire fiber of scale factors. \\
Anisotropy cage & Integrated budget \eqref{eq:anisotropy-budget}, width \eqref{eq:anisotropy-width}, and caps \eqref{eq:anisotropy-cap}. & Restricts the height and angular width of support peaks. \\
Recut barrier & After recutting at a phase maximizing $\ell_s/M_s$, $M(t)\ge M(0)(1+(Q-1)t/L)$. & Pruning at fixed $Q$. \\
Directional renewal & $(1-Q^{-1})h(\theta)\le\int_0^L[\cos(\theta-\varphi(s))]_+\,ds$. & Necessary direction-by-direction test. \\
Logarithmic budget & $\int_0^L ds/m_C(s)\ge\pi\log Q$. & Global bound independent of contact order. \\
Two-contact chamber & Exact formulas for rotation of the active face and evolution of the bottleneck. & Local diagnostic for refinement schemes. \\

\bottomrule
\end{tabularx}
\end{table}

Proofs are given in Appendix~\ref{app:filters}. No local numerical constant without an autonomous certificate is used in the main chain.

\section{Conclusion}
The separation into two parts makes the role of every result explicit. The main chain turns an infinite history into a finite decision problem and proves computability of $C_2^*$. The toolbox shortens memory, gives an exact dynamics, constructs exclusion criteria, and relates relative equilibria to logarithmic spirals.

The next challenge is not to prove that the value exists as a computable object, but to make certificates small enough to yield new numerical bounds and to determine whether the spiral regime actually realizes that value.

\appendix
\section{Effective covering of normalized states}
\label{app:covering}
Under the bounds \eqref{eq:state-bounds}, fix $\delta>0$. Use
\begin{itemize}
\item a rational partition of $[1,U]$ into intervals of diameter $\delta/6$;
\item a rational net of the ball $B(0,U)\subset\R^D$ with covering radius $\delta/12$;
\item a rational net $\mathcal N\subset\Sph^{D-1}$ of mesh $\delta/(24U)$;
\item quantization of the values $H(v)$, $v\in\mathcal N$, with step $\delta/12$.
\end{itemize}
If two states have the same code, their length and position components differ by less than $\delta/6$. For the profiles, choose for each $u\in\Sph^{D-1}$ a point $v\in\mathcal N$ with $\norm{u-v}\le\delta/(24U)$. The Lipschitz estimate in \eqref{eq:state-bounds} gives
\begin{align*}
\abs{H(u)-H'(u)}
&\le\abs{H(u)-H(v)}+\abs{H(v)-H'(v)}+\abs{H'(v)-H'(u)}\\
&\le U\frac{\delta}{24U}+\frac{\delta}{12}+U\frac{\delta}{24U}
=\frac{\delta}{6}.
\end{align*}
Thus $D_\Sigma(\Sigma,\Sigma')<\delta/2<\delta$. If $K_\ell$ and $K_x$ are the numbers of cells in the first two grids, $M_S:=\abs{\mathcal N}$, and $J$ is the number of quantization levels, one may take
\[
M(D,U,\delta)\le K_\ell K_x J^{M_S}.
\]
No optimization of this bound is needed; its only role is to provide an explicit integer for the pigeonhole argument in \cref{thm:effective-return}.

\section{Proofs for the Shoreline toolbox}
\label{app:toolbox-proofs}
\subsection{Proof of homologous recutting}
\label{app:recut-proof}
\begin{proof}[Proof of \cref{thm:recut}]
Write $A_s$ for the support of $C_0[0,s]$, $B_s$ for that of $C_0[s,L]$, and $H=A_s\vmax B_s$ for the terminal support. The recut cell has support $B_s\vmax QA_s$. On the other hand,
\[
Q\max\{Q^{-1}H,A_s\}=H\vmax QA_s.
\]
If $A_s(u)\ge0$, the two maxima agree immediately. If $A_s(u)<0$, positivity of the terminal inradius implies $H(u)>0$, hence $B_s(u)=H(u)>QA_s(u)$. This proves \eqref{eq:recut-inradius}; formula \eqref{eq:recut-functional} follows from \eqref{eq:cell-functional} and \eqref{eq:recut-cell}.
\end{proof}

\subsection{Proof of sliding memory}
\label{app:sliding-proof}
\begin{proof}[Proof of \cref{thm:sliding}]
The competitive bound gives $K_t\supset(t/C)B_2$. Every point visited before $t/C$ has norm at most $t/C$ and therefore belongs to this disk. In every direction, support of level $t/C$ is attained within the window $[t/C,t]$; hence the recent convex hull contains the disk and absorbs all earlier history.
\end{proof}

\subsection{Proof of the Bellman step and the predecessor interval}
\label{app:bellman-proof}
\begin{proof}[Proof of \cref{prop:normalized-state}]
Applying \cref{prop:support-certificate} to $K_t/t$ gives
\[
\CR(\gamma)\le C
\Longleftrightarrow
\inrad_0(H_\tau)\ge d_C
\Longleftrightarrow
\min h_\tau\ge d_C.
\]
Since $\norm{\gamma(s)}\le s\le t$, one has $K_t\subset tB_2$, hence $H_\tau\subset B_2$; the other inclusion follows from the ratio bound, and $p(\tau)\in H_\tau$ is immediate. Finally,
\[
p(\tau)=e^{-\tau}\gamma(e^\tau),
\]
so almost everywhere
\[
p'(\tau)=-p(\tau)+\gamma'(e^\tau)=v(\tau)-p(\tau).
\]
\end{proof}

\begin{proof}[Proof of \cref{thm:bellman-step}]
Let the initial physical time be $t_0$ and the final one $t_1=e^\Delta t_0$. Dividing the old hull by $t_1$ gives $aH$, while the new segment has normalized length $(t_1-t_0)/t_1=\delta$. The support function of the convex hull of a union is the maximum of the two support functions.
\end{proof}

In the notation of \cref{thm:bellman-step}, put
\[
A:=aH,
\qquad
C_c:=\conv c([0,\delta]),
\qquad
H^+:=\conv(A\cup C_c).
\]
Directions for which $h_{C_c}<h_{H^+}$ force $h_A=h_{H^+}$; the remaining directions impose only $h_A\le h_{H^+}$. At the threshold $d_C$, define
\[
E_{\mathrm{old}}:=\Ext(H^+)\setminus C_c,
\qquad
A_{\min}:=\conv\bigl(ad_CB_2\cup\{c(0)\}\cup E_{\mathrm{old}}\bigr),
\qquad
A_{\max}:=H^+\cap aB_2.
\]
Every geometrically viable predecessor satisfies
\[
A_{\min}\subset A\subset A_{\max}.
\]
Conversely, every convex body in this interval reproduces the terminal hull at the geometric level; full viability additionally requires an admissible history realizing that predecessor.

\subsection{Proof of the directional deadline}
\label{app:deadline-proof}
\begin{proof}[Proof of \eqref{eq:deadline}]
Fix $t>0$ and a direction $u$, and put $Z:=Z_t(u)=h_{K_t}(u)$. For $\eps>0$, the line
\[
H_\eps:=\{x:\ip{u}{x}=Z+\eps\}
\]
has not yet been met at time $t$. Starting from $\gamma(t)$, every unit-speed continuation must travel at least
\[
Z+\eps-\ip{u}{\gamma(t)}
\]
before meeting it. Therefore
\[
T_\gamma(Z+\eps)
\ge t+Z+\eps-\ip{u}{\gamma(t)}.
\]
The ratio bound gives $T_\gamma(Z+\eps)\le C(Z+\eps)$, hence
\[
\ip{u}{\gamma(t)}\ge t-(C-1)(Z+\eps).
\]
Letting $\eps\downarrow0$ and dividing by $t$ gives \eqref{eq:deadline}. If $h(u)=1/C$, membership $p\in H$ also gives $\ip{u}{p}\le h(u)=1/C$; combined with \eqref{eq:deadline}, this yields \eqref{eq:floor-contact}.
\end{proof}

\subsection{Support budget and renewal}
\label{app:support-budget}
\subsubsection{Support budget}
Let a planar history have length $L$, convex hull $K$, and inradius $m>0$. Set
\[
\Sigma:=\int_0^{2\pi}\bigl(h_K(u_\theta)-m\bigr)\,d\theta.
\]
Cauchy's formula and total variation of projections give
\[
\Per(K)=2\pi m+\Sigma\le2L.
\]
With $E:=2L-\Per(K)\ge0$,
\begin{equation}
2L=2\pi m+\Sigma+E.
\label{eq:support-budget}
\end{equation}
For $z(u):=h_K(u)/m\ge1$,
\[
\frac{\Sigma}{m}
=\int\log z(u)\,du
 +\int\bigl(z(u)-1-\log z(u)\bigr)\,du.
\]
The first term is compatible with multiplicative transport of stored support; the second is a nonnegative dissipation of anisotropy.

\subsubsection{Price of fresh support and clock credit}
\begin{proof}[Proof of \cref{prop:fresh-support-price}]
Cauchy's formula gives
\[
\int_0^{2\pi}h_T(u_\theta)\,d\theta=\Per(T),
\qquad
\int_0^{2\pi}h_S(u_\theta)\,d\theta=2\norm{B-A}.
\]
The perimeter of the possibly degenerate triangle $T$ is
\[
\norm{X-A}+\norm{B-X}+\norm{B-A}.
\]
The difference of the two integrals is therefore $\eps_X$. For masking, $x\mapsto\max\{a,x\}$ is monotone and $1$-Lipschitz. Since $h_T\ge h_S$, pointwise
\[
0\le\max\{h_K,h_T\}-\max\{h_K,h_S\}\le h_T-h_S.
\]
Integration gives \eqref{eq:masked-price}.
\end{proof}

For the clock credit, write $Z_1$ and $\widetilde Z_1$ for supports at the splice and assume $\widetilde Z_1\ge Z_1-\eps/C$. At every geometric point of the common continuation, the repaired path arrives $\eps$ earlier, so the competitive barrier has dropped by $\eps/C$. Max-memory cannot amplify the initial loss; the continuation remains viable.

\subsubsection{Renewal between completion times}
Let $d_i$ be genuine completion levels, $T_i$ their times, and set
\[
R_i:=\frac{T_i}{d_i},
\qquad
\xi_i:=\log\frac{d_{i+1}}{d_i},
\qquad
c_i:=\frac{T_{i+1}-T_i}{d_{i+1}}.
\]
The identity
\begin{equation}
R_{i+1}=e^{-\xi_i}R_i+c_i
\label{eq:completion-cocycle}
\end{equation}
expresses historical cost as a debt that is discounted and then replenished. For a reference level $C_0$ and $E_i=R_i-C_0$,
\[
E_{i+1}=e^{-\xi_i}E_i+
\bigl[c_i-(1-e^{-\xi_i})C_0\bigr].
\]
Pre-service may move this debt from one phase to the next, but it cannot remove it from the dynamics.

\subsection{Proof of the relative equilibrium formulas}
\begin{proof}[Proof of \cref{thm:relative-equilibrium}]
By definition, $\gamma(t)=tp(\log t)$. Under \eqref{eq:relative-equilibrium}, the radius is proportional to $t$ and the angle equals $\omega\log t+\theta_0$; eliminating $t$ gives \eqref{eq:log-spiral}.

In the planar stationary family, the old critical provider is delayed by an angle $\tau$. Equality of support and tangency give \eqref{eq:spiral-tangency}; the historical quotient gives \eqref{eq:spiral-ratio}; stationarity in $\kappa$ gives \eqref{eq:spiral-stationarity}. These calculations are standard in the analysis of the logarithmic-spiral family \cite{FinchSpiral}.
\end{proof}

\begin{lemma}[Global validity of the active chord]
\label{lem:active-chord}
Let $\kappa>0$ and $\tau\in(\pi,2\pi)$ satisfy
\[
r:=e^{-\kappa\tau}=\cos\tau-\kappa\sin\tau>0.
\]
Set $P_\kappa(s):=e^{-\kappa s}R_{-s}e_1$ for $s\ge0$, $A=P_\kappa(0)$, and $B=P_\kappa(\tau)$. Then the line through $A$ and $B$ is a support line of $\conv P_\kappa([0,\infty))$, and the perpendicular foot from the origin lies on the segment $[A,B]$.
\end{lemma}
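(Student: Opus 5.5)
The plan is to reduce both assertions to one-variable statements about the damped cosine obtained by projecting the spiral on the unit normal of the chord. Let $n=(\cos\psi,\sin\psi)$ be the unit normal to the line $AB$, oriented away from the origin, and let $d:=\ip{n}{A}$ be the distance from the origin to the line. Since $\ip{n}{R_{-s}e_1}=\cos(s+\psi)$, the projection is
\[
f(s):=\ip{n}{P_\kappa(s)}=e^{-\kappa s}\cos(s+\psi),
\]
and by construction $f(0)=f(\tau)=d$. The support-line claim is then exactly the inequality $f(s)\le d$ for all $s\ge0$. I would first check that $d>0$, i.e.\ that the origin lies strictly on the inner side of the chord; this should follow from $r>0$ together with $\tau\in(\pi,2\pi)$, which place $B$ strictly below the axis.

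\emph{Step 1: the hypothesis is tangency at $B$.} I will show that, given $f(0)=f(\tau)$, the relation $r=\cos\tau-\kappa\sin\tau$ is equivalent to $f'(\tau)=0$. Equivalently, the direction $B-A$ is parallel to
\[
P_\kappa'(\tau)=-\kappa P_\kappa(\tau)+e^{-\kappa\tau}R_{-\tau}(0,-1).
\]
The check is to write $B-A=(r\cos\tau-1,\,-r\sin\tau)$ and set the $2\times2$ determinant $\det\bigl(B-A,\,P_\kappa'(\tau)\bigr)$ to zero. After dividing by $r$ this should collapse to $1=r(\cos\tau-\kappa\sin\tau)^{-1}\cdot(\ldots)$, and then to the stated identity. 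This is routine trigonometry, but it must be done with care.

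\emph{Step 2: $\tau$ is the global maximiser on $[0,\infty)$.} Write
\[
f'(s)=-e^{-\kappa s}\bigl[\kappa\cos(s+\psi)+\sin(s+\psi)\bigr].
\]
The critical points are the solutions of $\tan(s+\psi)=-\kappa$. They alternate between strict local maxima and minima, consecutive maxima are spaced by exactly $2\pi$, and the value at each successive maximum is multiplied by $e^{-2\pi\kappa}<1$. I must show that $\tau$ is a local maximum and not a minimum. This should follow from the sign of $\cos(\tau+\psi)$: we have $f(\tau)=d>0$, and at a critical point with $f>0$ the second derivative is $f''=-(1+\kappa^2)e^{-\kappa s}\cos(s+\psi)<0$. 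Since $\tau<2\pi$, the previous maximum $\tau-2\pi$ is negative. Hence on $[0,\infty)$ the only candidates for the supremum are the endpoint $s=0$, the critical point $\tau$, and the later maxima $\tau+2\pi k$, whose values are smaller than $f(\tau)$. All of these are at most $d$, so $f\le d$ on $[0,\infty)$, which gives the support-line property.

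\emph{Step 3: the foot of the perpendicular.} The foot $F=dn$ lies on $[A,B]$ if and only if the angles of the triangle $OAB$ at $A$ and at $B$ are non-obtuse, that is
\[
\ip{A}{B-A}\le0 \qquad\text{and}\qquad \ip{B}{A-B}\le0.
\]
With $A=e_1$ and $B=r(\cos\tau,-\sin\tau)$, these reduce to $r\cos\tau\le1$ and $\cos\tau\le r$.
\begin{itemize}
\item The second holds because $r=\cos\tau-\kappa\sin\tau$, with $\kappa>0$ and $\sin\tau<0$ on $(\pi,2\pi)$, so $r>\cos\tau$.
\item The first holds because $0<r=e^{-\kappa\tau}<1$ and $\cos\tau\le1$.
\end{itemize}

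I expect the main obstacle to be Step 1 together with the sign bookkeeping in Step 2. Concretely, one must confirm that the stated equation encodes tangency at $B$ rather than a different condition, and that the orientation of $n$ makes $\tau$ a maximum of $f$ rather than a minimum. For the latter, my first attempt would be to exploit $d>0$ via the formula for $f''$ at a critical point, as sketched above.
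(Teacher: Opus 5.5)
Your proposal is correct and follows essentially the same route as the paper: you read the hypothesis as tangency of the chord at $B$ via $\det(B-A,P_\kappa'(\tau))=r(\cos\tau-\kappa\sin\tau-r)=0$, you show the critical point $\tau$ of the damped sinusoid $f$ is a maximum because $f''(\tau)=-(1+\kappa^2)f(\tau)<0$, later maxima decay by $e^{-2\pi\kappa}$ while $\tau<2\pi$ leaves no maximum in $(0,\tau)$, and the foot lies on $[A,B]$ because $1-r\cos\tau>0$ and $r-\cos\tau=-\kappa\sin\tau>0$. One harmless slip: for $\tau\in(\pi,2\pi)$ the point $B=r(\cos\tau,-\sin\tau)$ lies strictly above the axis, not below, but $d>0$ only needs $B$ off the axis, so that the origin is not on the line $AB$.
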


\begin{proof}
The equation $r=\cos\tau-\kappa\sin\tau$ is equivalent to
\[
\det(A-B,P_\kappa'(\tau))=0,
\]
so chord $AB$ is tangent to the spiral at $B$. Let $n$ be a unit normal to this line, oriented so that
\[
\mu:=\ip{n}{A}=\ip{n}{B}>0.
\]
The function $f(s):=\ip{n}{P_\kappa(s)}$ is a damped sinusoid $e^{-\kappa s}(a\cos s+b\sin s)$. Tangency gives $f'(\tau)=0$, and the differential identity
\[
f''+2\kappa f'+(1+\kappa^2)f=0
\]
gives $f''(\tau)=-(1+\kappa^2)\mu<0$, so $\tau$ is a local maximum. The zeros of $f'$ are spaced by $\pi$; since $\tau\in(\pi,2\pi)$, the only critical point in $(0,\tau)$ is $\tau-\pi$, which is a minimum. As $f(0)=f(\tau)=\mu$, one has $f(s)\le\mu$ on $[0,\tau]$. For $s>\tau$, subsequent positive maxima occur at $\tau+2m\pi$ and equal $e^{-2m\pi\kappa}\mu<\mu$. Thus $f(s)\le\mu$ for every $s\ge0$, proving the support property.

With $A=(1,0)$ and $B=r(\cos\tau,-\sin\tau)$, the perpendicular foot is $A+\alpha(B-A)$ with
\[
\alpha=\frac{1-r\cos\tau}{1+r^2-2r\cos\tau}.
\]
Clearly $\alpha>0$, and
\[
(1+r^2-2r\cos\tau)-(1-r\cos\tau)
=r(r-\cos\tau)
=-\kappa r\sin\tau>0
\]
because $\tau\in(\pi,2\pi)$. Hence $0<\alpha<1$ and the foot lies on $[A,B]$.
\end{proof}

\section{Proofs of the certified filters}
\label{app:filters}
\subsection{Deadlines and the perimeter bound}
The deadline inequality \eqref{eq:deadline} was proved in Appendix~\ref{app:deadline-proof}. It yields an elementary universal bound.

\begin{proposition}
\label{prop:perimeter-bound}
Every finite-ratio planar strategy satisfies $C\ge1+\pi$.
\end{proposition}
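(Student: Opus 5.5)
The plan is to combine the deadline inequality \eqref{eq:deadline} with Cauchy's perimeter formula, working in the normalized Bellman state of \cref{prop:normalized-state}. Fix a strategy with $\CR(\gamma)\le C<\infty$, a time $t>0$, and the normalized hull $H=K_t/t$ with support $h$ and current point $p=\gamma(t)/t\in H$. I will also use the elementary facts that $H\subset B_2$, that $\min h\ge 1/C$, and that $H$ is the convex hull of a curve of normalized length $1$ starting at the origin. The latter gives $\Per(H)\le 2$.

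\textbf{Step 1: average the deadline over directions.} Integrating \eqref{eq:deadline} over $u_\theta$ gives
\[
\int_0^{2\pi}\ip{u_\theta}{p}\,d\theta\ \ge\ 2\pi-(C-1)\int_0^{2\pi}h(u_\theta)\,d\theta .
\]
The left side is zero, since $\int_0^{2\pi} u_\theta\,d\theta=0$. Cauchy's formula turns the right-hand integral into $\Per(H)$. Hence
\[
(C-1)\Per(H)\ge2\pi .
\]

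\textbf{Step 2: bound the perimeter.} The closed curve obtained by traversing $\gamma([0,t])$ and returning along the same path has length $2t$, and its hull is $K_t$. So $\Per(K_t)\le 2t$, which is exactly the support-budget inequality $\Per(K)\le 2L$ of Appendix~\ref{app:support-budget}. After normalizing, $\Per(H)\le2$.

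Substituting into Step 1 gives $2(C-1)\ge 2\pi$, that is, $C\ge 1+\pi$.

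\textbf{Main obstacle.} I expect the only delicate point to be the validity of \eqref{eq:deadline} in every direction, including directions where $h(u)$ is small. Its proof (Appendix~\ref{app:deadline-proof}) holds for every $u$ and every $t$ once $\CR(\gamma)\le C$. No condition on $p$ is needed beyond $p\in H$, and even that is not used here. Measurability in $\theta$ is also automatic, because $h$ is continuous. So the argument should close directly, and it does not require $t$ to be a special phase.
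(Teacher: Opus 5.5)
Your proof is correct and matches the paper's: both combine the deadline inequality \eqref{eq:deadline} with the bound $\Per(H)\le 2$ for the hull of a curve of normalized length $1$. The only difference is cosmetic: you integrate \eqref{eq:deadline} over directions and apply Cauchy's formula directly, whereas the paper reads $\frac{1-\ip{u}{p}}{C-1}$ as the support function of a disk of radius $1/(C-1)$, infers that this disk lies in the hull, and uses monotonicity of perimeter, which amounts to the same computation.
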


\begin{proof}
In the normalized state, \eqref{eq:deadline} gives
\[
h(u)\ge\frac{1-\ip{u}{p}}{C-1}.
\]
The right-hand side is the support function of a disk of radius $1/(C-1)$ translated by $-p/(C-1)$. The normalized convex hull therefore contains this disk and has perimeter at least $2\pi/(C-1)$. The convex hull of a curve of normalized length $1$ has perimeter at most $2$, so $C\ge1+\pi$.
\end{proof}

\subsection{Recutting and renewal at fixed scale factor}
For a cell $C_0:x\to Qx$, \cref{thm:recut} gives, at phase $s$,
\[
\ell_s=L+(Q-1)s,
\qquad
M_s=Qm_C(s),
\qquad
\cellratio_Q(C_0)=\frac{Q}{Q-1}\sup_s\frac{\ell_s}{M_s}.
\]
Recut at a phase maximizing $\ell_s/M_s$ and call that phase $0$. For every later phase $t$,
\[
\frac{L+(Q-1)t}{M(t)}\le\frac{L}{M(0)},
\]
hence
\begin{equation}
M(t)\ge M(0)\left(1+\frac{Q-1}{L}t\right).
\label{eq:recut-barrier}
\end{equation}

\subsubsection{Terminal bound from variation of all projections}
Normalize $M=1$ and, after rotation, write the endpoints as $A=re_1$ and $B=Qre_1$. For $u_\theta=(\cos\theta,\sin\theta)$, the scalar projection $x_\theta=\ip{u_\theta}{\gamma}$ must visit levels $-1$ and $+1$, because the terminal hull contains the unit disk. With $z=r\abs{\cos\theta}$, its total variation is therefore at least
\begin{equation}
F_Q(z):=
\begin{cases}
4-(Q-1)z,&0\le z\le Q^{-1},\\
2+(Q+1)z,&z\ge Q^{-1}.
\end{cases}
\label{eq:FQ}
\end{equation}
On the other hand,
\[
\int_0^{2\pi}\operatorname{Var}(x_\theta)\,d\theta=4L.
\]
By symmetry,
\[
L\ge\int_0^{\pi/2}F_Q(r\cos\theta)\,d\theta.
\]
Set $R=Qr$ and $\alpha=(Q-1)/Q$. For $R>1$, the right-hand side equals
\[
\pi+2\arcsin\frac1R+2\sqrt{R^2-1}-\alpha R.
\]
Its derivative is $2\sqrt{R^2-1}/R-\alpha$; the unique minimum occurs at
\[
R=\frac{1}{\sqrt{1-\alpha^2/4}}
\]
and equals $2\pi-2\arcsin(\alpha/2)$. The same value dominates the regime $R\le1$. Restoring scale $M$ gives \eqref{eq:Lambda}. Recutting at the initial phase gives
\[
\cellratio_Q(C_0)\ge\frac{Q}{Q-1}\frac{L}{M},
\]
which proves \eqref{eq:Phi}. In the variable $\alpha$, the function
\[
\Phi(\alpha)=\frac{2\pi-2\arcsin(\alpha/2)}{\alpha}
\]
is strictly decreasing, giving uniqueness of $Q_1$.

\subsubsection{Anisotropy cage}
For every recut, Cauchy's formula and \eqref{eq:support-budget} give
\begin{equation}
\frac{\ell}{M}
=\pi+\frac12\left(
\int_0^{2\pi}(z(\theta)-1)\,d\theta+\frac{E}{M}
\right).
\label{eq:anisotropy-identity}
\end{equation}
If $\cellratio_Q(C_0)<\spc$, the recut formula \eqref{eq:recut-functional} imposes $\ell/M<A_{\mathrm{sp}}(Q)$ at every phase, proving \eqref{eq:anisotropy-budget}. Markov's inequality applied to $z-1$ gives \eqref{eq:anisotropy-width}.

Let $R:=\max_{x\in K}\norm{x}$. A curve whose hull contains the disk $MB_2$ must join a point of radius $R$ to a point at distance at least $R+M$ in the opposite direction; hence $\ell\ge R+M$, and $R/M<A_{\mathrm{sp}}(Q)-1$. Since the support function is $R$-Lipschitz in angle, a peak of height $e=z_{\max}-1$ has area at least $e^2/(R/M)$. The cage gives
\[
e^2<B_{\mathrm{sp}}(Q)\frac{R}{M}
<B_{\mathrm{sp}}(Q)\bigl(A_{\mathrm{sp}}(Q)-1\bigr),
\]
which proves the second cap in \eqref{eq:anisotropy-cap}; the first follows from $h_{\max}\le R$.

Let $h(\theta)$ be terminal support and $\varphi(s)$ the direction of unit velocity. The preceding copy initially supplies $Q^{-1}h(\theta)$, so the current cell must renew
\[
\Delta h(\theta)=\left(1-\frac1Q\right)h(\theta).
\]
Support in direction $\theta$ cannot grow faster than $[\cos(\theta-\varphi(s))]_+$, hence
\begin{equation}
\left(1-\frac1Q\right)h(\theta)
\le\int_0^L[\cos(\theta-\varphi(s))]_+\,ds.
\label{eq:directional-renewal}
\end{equation}
For the logarithmic budget, let $H_s(\theta)$ be historical support at phase $s$ and set
\[
r_s(\theta):=\log\frac{H_s(\theta)}{Q^{-1}h(\theta)}.
\]
Then $r_0=0$, $r_L=\log Q$, and because $H_s(\theta)\ge m_C(s)$,
\[
\partial_s r_s(\theta)
\le\frac{[\cos(\theta-\varphi(s))]_+}{m_C(s)}.
\]
Integrating over $\theta$ and using
\[
\int_0^{2\pi}[\cos(\theta-\varphi)]_+\,d\theta=2
\]
gives
\begin{equation}
\int_0^L\frac{ds}{m_C(s)}\ge\pi\log Q.
\label{eq:log-budget}
\end{equation}

\subsection{A regular two-contact chamber}
Assume the bottleneck face is carried by the current point $p$ and an old provider $y$. Let $n$ be its unit normal, $t=Jn$ its tangent, and write
\[
p=mn+s_-t,
\qquad
y=mn+s_+t,
\qquad
\ell=s_+-s_->0.
\]
For a unit control $v$, as long as this chamber remains active, the angular velocity $\Omega$ of the normal and the inradius derivative satisfy
\begin{equation}
\Omega=\frac{n\cdot v}{\ell},
\qquad
\dot m=-m+\frac{s_+}{\ell}(n\cdot v).
\label{eq:two-contact-m}
\end{equation}
If the historical branch has radius of curvature $\rho_+$ at the old contact, then
\begin{equation}
\dot\ell=-\ell-t\cdot v+\frac{\rho_+}{\ell}(n\cdot v).
\label{eq:two-contact-length}
\end{equation}
Indeed, in the normalized frame $p'=v-p$ and $n'=\Omega t$. Differentiating $n\cdot p=m$ gives
\[
\dot m=n\cdot v-m+\Omega s_-.
\]
The same differentiation at the old contact equates the normal velocities of the two contacts; their difference yields $\Omega=(n\cdot v)/\ell$, and then \eqref{eq:two-contact-m}. Differentiating $s_+-s_-$, with the tangential velocity of the old contact on a branch of curvature radius $\rho_+$, gives \eqref{eq:two-contact-length}.

\section{Explicit semialgebraic encoding}
\label{app:semialgebraic}
For fixed $n$, use variables
\[
Q,\lambda,
\quad
p_{i,1},p_{i,2}\ (0\le i\le n),
\quad
d_i\ (1\le i\le n),
\]
with
\[
Q\lambda=1,
\qquad
Q>1,
\qquad
p_n=Qp_0,
\qquad
d_i>0,
\qquad
d_i^2=(p_{i,1}-p_{i-1,1})^2+(p_{i,2}-p_{i-1,2})^2.
\]
For each edge $e$, universally quantify $\alpha\in[0,1]$ and $u=(u_1,u_2)$ with $u_1^2+u_2^2=1$. Put
\[
z_{e,\alpha}=p_{e-1}+\alpha(p_e-p_{e-1})
\]
and
\[
V_{e,\alpha}:=
\{\lambda p_j:0\le j\le n\}
\cup
\{p_0,\ldots,p_{e-1},z_{e,\alpha}\}.
\]
After multiplication by $1-\lambda>0$, the ratio constraint is
\begin{align*}
\forall e,\alpha,u:
&\quad
\bigl(0\le\alpha\le1\wedge\norm{u}^2=1\bigr)\\
&\Longrightarrow
\bigvee_{v\in V_{e,\alpha}}
 c(1-\lambda)u\cdot v
\ge
\lambda\sum_{i=1}^n d_i
+(1-\lambda)\left(\sum_{i<e}d_i+\alpha d_e\right).
\end{align*}
The gauge $\inrad_0(P)=1$ is the conjunction of
\[
\forall u:\norm{u}^2=1
\Longrightarrow
\bigvee_{0\le j\le n}u\cdot p_j\ge1
\]
and
\[
\exists u:\norm{u}^2=1,
\qquad
u\cdot p_j\le1\quad(0\le j\le n).
\]
All expressions are polynomial. Thus $\mathrm{FEAS}_n(c)$ is a first-order formula over the reals, and
\[
\mathrm{FEAS}_{\le N}(c)=\bigvee_{n=1}^N\mathrm{FEAS}_n(c)
\]
is a finite decidable disjunction.

\end{document}